\documentclass[hidelinks,11pt]{article}

\makeatletter
\def\input@path{
  {tables/}
}
\makeatother

\usepackage{geometry}
\usepackage[onehalfspacing]{setspace}

\usepackage[table]{xcolor}

\usepackage[hyperfootnotes=false]{hyperref}
\hypersetup{
    colorlinks=false,
    linktoc=all
}

\usepackage[english]{babel}
\addto\extrasenglish{

}
\usepackage{csquotes}

\usepackage[backend=biber,style=apa,autocite=inline,hyperref=false]{biblatex}
\DeclareLanguageMapping{english}{english-apa}
\usepackage{microtype}
\usepackage{amsmath}
\usepackage{mathtools}
\usepackage{amssymb}
\usepackage{amsthm}
\usepackage{gensymb}
\usepackage{soul}

\newtheorem{proposition}{Proposition}
\newtheorem{corollary}[proposition]{Corollary}
\newtheorem{remark}[proposition]{Remark}
\newtheorem{assumption}{Assumption}

\DeclareMathOperator{\Var}{Var}

\usepackage{graphicx}
\graphicspath{
  {plots/}
}
\usepackage{float}

\usepackage{tikz}
\usetikzlibrary{arrows.meta, positioning}

\usepackage{booktabs}
\usepackage{multirow}
\usepackage{makecell}

\usepackage{threeparttable}
\usepackage{adjustbox}
\usepackage[section]{placeins}

\usepackage[nolist]{acronym}

\usepackage{enumerate}
\newcommand{\acknowledgement}{This work was funded by the Centre for Innovative Data in Economics Research and the Social Sciences and Humanities Research Council.}

\newcommand{\appref}[1]{Appendix~\ref{#1}}

\title{Modeling Shipping Emissions: Machine Learning, Engineering, and Policy Counterfactuals}

\author{
  Hiroyuki Kasahara, Allen Peters, Oliver Xu
  \thanks{Kasahara: University of British Columbia, Vancouver, Canada, \href{mailto:hiroyuki.kasahara@ubc.ca}{hiroyuki.kasahara@ubc.ca}. Peters: University of Connecticut, Storrs, USA, \href{mailto:allen.peters@uconn.edu}{allen.peters@uconn.edu}. Xu: Harvard University, Cambridge, USA, \href{mailto:jxu@hks.harvard.edu}{jxu@hks.harvard.edu}. \acknowledgement}
}
\date{September 30, 2026}

\begin{document}
\begin{titlepage}
\maketitle
\begin{abstract}
Machine learning predicts outcomes well, but predictive accuracy does not ensure reliable counterfactual responses. We examine how to combine machine learning and theory for measurement and counterfactual analysis, using maritime $\text{CO}_2$ emissions where physics provides a benchmark speed response. Matching hourly tracking data for dry bulk and container ships to annual fuel consumption reported under EU regulations, we compare engineering calculations, structural regressions, hybrid models, and machine learning. Out of sample, all estimated models predict within a few percent of reported totals, outperforming standard engineering calculations. Yet pure machine learning and unrestricted structural regression imply attenuated speed responses. Hybrids preserve the structural component's speed response by excluding speed-related inputs from the machine learning component. A cost-benefit analysis of speed reductions illustrates the policy stakes: an attenuated speed response can flip the sign of net benefits. Accurate aggregate predictions therefore cannot substitute for scrutiny of the restrictions determining counterfactual responses.

\end{abstract}

\acrodef{UBC}{University of British Columbia}
\acrodef{IO}{industrial organization}
\acrodef{FC}{fuel consumption}
\acrodef{R2}[R$^2$]{coefficient of determination}
\acrodef{t}{tonnes}
\acrodefplural{GHG}[GHG]{greenhouse gases}

\begin{acronym}
    \acro{AE}[AE]{auxiliary engine}
    \acro{AIS}[AIS]{Automatic Identification System}
    \acro{BO}[BO]{boiler}
    \acro{CII}[CII]{Carbon Intensity Indicator}
    \acro{CO2}[CO$_2$]{carbon dioxide}
    \acro{DWT}[DWT]{deadweight tonnage}
    \acro{EEDI}[EEDI]{Energy Efficiency Design Index}
    \acro{EEXI}[EEXI]{Energy Efficiency Existing Ship
    Index}
    \acro{EU}[EU]{European Union}
    \acro{GDP}[GDP]{gross domestic product}
    \acro{GHG}[GHG]{greenhouse gas}
    \acro{GT}[GT]{gross tonnage}
    \acro{HFO}[HFO]{heavy fuel oil}
    \acro{iid}[iid]{independent and identically distributed}
    \acro{IMO}[IMO]{International Maritime Organization}
    \acro{IPCC}[IPCC]{Intergovernmental Panel on Climate Change}
    \acro{IV}[IV]{instrumental variables}
    \acro{MAE}[MAE]{mean absolute error}
    \acro{MAPE}[MAPE]{mean absolute percentage error}
    \acro{ME}[ME]{main engine}
    \acro{ML}[ML]{machine learning}
    \acro{MMSI}[MMSI]{Maritime Mobile Service Identity}
    \acro{MRV}[MRV]{Monitoring, Reporting, and Verification}
    \acro{MSE}[MSE]{mean squared error}
    \acro{nm}[n miles]{nautical miles}
    \acro{NT}[NT]{net tonnage}
    \acro{OECD}[OECD]{Organisation for Economic Co-operation and Development}
    \acro{OLS}[OLS]{ordinary least squares}
    \acro{RMSE}[RMSE]{root-mean-squared error}
    \acro{RTE}[RTE]{relative total error}
    \acro{SCC}[SCC]{social cost of carbon}
    \acro{SOG}[SOG]{speed over ground}
    \acro{SFC}[SFC]{specific fuel consumption}
    \acro{STW}[STW]{speed through water}
    \acro{UNCTAD}[UNCTAD]{United Nations Conference on Trade and Development}
    \acro{USD}[USD]{United States dollars}
    \acro{WFR}[WFR]{World Fleet Register}
    \acrodef{R2}[R$^2$]{coefficient of determination}
\end{acronym}

\thispagestyle{empty}
\end{titlepage}

\section{Introduction}
\label{sec:MLintroduction}

Environmental and energy economists increasingly use \ac{ML} to construct counterfactual predictions in policy evaluations \parencite{burlig2020machine,christensen2023performance,prest2023rcts,abrell2022effective}.
However, predictive accuracy does not establish that a model captures how an outcome responds to a policy-relevant input \parencite{mullainathan2017machine}.
\ac{ML} predicts levels well by flexibly accommodating determinants that theory leaves unspecified, whereas structure incorporates prior knowledge to better discipline the response to a policy change.
We examine how to combine the two approaches in maritime shipping, where evaluating the welfare effects of regulating sailing speed requires estimates of both the level of fuel consumption and its response to speed.

Both levels and responses matter because policy calculations use them together.
\textcite{kleinberg2015prediction} decompose the effect of a policy decision on its payoff into a prediction term and a causal term, and make the point that choosing a policy generally requires knowing both.
Welfare analysis is a prime example, where the change in an outcome is often computed as a baseline level multiplied by a response.
For instance, \textcite{borenstein2019expecting} derive the distribution of allowance prices in California's cap-and-trade market from a time-series model of business-as-usual emissions combined with separately estimated price elasticities of abatement.
They find that uncertainty in business-as-usual emissions dwarfs the abatement likely to occur within the politically acceptable price range.
Because the level and the response enter multiplicatively, accuracy in one does nothing to offset error in the other.

Our setting is \ac{CO2} emissions from maritime shipping. Ships carry more than 80\% of world trade by volume, and fuel consumption from international shipping emitted 740 million tonnes of \ac{CO2} in 2018, about 2\% of the global total \parencite{unctad2017review,faber2020fourth}. %
Regulators are working to improve shipping emissions estimates while simultaneously iterating on policies to reduce them. The \acs{OECD} has switched between the two extremes of the approaches we study: until 2025 its shipping emissions database used an \ac{ML} model trained on reported fuel efficiency, and it now uses the calculation of \ac{IMO} \parencite{oecd2025maritime,clarke2023co2}.
The \ac{IMO} has implemented a range of policies targeted at net-zero emissions by or around 2050 \parencite{imo2023strategy}, and the \ac{EU} Emissions Trading System has covered shipping since January 2024 \parencite{eu2023directive959}. %

The speed--emissions relationship is central to policy effectiveness in the short run.\footnote{The \ac{IMO}'s efficiency rules for existing ships leave few options other than slowing down.} It also offers something few economic settings do: a physical benchmark for the response. %
In the engineering calculation behind the \ac{IMO}'s emissions inventories, main-engine power rises with the cube of speed, so ship tracking data and a ship's technical characteristics together yield predicted fuel consumption at any speed.
Engineering projections are nonetheless unreliable in practice: realized savings from energy-efficiency programs fall well short of projections \parencite{fowlie2018does,christensen2023performance,burlig2020machine}.
Our results suggest that shipping is no exception.

We therefore develop a range of estimators and judge each on two criteria: the accuracy of predicted aggregate fuel consumption relative to reported totals, and the fidelity of the estimated speed response to the engineering benchmark. We find that models with similar aggregate bias imply speed responses that differ enough to reverse the sign of net benefits in an illustrative cost-benefit analysis of slow steaming for dry bulk ships.

The engineering calculation overpredicts reported fuel consumption in the 2021 test sample by 6.5\% for dry bulk ships and 13.9\% for container ships, and overpredicts for dry bulk ships in every training year.\footnote{The \ac{IMO}'s own comparison of its 2018 inventory with the reports to the \ac{EU} finds an overstatement of 5.5\% overall, with errors of $-0.2\%$ for bulk carriers and 6\% for container ships \parencite{faber2020fourth}. Our figures refer to 2021 and to ship-years whose tracked and reported distances agree under our implementation of the data cleaning procedures, so the two sets of numbers are not directly comparable.} %
\ac{ML}, by contrast, predicts annual fuel consumption within a few percent.
Agreement in annual totals, however, may say little about the response to speed.
Speed is the main short-run margin on which policy acts: lower operating speeds drove much of the fall in shipping's carbon intensity after 2008 \parencite{faber2020fourth}, and policy pushes in the same direction. %

We compare estimators that vary in terms of flexibility and physics content. At one end of this spectrum is the engineering calculation itself, as calibrated for the \ac{IMO} inventories; at the other are two classes of pure \ac{ML} models with no engineering form imposed, one replicating the \acs{OECD}'s model \parencite{clarke2023co2,oecd2025maritime} and the other, which we call physics-consistent, using inputs derived from the physics. %
Between them lie structural regressions, which keep the engineering equation but estimate its coefficients, and hybrids, which combine an engineering calculation or a structural regression with a flexible correction. %

A hybrid has two components.
We call a component structural when it takes the engineering form, whether its coefficients are calibrated or estimated, and flexible when the data alone determine its shape.
Flexibility comes in degrees: estimating the engineering coefficients rather than calibrating them is a first step, and \ac{ML}, which imposes no form at all, is the limit.
In the hybrids, the flexible correction sets the level of fuel consumption and the structural component sets its response to speed.
The division rests on one design choice: the correction never receives speed, or any variable derived from it, as an input, so it adjusts the level while each ship's response to speed remains that of its structural component.\footnote{A correction that multiplies the engineering calculation preserves each ship's elasticity of fuel consumption with respect to speed, and a correction added to a structural regression preserves the derivative in levels; both properties hold with the inputs of the correction held fixed (\autoref{prop:preservation} in \appref{app:hybrid}, available online). Whether the correction carries over to a counterfactual rests on the assumption that the relation of its retained inputs to fuel consumption does not shift with speed, which cannot be verified outside the training data.} %
Pairing a flexible level with a structural response is implicit in the partially linear model of \textcite{robinson1988root} and in the separation of a causal parameter from flexibly estimated nuisance functions in double machine learning \parencite{chernozhukov2018double}.
We make the pairing explicit by using theory to decide which variables enter each component.

We study dry bulk carriers and container ships, the two ship types that emit the most \ac{CO2}, over 2019--2021. %
Our target is the annual fuel consumption that ships report under the \ac{EU}'s Monitoring, Reporting and Verification (\acs{MRV}) regulation for voyages to, from, and between ports in the European Economic Area.\acused{MRV}\footnote{The regulation applies to ships above 5,000 gross tonnage and covers virtually all ships of the two types we study. Accredited third parties verify the reports, and no binding regulation of \ac{CO2} emissions applied to these ships during our sample period, so ships had little incentive to misreport \parencite{eu2015regulation}.} %
The predictors are ship characteristics and annual variables built from each ship's hourly \ac{AIS} transmissions, which provide positions and speeds throughout the reported voyages. %
Requiring tracked and reported voyage distances to agree leaves 2,443 ship-years for dry bulk carriers and 1,036 for container ships.\footnote{The main text reports dry bulk results; container ship results are in the online appendix. In the speed counterfactual, container ships follow the same pattern, with responses closer to the benchmark, although the pattern across speed quartiles differs.} %

We organize the evaluation around distance from the training data. %
A counterfactual evaluates the model at values of a variable it saw in training, but the new values need not lie within the support of the training data.
We measure each observation's distance from that support rather than assume it is small.\footnote{The measure is the Mahalanobis distance of each observation from the training sample, computed on each model's own inputs (\autoref{sec:mlexplanation}). It is one-sided: a large distance flags a departure from the training data, whereas a small distance does not establish that an observation lies within them. By this measure the fleet and quartile hold-outs place many observations unusually far from the training data for some models' inputs.}
We test prediction in scenarios that deviate increasingly from the support of the training data: a held-out year, then held-out quartiles of ship size, distance, and speed, and a held-out ship type. %
A policy counterfactual deviates further still, beyond observable outcomes, so we instead compare each model's predicted response with the physical benchmark, the response of the engineering calculation itself. %
The exercise imposes a common percentage change in at-sea speeds, holding distances fixed, and records each model's predicted change in aggregate fuel consumption. %

Four results stand out.
First, every estimated model predicts aggregate fuel consumption within a few percent out of sample, an improvement on the engineering calculation. %
Second, the physics-consistent \ac{ML} models dominate the \acs{OECD}'s fuel-efficiency model: they are similarly accurate in level, but the former respond to speed while the latter does not. %
Third, level accuracy and response fidelity trade off, and the trade-off takes two forms.
Models anchored to the engineering calculation keep the benchmark response yet inherit part of its level bias. %
Models that let the data determine the shape are less biased in levels but lose most of the response relative to the benchmark: in the pure \ac{ML} models the response is nearly flat, and in the unrestricted structural regression it is attenuated and concave in speed where the cube law is convex. %
Fourth, the remedies differ.
The hybrids keep their structural component's response because the correction never receives the speed-related variables. %
The structural regression returns to the benchmark when constrained at the design load, the engine load at which the engineering model places peak efficiency. %
Two models without \ac{ML} and without an intercept come closest on both counts: the rescaled engineering calculation, which keeps the benchmark response by construction, and the constrained structural regression, which recovers it by estimation. %
Each is nearly unbiased out of sample and nearly matches the benchmark response. %

We consider a structural restriction on the coefficients of the engine efficiency curve that sets its slope to zero at the design load, so that the slope is tied to the curvature.\footnote{The structural regression is linear in the first three powers of relative engine load summed over hours, with coefficients those of the efficiency curve $s(w)=\beta_1+\beta_2 w+\beta_3 w^2$. Setting its slope to zero at the design load $\bar w=0.78$ imposes $\beta_2=-2\bar w\beta_3$, so the constrained regression replaces the squared and cubed load terms with the single regressor $\sum_h (w_{ih}^3-2\bar w\,w_{ih}^2)$ and estimates two coefficients by least squares. Ship-year mean loads lie near $\bar w$ in both fleets, so the constrained curve is nearly flat over the loads ships use and fuel per hour is nearly proportional to load, which is the cube law.} %
The unrestricted coefficients reject the constraint for both fleets, yet the constrained regression reproduces the benchmark response.\footnote{Under a correction for wave resistance, the constrained regression's response to slow-downs falls to about a third of its uncorrected size while its response to speed-ups changes little.} %
We also investigate why the pure \ac{ML} models' response is flat and the unrestricted regression's attenuated, but none of the three candidate mechanisms we examine (extrapolation, measurement error, and the nonlinearity of the structural regression) fully explains either failure.\footnote{Extrapolation contributes modestly to the flat response of the tree-based \ac{ML} models, and measurement error of the kinds and sizes that the data admit reproduces neither response.}

Two applications show the models at work. %
The first is a welfare assessment of a change in ship speeds, the margin through which fuel levies and efficiency rules act. %
For dry bulk ships and five illustrative models, we price the fuel, carbon, and time-at-sea consequences of imposed changes in at-sea speed of 5\% and 10\% in either direction; the time cost is the same for every model, so the models differ only in predicted fuel savings. %
Those that carry the benchmark speed response, the engineering calculation and the constrained structural regression, find that slowing down pays, because fuel and carbon savings exceed the cost of longer voyages; those with a flat response find the reverse.\footnote{We impose the speed changes rather than model how operators would choose speeds under a policy, hold distance, fleet, and trade fixed, and abstract from general equilibrium effects, so the exercise supports no conclusion about the policy itself; it shows how far the choice of model propagates into a policy calculation. \textcite{peters2024beached} gives a dynamic, general equilibrium treatment.} %
The sign therefore turns on a model's response to speed, not on its level accuracy: the unrestricted structural regression, among the least biased level predictors, reaches the same verdict as the pure \ac{ML} models. %

The second application disaggregates \ac{CO2} emissions by month and by origin--destination country pair over all global ship movements, similar to what the \acs{OECD} has recently begun to publish based on engineering calculations \parencite{oecd2025maritime}. %
The task looks like pure prediction, yet a ship-month covers far less distance than the ship-years the models were trained on, and global movements include trips the training sample never saw; the distance diagnostic accordingly places a sizeable share of the ship-months and route aggregates beyond the support of the pure \ac{ML} models but not of the structural ones. %
The models agree on month-to-month movements, such as the dip at each Chinese New Year and the pandemic slump in container shipping, and on which routes dominate; they differ mainly in level, and where the monthly series part ways the divergence traces to their speed responses. %

To researchers whose models specify one part explicitly and learn the rest from data, we offer three things.
The first is a taxonomy of estimators for settings with some structure, organized by how the components are combined. %
It includes a hybrid that corrects the engineering calculation with a flexible component yet preserves its response to speed by withholding the speed-related variables.\footnote{\textcite{almeida2023machine} boost parametric option-pricing models with a neural network trained on their pricing errors. Their correction withholds none of the parametric model's inputs, whereas ours withholds the speed-related ones.} %
The second is the examination of the three candidate mechanisms reported above. %
A single constraint at the design load then brings the response back relative to the benchmark. %
The third is a caution and a tool: aggregation can move a task far from the training data, and our diagnostic flagged the pure \ac{ML} models' ship-months in advance. %
An unflagged task can still leave a flexible model far from the benchmark.\footnote{In the evaluation there is little correspondence between the tree-based models' errors and the diagnostic's statistics, and the diagnostic did not single out the unrestricted structural regression's failure for container ships.} %
For emissions measurement, the official calculation overpredicts reported fuel with our data, our replication of the \acs{OECD}'s model shows no response to speed for dry bulk ships, and the cost-benefit sign turns on that response. %

\paragraph*{Related literature.}
The papers cited at the outset are illustrative of how \ac{ML} is being used in environmental economics. \textcite{prest2023rcts} train on pre-treatment consumption to predict the counterfactual for a pricing experiment and recover the experimental estimate.
Similarly, \textcite{burlig2020machine} pair an \ac{ML} counterfactual for the level with a fixed-effects estimate of the response to evaluate energy-efficiency upgrades in schools.
Closest to ours, \textcite{abrell2022effective} read the effect of the UK carbon price support off \ac{ML} predictions with and without the tax, which enters through relative fuel prices observed before and after its introduction.
In the first two the policy is a treatment; our counterfactual instead changes speed, a continuous input present throughout training. %
\textcite{prest2023rcts} also suggest that predictive accuracy is an imperfect proxy for recovering a treatment effect; our level-versus-response result is the same warning for a continuous input that the counterfactual changes.

In the broader debate on \ac{ML} against structural models, \textcite{nevo2010taking} argue that, with credible identification and unchanged parameters, structural models predict responses to changes not yet observed.
They add that such models support welfare calculations where changes in well-being cannot be estimated directly as treatment effects.
In their application to risk preferences, \textcite{andrews2025transfer} find that black-box algorithms fit slightly better within domain while economic models transfer more reliably.
An alternative approach to ours in blending \ac{ML} and structure is that of \textcite{chen2026teaching}, who pre-train a network on synthetic structural data and fine-tune it on real data.
Our focus differs in that we evaluate the response to speed as well as prediction error.

Finally, the paper can help to inform economic work on shipping that uses or addresses the speed--fuel relationship.
\textcite{kalouptsidi2014time} assumes fuel consumption proportional to the cube of speed in the operating costs of a structural model of bulk shipping.
With vessel-tracking data, \textcite{klotz2022local} evaluate a regional fuel rule and find that behavioral adjustments, slower steaming in the regulated zone among them, together change its welfare accounting.
In turn, \textcite{hummels2013time} estimate the value of time in transit as an ad valorem equivalent per day, which our welfare exercise sets against fuel savings.
Closer to our exercise, \textcite{adland2020optimal} test the cubic law on the daily fuel reports of a small tanker fleet.
They find it holds near design speed, with a lower elasticity at the speeds the ships usually sail, which bears on the open question above.
Here the relationship itself is the object of study, at the scale at which policy is evaluated.

\bigskip
We describe the industry and the engineering model in \autoref{sec:MLbackground}, the data in \autoref{sec:MLdata}, and the estimators in \autoref{sec:MLmodels}.
We evaluate them in \autoref{sec:MLevaluation}, explain their performance in \autoref{sec:MLexplanation}, apply them to disaggregated emissions and a speed-change welfare assessment in \autoref{sec:MLprediction}, and conclude in \autoref{sec:MLconclusion}.

\section{Background}
\label{sec:MLbackground}

\subsection{Industry and Emissions}

Maritime shipping carries over 80\% of world trade by volume \parencite{unctad2017review} and contributes approximately 3\% of global \ac{GHG} emissions \parencite{faber2020fourth}. We focus on the two largest \ac{CO2}-emitting sectors: dry bulk carriers (transporting unpackaged commodities) and container ships (carrying manufactured goods). Because \ac{CO2} emissions are directly proportional to fuel use, we model fuel consumption.\footnote{Nearly all ships in our sample period burn \acl{HFO} or, following the 2020 \acs{IMO} sulfur cap, low-sulfur variants with the same carbon content per tonne, though slightly different energy density. Lower emissions fuels are being adopted, but the transition will likely take decades.}

The dominant physical determinant of fuel consumption is hydrodynamic drag, whose power requirement scales as the cube of speed, making speed the most immediate margin for reducing emissions. Speed is also the margin on which operators respond to economic conditions such as freight rates and fuel prices. Accordingly, container ships operate at higher speeds than bulk carriers, reflecting the time value of their cargo, so the two sectors occupy different regions of the speed distribution. Given this and other differences in operation and design, we treat the two sectors as separate domains throughout the analysis.

The \acf{IMO} is the multilateral organization responsible for monitoring and regulating emissions from international shipping. Its official emissions reports are at the annual level and published infrequently; the most recent \ac{GHG} study dates from 2020 \parencite{faber2020fourth}. The regulatory environment has tightened considerably since our sample period. The revised 2023 \ac{GHG} strategy of the \ac{IMO} targets net-zero emissions from international shipping by or around 2050 \parencite{imo2023strategy}, and two measures now in force are relevant: the \ac{CII} rates ships on operational carbon intensity and the \ac{EEXI} enforces an efficiency standard. Since technological abatement options are limited, speed reduction is the predominant compliance mechanism. Separately, the \ac{EU} applied its Emissions Trading System to maritime transport in January 2024 \parencite{eu2023directive959}, in theory reducing speeds by raising the effective price of fuel.

\subsection{Theoretical Fuel Consumption}\label{subsec:engcalc}
\label{sec:theory}

Engineering formulas used by the \ac{IMO} for its official emissions reports \parencite{faber2020fourth} capture the core physics of drag and engine efficiency but rely on calibrated parameters and coarse approximations of factors such as weather, loading, and hull condition. Our analysis takes these formulas as the theoretical baseline. The fundamental components are outlined here, with further details in \appref{app:calculation}.

A ship $i$'s hourly fuel consumption sums the contributions of the \acf{ME}, \acf{AE}, and \acf{BO}:
\begin{equation}\label{eqn:fc_inst}
    \mathrm{FC}_{ih} = W_{\text{ME},ih} \cdot \mathrm{SFC}_{\text{ME},ih} + W_{\text{AE},ih} \cdot \mathrm{SFC}_{\text{AE},ih} + W_{\text{BO},ih} \cdot \mathrm{SFC}_{\text{BO},ih},
\end{equation}
where $W$ denotes demanded power and $\mathrm{SFC}$ denotes \acl{SFC}, the fuel burned per unit of power delivered, for each component. \Acl{AE} and \acl{BO} power values are assigned based on the ship's operating phase, as determined by its speed and distance from shore, and their \ac{SFC} values are constants taken from \textcite{faber2020fourth}.

The \acl{ME} power is given by the Admiralty formula, which is nonlinear in speed $v_{ih}$ and draft $t_{ih}$:
\begin{equation}\label{eqn:admiralty}
    W_{\text{ME}, ih} = C_i \cdot W_{\text{ME},\textit{ref}, i} \cdot \left( \frac{t_{ih}}{t_{\textit{ref}, i}} \right)^{0.66} \cdot \left( \frac{v_{ih}}{v_{\textit{ref}, i}} \right)^{3},
\end{equation}
where $C_i$ is a ship-specific constant that corrects for both weather and hull fouling, and \textit{ref} subscripts denote reference values from ship specifications. \Acl{ME} efficiency is quadratic (u-shaped) in the engine load $w_{ih} = W_{\text{ME},ih}/W_{\text{ME},\textit{ref},i}$:
\begin{equation}\label{eqn:sfc}
    \mathrm{SFC}_{\text{ME},ih} = \mathrm{SFC}_{\text{base}, i} \cdot \left(0.455\, w_{ih}^2 - 0.710\, w_{ih} + 1.280\right),
\end{equation}
where $\mathrm{SFC}_{\text{base}, i}$ depends on the ship's engine type, fuel type, and year built, following Table 19 of \textcite{faber2020fourth}. The quadratic reaches its minimum at 78\% of rated power, so engines running well below or above that load burn more fuel per unit of power delivered.

Annual fuel consumption $\mathrm{FC}_i$ is the sum over all hours $h$ in a year:
\begin{equation}\label{eqn:annual_fc}
    \mathrm{FC}_i = \sum_h \mathrm{FC}_{ih}.
\end{equation}
The nonlinearity of this construction matters for what follows. Because power is convex in speed, predicting annual consumption from average speed understates it by Jensen's inequality. Additionally, errors in either speed (\acs{AIS} reports \acl{SOG} rather than the \acl{STW} that governs drag) or $C_i$ (weather and fouling effects) do not average out, and slow speeds that coincide with currents or adverse weather bias the estimated speed relationship, which we discuss further in \autoref{sec:structuralexplanation}.

\section{Data}
\label{sec:MLdata}

We employ ship-level data on the global fleet of dry bulk and container ships spanning 2019--2021. Three data sets provide tracking data, reported fuel consumption, and vessel characteristics. A key challenge is a temporal and spatial mismatch: the fuel consumption target is an \textit{annual} aggregate for voyages into and out of the \ac{EU}, while tracking data is \textit{hourly} and global. To align the two, we extract the subset of ship movements on \ac{EU} voyages and aggregate to the ship-year level.

Tracking data consists of \acs{AIS} messages in which each ship broadcasts its position, speed, and other operational information. We purchase hourly transmissions from Spire for all dry bulk and container ships. Raw \ac{AIS} data contains erroneous and missing observations, so we apply standard cleaning and interpolation procedures (\appref{app:MLDataCleaning}).

Our target is annual fuel consumption reported under the \ac{EU}'s \ac{MRV} program, which covers voyages into and out of the European Economic Area by virtually all bulk and container ships \parencite{eu2015regulation}.\footnote{The program applies to ships above 5,000 \ac{GT}. Reports are verified by accredited third parties, and during our sample period there were no binding emissions regulations and therefore no clear incentive to misreport.}

We purchase a fleet register containing vessel characteristics such as ship type, age, size, engine power, and rated speed from Clarksons Research (the \ac{WFR}). This also provides each ship's vessel and \ac{AIS} identifiers (\ac{IMO} number and \ac{MMSI}), which link the above data sets. For robustness exercises, we additionally draw on publicly available ocean reanalysis data from the Copernicus Marine Service for currents and waves (\appref{app:weather}).

To harmonize the tracking data with the \ac{MRV} data, we identify port calls from observed speed and proximity to land, assign each to the country whose exclusive economic zone it lies within, and define a trip as a ship's movement between consecutive port calls (\appref{app:MLDataCleaning}). Any trip with at least one port call in the European Economic Area is an \ac{EU} trip. We aggregate \ac{EU} trips to the annual level to construct the model features and join the data sets on \ac{IMO} number.

Given the potential for errors in the raw \ac{AIS} data and the trip detection procedure, we train models on the subset of ship-years for which observed annual distance on \ac{EU} trips agrees with the reported distance to within $\pm$500 \ac{nm}.\footnote{We additionally require that less than 0.3\% of distance is traveled at an implied speed above 25 knots and that at least ten hourly observations are directly observed rather than interpolated. As a robustness check, we apply an alternative distance concordance criterion of $\pm$10\% (\appref{app:distance}).} This yields 2,443 dry bulk and 1,036 container ship observations (\autoref{tab:obscounts}). As \autoref{fig:datasupporthistograms} shows, the resulting data set is limited in size but reasonably representative of all \ac{EU} trip observations. The most noticeable difference is a skew toward shorter \ac{EU} trip distances, a dimension along which we directly assess out-of-support performance in \autoref{sec:MLevaluation}. Summary statistics, along with the container ship counterparts of \autoref{tab:obscounts} and \autoref{fig:datasupporthistograms}, are provided in \appref{app:MLsummarystats}.

\begin{table}
    \centering
    \begin{threeparttable}
        \caption{Observation counts for data subsets, dry bulk ships.}
        \label{tab:obscounts}
        
\begin{tabular}[t]{lrr}
\toprule
Data Subset & \# Ships & \# Observations\\
\midrule
EU Trips & 4478 & 9488\\
Train + Test & 1999 & 2443\\
Monthly Prediction & 8469 & 331780\\
Annual Route Prediction & 8455 & 328985\\
\bottomrule
\end{tabular}

    \end{threeparttable}
\end{table}

\begin{figure}
    \centering
    \includegraphics[width=0.9\textwidth]{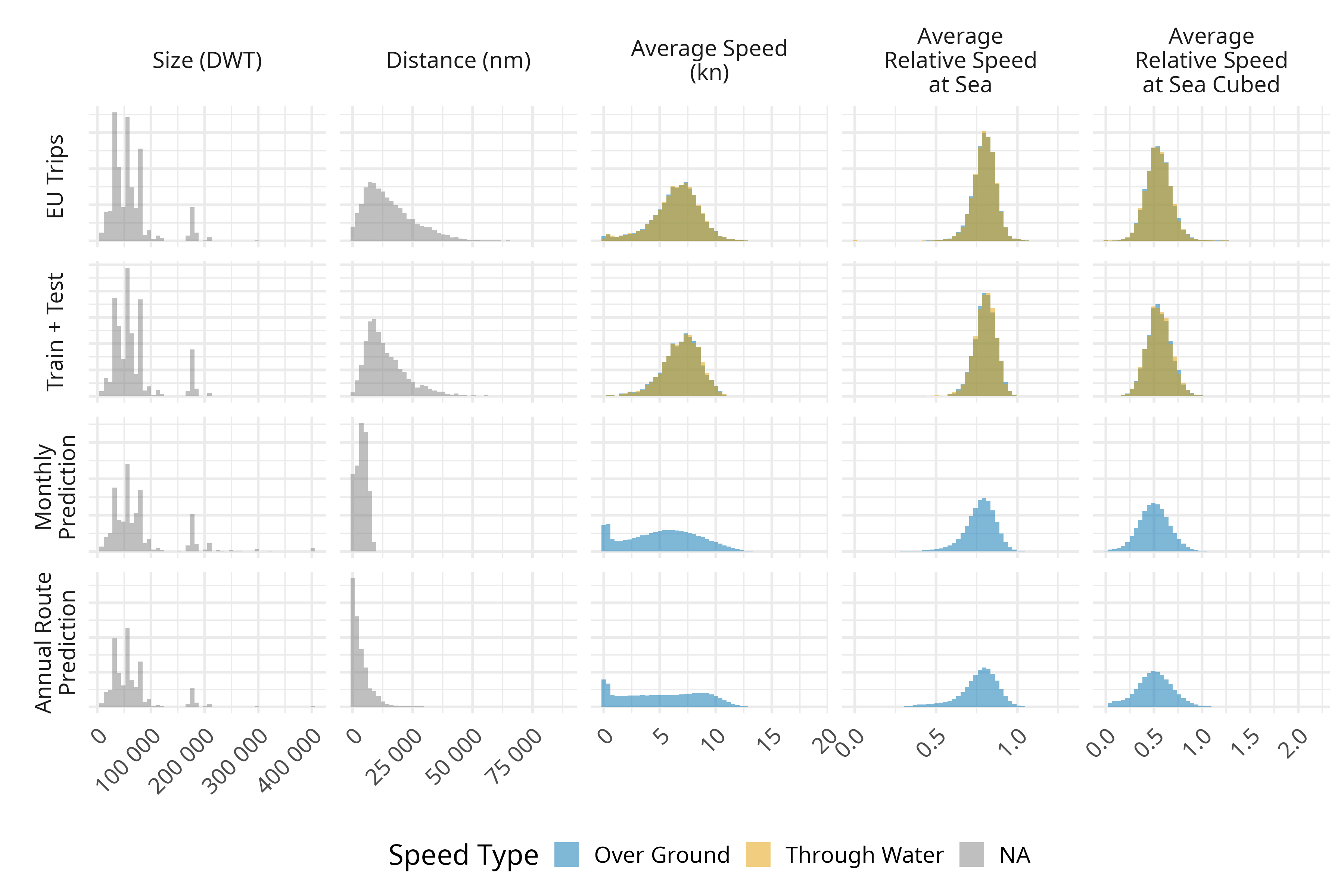}
    \caption{Histograms of variables of interest relevant for prediction applications across data subsets, dry bulk ships. Axes vary.}
    \label{fig:datasupporthistograms}
\end{figure}

For the prediction exercises, we aggregate all global \ac{AIS} movements at two further levels: by ship and month, and by ship, year, and origin--destination country pair (\appref{app:disaggregation}). Both yield an order of magnitude more observations than the training data (\autoref{tab:obscounts}), with substantially shorter distances and more mass at slower average speeds, though less so at sea, while ship sizes are similar (\autoref{fig:datasupporthistograms}). Over half of all ships appear in the \ac{MRV} data for at least one year. We use \ac{SOG} throughout and current-corrected \ac{STW} only for robustness checks; the figure shows the difference is minimal.

\section{Empirical Models}
\label{sec:MLmodels}

We consider a range of models that vary in the degree to which they rely on the underlying engineering and physics. At one extreme, we take as our benchmark the \ac{IMO} approach (\autoref{subsec:engcalc}), which calibrates, rather than estimates, parameters of the engineering model (hereafter the calibration). Any inappropriately calibrated parameters guarantee prediction error, as do any factors affecting fuel consumption for which a tractable engineering model does not exist. At the opposite extreme lies a family of flexible \ac{ML} models that are purely data-driven and need not take any account of the physics. Representing this extreme, we replicate the fuel efficiency model of \textcite{clarke2023co2}, which targets per-distance fuel consumption, and in doing so abstracts from the key physical nonlinearity of the power model.

We contribute models that lie between these two extremes, blending theory and data in varying degrees and configurations. Actual fuel consumption can reasonably be expected to diverge from what theory predicts in various ways, such as through variation in parameter values, slight deviations in functional form, or the contribution of factors that are not captured in the engineering model. We are ex ante agnostic as to which of these are important, as this undoubtedly varies with the specific context and application. 

To discipline the potentially infinite space of blended models, we focus on those that are separable in structural (physics-informed) and nuisance (form-free) components. We write the family of models as:
\begin{equation*}
    Y_i = f(X_{f,i}) \circ g(X_{g,i}),
\end{equation*}
where $f$ is the structural component, a function of the physics-informed features $X_f$, and $g$ is the nuisance component, a function of the nuisance features $X_g$. The $\circ$ operator stands for either addition or multiplication, as we consider both types of separability. Different specifications of $f$ and $g$ then vary the flexibility of the model.

The result is the taxonomy in \autoref{fig:modeldiagram}, which organizes the models by how the two components compose and which of them departs from the calibration. The calibrated model is the special case $f = f_c$ with no nuisance component; it sits at the top of the diagram as the fully theory-constrained model. Grayed-out models lack a physical basis and are shown only to complete the taxonomy.\footnote{The additive calibration hybrid, $f_c + g$, is well defined but weakly motivated: the factors a nuisance component absorbs on the main engine, such as weather, fouling, and load, scale consumption rather than add to it. We therefore estimate the calibration hybrid only in its multiplicative form.} We describe the remaining models below.

\begin{figure}
    \centering
    \begin{adjustbox}{width=0.7\textwidth}
        \resizebox{\textwidth}{!}{%
\begin{tikzpicture}[x=1mm, y=1mm, font=\small,
    box/.style={draw, rounded corners=1.5pt, align=center,
        minimum width=36.0mm, minimum height=17.0mm, inner sep=1.2mm},
    ghost/.style={box, densely dashed, draw=black!40, text=black!45},
    detached/.style={box, densely dotted},
    fan/.style={circle, draw, fill=black, inner sep=0pt, minimum size=1.4mm},
    arr/.style={-{Latex[length=2mm]}, semithick, black!75},
    tag/.style={font=\scriptsize\itshape, inner sep=1pt, fill=white,
        align=center,
        text=black!70}]
\node[box] (calc) at (0.0, 0.0) {Calibration\\[0.3ex] $f_c(X_f)$};
\node[box] (m_scaled) at (-34.0, -30.0) {Calibration\\power law\\[0.3ex] $c\,f_c(X_f)^{\beta}$};
\node[box] (a_scaled) at (34.0, -30.0) {Scaled\\calibration\\[0.3ex] $c + \beta f_c(X_f)$};
\node[ghost] (m_struct) at (-62.0, -64.0) {$c\prod_k f_k^{\beta_k}$};
\node[box] (m_nuis) at (-21.0, -64.0) {Calibration\\hybrid\\[0.3ex] $f_c(X_f)\,g(X_g)$};
\node[box] (a_struct) at (21.0, -64.0) {Structural\\[0.3ex] $c + \sum_k \beta_k f_k$};
\node[ghost] (a_nuis) at (62.0, -64.0) {$f_c(X_f) + g(X_g)$};
\node[ghost] (m_joint) at (-42.0, -90.0) {$g(X_g)\prod_k f_k^{\beta_k}$};
\node[box] (a_joint) at (42.0, -90.0) {Structural\\hybrid\\[0.3ex] $\sum_k \beta_k f_k + g(X_g)$};
\node[box] (flexible) at (0.0, -114.0) {Flexible\\[0.3ex] $m(X_f, X_g)$};
\node[detached] (efficiency) at (0.0, -136.0) {Efficiency\\[0.3ex] $d\,m(X)$};
\draw[arr] (calc) -- node[tag, pos=0.50, left] {multiplicative\\separability} (m_scaled);
\draw[arr] (calc) -- node[tag, pos=0.50, right] {additive\\separability} (a_scaled);
\draw[arr] (m_scaled) -- node[tag, pos=0.58, left] {structural\\flexibility} (m_struct);
\draw[arr] (m_scaled) -- node[tag, pos=0.58, right] {nuisance\\flexibility} (m_nuis);
\draw[arr] (a_scaled) -- node[tag, pos=0.58, left] {structural\\flexibility} (a_struct);
\draw[arr] (a_scaled) -- node[tag, pos=0.58, right] {nuisance\\flexibility} (a_nuis);
\draw[arr] (m_nuis) -- (m_joint);
\draw[arr] (m_struct) -- (m_joint);
\draw[arr] (a_struct) -- (a_joint);
\draw[arr] (a_nuis) -- (a_joint);
\draw[arr] (m_joint) -- (flexible);
\draw[arr] (a_joint) -- (flexible);
\draw[arr, black!45] (-96.0, 8.5) -- (-96.0, -144.5);
\node[anchor=south west, font=\footnotesize\itshape, text=black!60] at (-98.0, 10.5) {constrained};
\node[anchor=north west, font=\footnotesize\itshape, text=black!60] at (-98.0, -146.5) {naive};
\end{tikzpicture}%
}

    \end{adjustbox}
    \caption{Model architectures. Models in grayed-out boxes are not estimated (see text).}
    \label{fig:modeldiagram}
\end{figure}

\subsection{Nuisance Component}
\label{sec:nuisance}

We refer to any flexible \ac{ML} component of a model as the nuisance component, following the double/debiased machine learning literature \parencite{chernozhukov2018double}. Nuisance components vary in learner and in feature set. We consider both linear (ordinary least squares, lasso, ridge) and tree-based (random forest, gradient boosting machine, CatBoost) learners.\footnote{Neural networks cannot be reliably estimated given our limited sample size.} The `Efficiency' class of \textcite{clarke2023co2} and the `Flexible' class, at the bottom of \autoref{fig:modeldiagram}, consist of nuisance components alone. The `Flexible' class lies above the `Efficiency' class because its target and features are consistent with the physics. The `Efficiency' class assumes fuel efficiency is fixed within the ship-year, so that average speed and draft suffice as features. Because fuel efficiency is nonlinear in speed, the `Flexible' class instead aggregates the Admiralty terms of \eqref{eqn:admiralty} hour by hour, so that its speed features measure the work done to move the ship through the water rather than its average speed.

Beyond these physics-derived terms, the `Flexible' feature set includes ship characteristics such as size and rated engine power, activity measures such as distance traveled and number of trips, and indicators of tracking data quality such as the share of interpolated observations at sea (\appref{app:MLfeatures}). Given the hundreds of characteristics available and the potentially infinite variations on aggregating tracking data, we select this single broad set based on domain knowledge and feature importance results from preliminary analyses. Flexible learners leave the data to determine how these features interact. Including components of the engineering formulas lets a model deviate from the exact functional form while remaining consistent with the physics, but may also induce overfitting or instability where the components are highly correlated. For linear learners we fit in logs, so that the product of terms in the engineering calculation becomes a sum and the linear learners loosely mimic the separability of the structural equation, while tree-based learners, being invariant to monotone transformations of the features, need none.

\subsection{Structural Component}
\label{sec:structural}

Structural components derive functional form from the engineering model, but estimate parameters rather than calibrating them. Their flexibility is set by how many parameters are freed and at what level of the model. The most constrained versions free just two parameters that modulate the calibrated calculation $f_c$ as a whole. On the additive branch of \autoref{fig:modeldiagram}, the `Scaled calibration' class estimates a scale and intercept, while on the multiplicative branch, the `Calibration power law' class estimates a scale and exponent.

Finer structure comes from estimating the coefficients of the engine efficiency curve \eqref{eqn:sfc} rather than taking them from \textcite{faber2020fourth}. Combining \eqref{eqn:admiralty}--\eqref{eqn:annual_fc} and dividing through by the reference power and base \ac{SFC}, main engine fuel consumption satisfies
\begin{equation*}
    \frac{\mathrm{FC}_{\text{ME},i}}{W_{\text{ME},\textit{ref},i}\,\mathrm{SFC}_{\text{base},i}}
    = c + \beta_1 \sum_h w_{ih} + \beta_2 \sum_h w_{ih}^2 + \beta_3 \sum_h w_{ih}^3,
\end{equation*}
which is linear in the three aggregate load terms $\sum_h w_{ih}^k$, the $f_k$ of the `Structural' class in the diagram. The calibration is the special case $c = 0$ and $(\beta_1, \beta_2, \beta_3) = (1.280, -0.710, 0.455)$. For this class the target is the normalized main engine consumption, obtained by subtracting the calibrated auxiliary engine and boiler consumption from reported fuel consumption as per \eqref{eqn:fc_inst}. The normalization thus imposes the reference power, base \ac{SFC}, and auxiliary and boiler components, and we estimate the coefficients by ordinary least squares.

Between the scaled calibration and the unrestricted polynomial lie restrictions motivated by the physics, which we treat as variants of the `Structural' class. Our main variant sets the slope of the efficiency curve to zero at the design load of \eqref{eqn:sfc}, where the calibrated curve has its minimum, leaving two free coefficients. Other departures from the calibrated curve, such as allowing asymmetry about its minimum, are also possible.

\subsection{Hybrid Models}
\label{sec:hybridmodels}

Hybrid models combine a structural and a nuisance component, and the structural component can be hybridized more or less flexibly by constraining its coefficients: pinned at their calibrated values, scaled, or free. We do not estimate every combination, only one on each branch of \autoref{fig:modeldiagram}. The `Calibration hybrid' takes the multiplicative branch and pins the calibration at its coefficient of one, so the nuisance component learns only the correction to it. This boosts a theory-based model with a data-driven one: the calibration is the first stage, and the nuisance learner corrects its errors.\footnote{\textcite{almeida2023machine} boost parametric option pricing models in the same way, training a neural network on the model-implied pricing errors. Their correction sees the same inputs as the parametric model, whereas we withhold key features, as discussed below.} The `Structural hybrid' takes the additive branch and frees both components, with the polynomial of \autoref{sec:structural} as the structural component. Since a flexible nuisance can absorb variation that the structural coefficients would otherwise explain, the two are identified by orthogonalization: following \textcite{robinson1988root} and \textcite{chernozhukov2018double}, the coefficients are estimated from the variation in the target and in the load terms that the nuisance features do not predict (\appref{app:hybrid}).

A more subtle design choice is the selection of nuisance features. For the imposed structure to bind, the nuisance component cannot contain features that would let it re-learn the structural relation in a more flexible form. In both hybrids, the nuisance feature set $X_g$ excludes every variable derived from speed: the calibrated value, the aggregated speed and power terms, and the rated speed, together with the at-port fraction, which changes with time at sea at fixed distance. This exclusion delivers the hybrids' defining property: because the nuisance has no speed argument, the model's response to speed is exactly that of its structural component, whatever the nuisance learns. Proposition~\ref{prop:preservation} in \appref{app:hybrid} states this formally: the multiplicative hybrid preserves the elasticity of the calibration, and the additive hybrid the level derivative of its polynomial, for every realization of the nuisance fit. If the structural component is misspecified, the error in the preserved object is bounded by the structural error alone, whatever the data, sample size, or learner (Corollary~\ref{cor:preservation}). The composition sets which invariant is preserved, and the nuisance contaminates the other: a multiplicative nuisance rescales the level derivative, and an additive one shifts the elasticity.

The exclusion rests on different grounds in the two hybrids. In the structural hybrid it is inherent to the partially linear model: the load terms are the regressors of interest, and the controls are the ship characteristics and data-quality indicators. In the calibration hybrid nothing is identified, and the exclusion is a modeling restriction that we postulate: a nuisance that is to carry over to a counterfactual should not see the variables the counterfactual would alter, so we withhold from it every variable that is not invariant to a counterfactual shift in speed. Its nuisance therefore retains distance, number of trips, and mean draft, which a fixed-distance counterfactual leaves unchanged, alongside the ship characteristics and data-quality indicators (\appref{app:MLfeatures}). The restriction guarantees the preserved elasticity, but the transfer of the nuisance term to the counterfactual rests on the assumption that the retained features' relation to fuel consumption does not shift with speed, which cannot be verified outside the training support.

\subsection{Implementation}
\label{sec:implementation}

We implement multiplicatively separable models by regressing in logs, using $\log(1+Y)$ to retain values of zero.\footnote{We use Python's scikit-learn library for all learners except CatBoost, for which we use the library of the same name \parencite{prokhorenkova2018catboost}, and LightGBM in the orthogonalized estimator. Implementation details, including software versions, are in \appref{app:MLfeatures}.} Additively separable models, namely the structural and efficiency classes, are fit in levels. The calibration hybrid is fit as a nuisance learner on the log residual $\log Y_i - \log f_c(X_{f,i})$, with any of the learners above. Retransformation bias in predictions from the log-scale models is corrected with the smearing estimate of \textcite{duan1983smearing} (\appref{app:MLfeatures}). As a robustness check that avoids retransformation, we also fit the `Flexible' class by Poisson and gamma pseudo-maximum likelihood \parencite{silva2006log}.

Where relevant, hyperparameters are tuned by grid search with 5-fold cross-validation on the training data, selecting the parameter set with the lowest mean squared error averaged across folds (\appref{app:MLfeatures}).\footnote{The Poisson and gamma pseudo-maximum likelihood fits are instead tuned on their own deviance.} We select on squared error because the training criterion should match the use of the predictions \parencite{gneiting2011making}: our applications aggregate predictions, so the conditional mean, which squared error targets, is the object of interest.

\section{Model Evaluation}
\label{sec:MLevaluation}

Accurate prediction and valid counterfactual analysis require models that generalize beyond the training data. Following \textcite{andrews2025transfer}, we distinguish \textit{within-domain error}, the prediction error on data drawn from the same distribution as the training set, from \textit{transfer error}, the prediction error when the test distribution differs from it. In their application to risk preferences, \textcite{andrews2025transfer} find that black-box algorithms fit slightly better within domain while the economic models transfer more reliably. Our evaluation is similar in spirit, and we present a suite of experiments to assess the performance of models within and out of domain. This section describes the results; \autoref{sec:MLexplanation} examines why the models perform as they do.

Our target applications predict emissions aggregated at various levels rather than at the ship level, so we report error as the \ac{RTE}, the total prediction error as a share of total reported consumption. The disadvantage of an aggregate measure is that offsetting errors within a sample cancel, which can mask ship-level error. To show sampling variability, we present 95\% percentile intervals from a ship-clustered bootstrap of the test sample with the fitted models held fixed. For reference we report \ac{RMSE}, which is immune to this cancellation, in \appref{app:MLfeatures}. We present results for dry bulk ships in the main text. Unless noted, container ship results are qualitatively similar (\appref{app:container_evaluation}).

The remainder of the section proceeds as follows. \autoref{sec:oos} evaluates the models \textit{out of sample}, on a later year drawn from the same population, which measures within-domain error. \autoref{sec:oosupport} evaluates the models \textit{out of support}, where the test data lie beyond the range of the training data along at least one dimension, which measures transfer error. We train on three quartiles of a policy-relevant variable and predict the fourth (\autoref{sec:quartiles}), train on one ship type and predict the other (\autoref{sec:domain}), and shift speed counterfactually beyond the observed range (\autoref{sec:elasticity}). \autoref{sec:comparison} summarizes the trade-off between within-domain fit and transfer.

\subsection{Out of Sample}
\label{sec:oos}

We train on 2019--2020 and test on 2021, a roughly two-thirds/one-third split over which reported fuel consumption is stable (\autoref{tab:calstats}). The calibration overpredicts reported consumption, with an \ac{RTE} of 6.5\% on the test set (\autoref{fig:twoway_cal}). It also overpredicts in every year of the training data, so the bias is systematic rather than a feature of the test year.

\autoref{fig:samplescores} shows the out-of-sample \ac{RTE} of every model specification and sub-variant, arranged loosely by class from the most to the least theory-constrained. Every estimated model lies within a few percentage points of the true value, a clear improvement from the calibration. The `Structural' models tend to perform best in general, while the pure \ac{ML} models underpredict. The Poisson and gamma pseudo-maximum likelihood fits do so by the same margin as the log-linear fits, suggesting the shortfall is not retransformation bias. The calibration-anchored models almost all overpredict by similar amounts. Perhaps surprisingly, no hybrid improves on its structural counterpart, so adding a flexible nuisance component buys no aggregate accuracy within domain. Variation in performance is typically larger across model architectures than across nuisance learners. The intervals of the data-driven and calibration-anchored groups generally do not overlap, so the sign pattern is not an artifact of the test draw.

\begin{figure}
    \centering
    \includegraphics[width=0.9\textwidth]{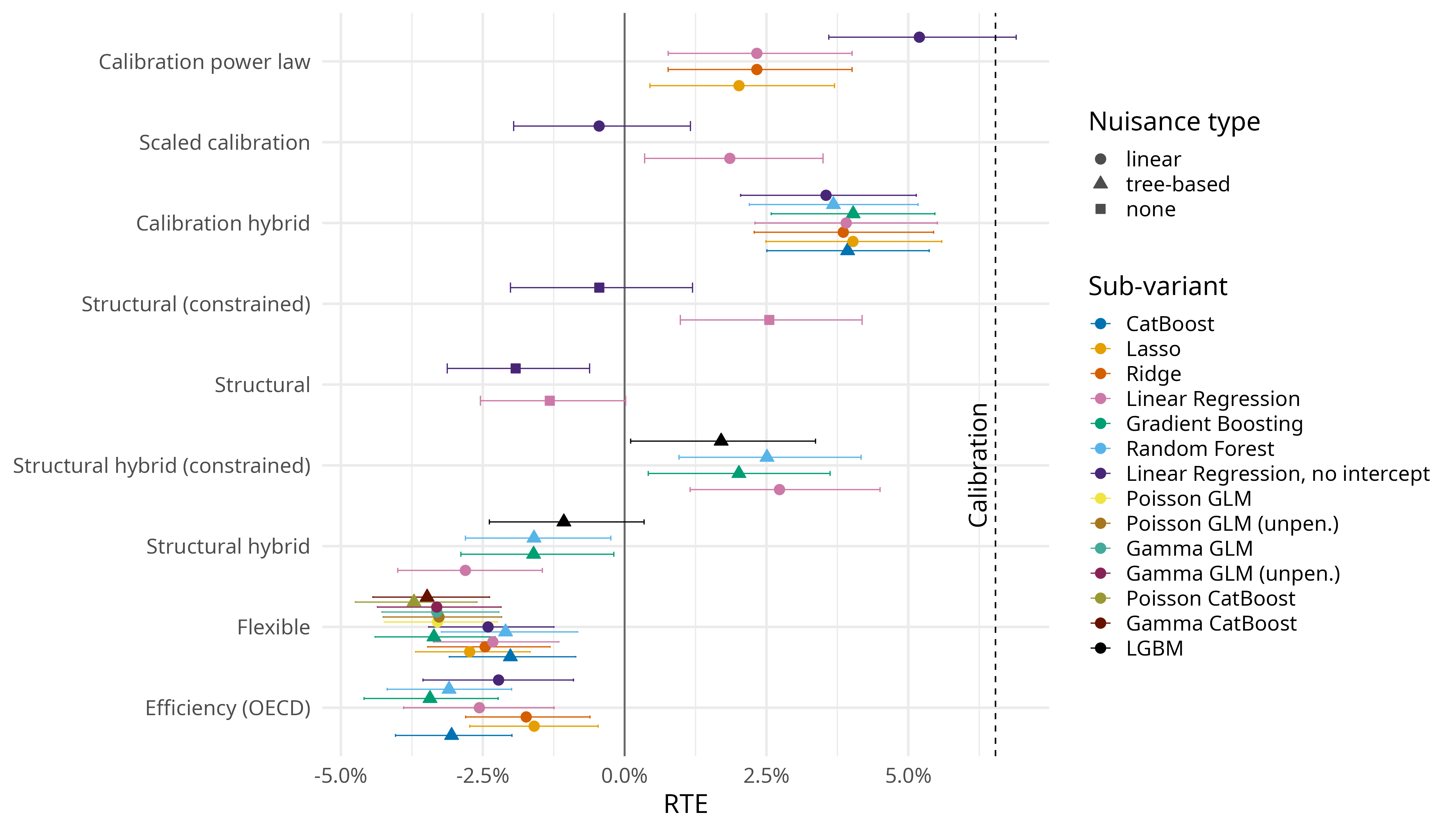}
    \caption{Out-of-sample \ac{RTE} by model specification and sub-variant, dry bulk ships. The dashed line is the calibration. Bars are 95\% ship-clustered bootstrap intervals over the test sample, fitted models held fixed.}
    \label{fig:samplescores}
\end{figure}

These patterns are similar for container ships, where all errors are slightly larger (\appref{app:container_evaluation}).

\subsection{Out of Support}
\label{sec:oosupport}

Out of support, the test data lie beyond the range of the training data along at least one dimension, so a model must extrapolate. We assess this in three ways. First, we hold out one tail of the observed sample along a continuous dimension, training on three quartiles of a variable and predicting the fourth. Second, we train on one ship type and predict the other, where the test data lie outside the training support on nearly every dimension at once. In both, outcomes are observed, so transfer error is measured directly. Third, we shift speeds counterfactually beyond the observed range. No outcomes exist, so predictions are compared against the engineering benchmark instead.

\subsubsection{Quartile Segmentation}
\label{sec:quartiles}

We split the sample into quartiles of a variable, train on three quartiles, and predict the fourth. To illustrate extrapolation performance, we report prediction for the two outer quartiles. The first two splits are by ship size, measured by \ac{DWT}, and by distance traveled, two dimensions along which subsets of the fleet differ in disaggregation. Because speed is a key policy margin, the third split is by the mean of the cube of hourly at-sea speed relative to rated speed, the quantity that drives consumption.\footnote{Splitting on average speed would be non-physical, grouping a ship that travels at a constant moderate speed with one that is stationary half the time and fast the other half.}

The left panel of \autoref{fig:support_quartiles} shows the size split. Tree-based nuisance learners tend to show a larger discrepancy between the two outer quartiles than linear ones, overpredicting for the smallest ships and underpredicting for the largest, and the gap is widest for the `Flexible' and `Efficiency' pure \ac{ML} classes. The structural fits, whose response to size is fixed by the physics, tend to perform better. The distance split in the middle panel shows a similar pattern, except for the `Efficiency' class: its target is consumption per unit distance and the prediction is multiplied by distance, so it fits the distance split by construction.

\begin{figure}
    \centering
    \includegraphics[width=\textwidth]{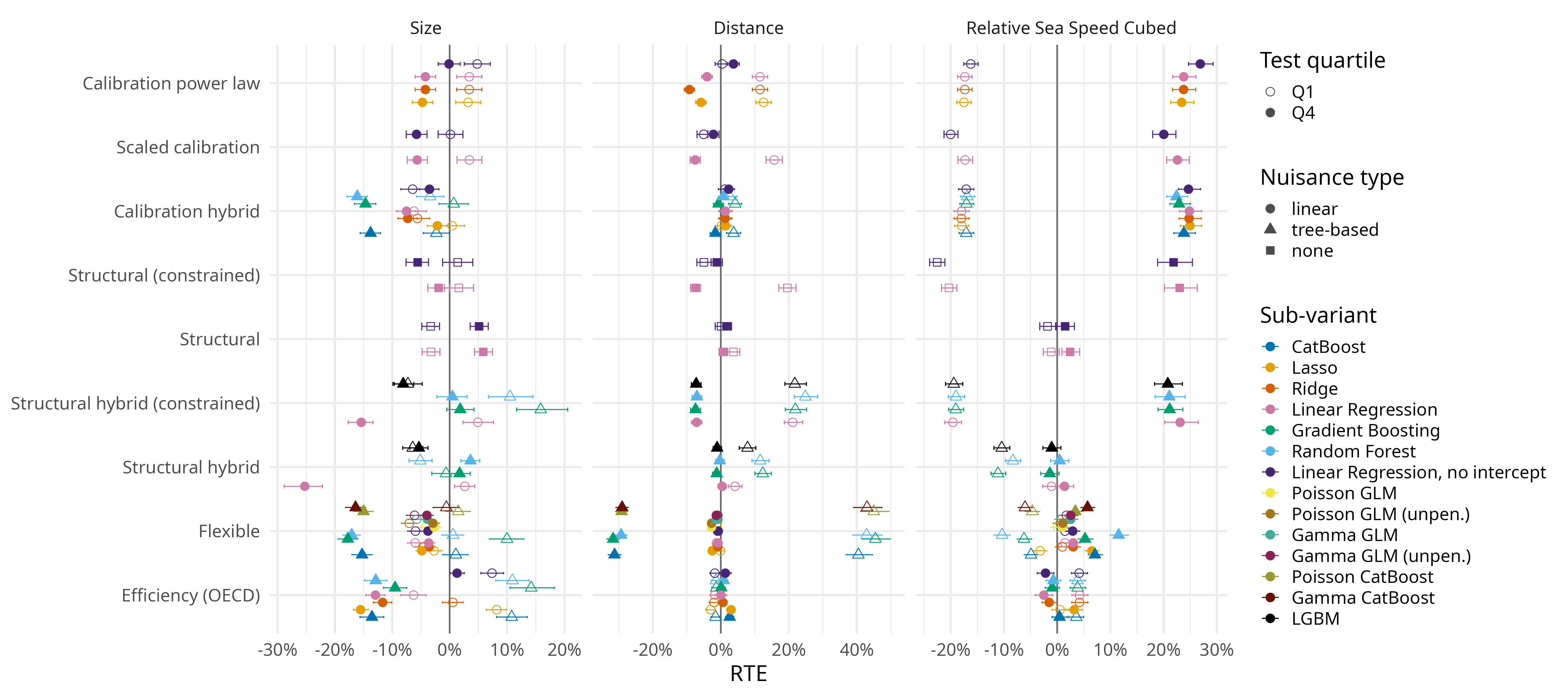}
    \caption{Out-of-support \ac{RTE} for three quartile splits, dry bulk ships. Each model is trained on three quartiles and tested on the fourth. Bars are 95\% ship-clustered bootstrap intervals over the test quartile, fitted models held fixed.}
    \label{fig:support_quartiles}
\end{figure}

The speed split in the right panel shows much more mixed results. For dry bulk ships, models that fix the speed relation at its calibrated form miss on both outer quartiles, while models that estimate the speed relation, the unrestricted structural fits and the pure \ac{ML} classes, do considerably better. The sign pattern of the quartiles indicates the slope of each model's speed relation relative to the data: a steeper relation underpredicts the slowest quartile and overpredicts the fastest, a flatter one the reverse. The `Efficiency' class is the one model that implies a flatter relation than the data, as imposed by its fixed-efficiency assumption. The speed split differs substantially for container ships, where the calibration-anchored models transfer and the unrestricted structural fit underpredicts the fastest quartile (\autoref{fig:support_quartiles_container}). The two fleets evidently differ in more than the range of their features, which is the subject of the next exercise.

\subsubsection{Ship-Type Transfer}
\label{sec:domain}

We next train on one ship type and predict the other. Container ships travel faster than dry bulk ships and differ in hull form, engine configuration, and operating pattern, so the held-out fleet lies outside the training support on nearly every feature at once, and the relation between features and consumption need not be the same on both sides. \autoref{fig:domain} shows the \ac{RTE} in both directions.

Here we observe that transfer error generally grows with flexibility. The two-parameter models that scale or exponentiate the calibration transfer within a few percent in either direction, and the constrained structural fits are close behind. The unrestricted structural fits and the structural hybrids do worse, overpredicting for container ships when trained on dry bulk ships. The pure \ac{ML} classes transfer worst: their tree-based learners overpredict for dry bulk ships and underpredict for container ships by large margins, and the linear learners of the `Flexible' class underpredict in both directions.

\begin{figure}
    \centering
    \includegraphics[width=0.9\textwidth]{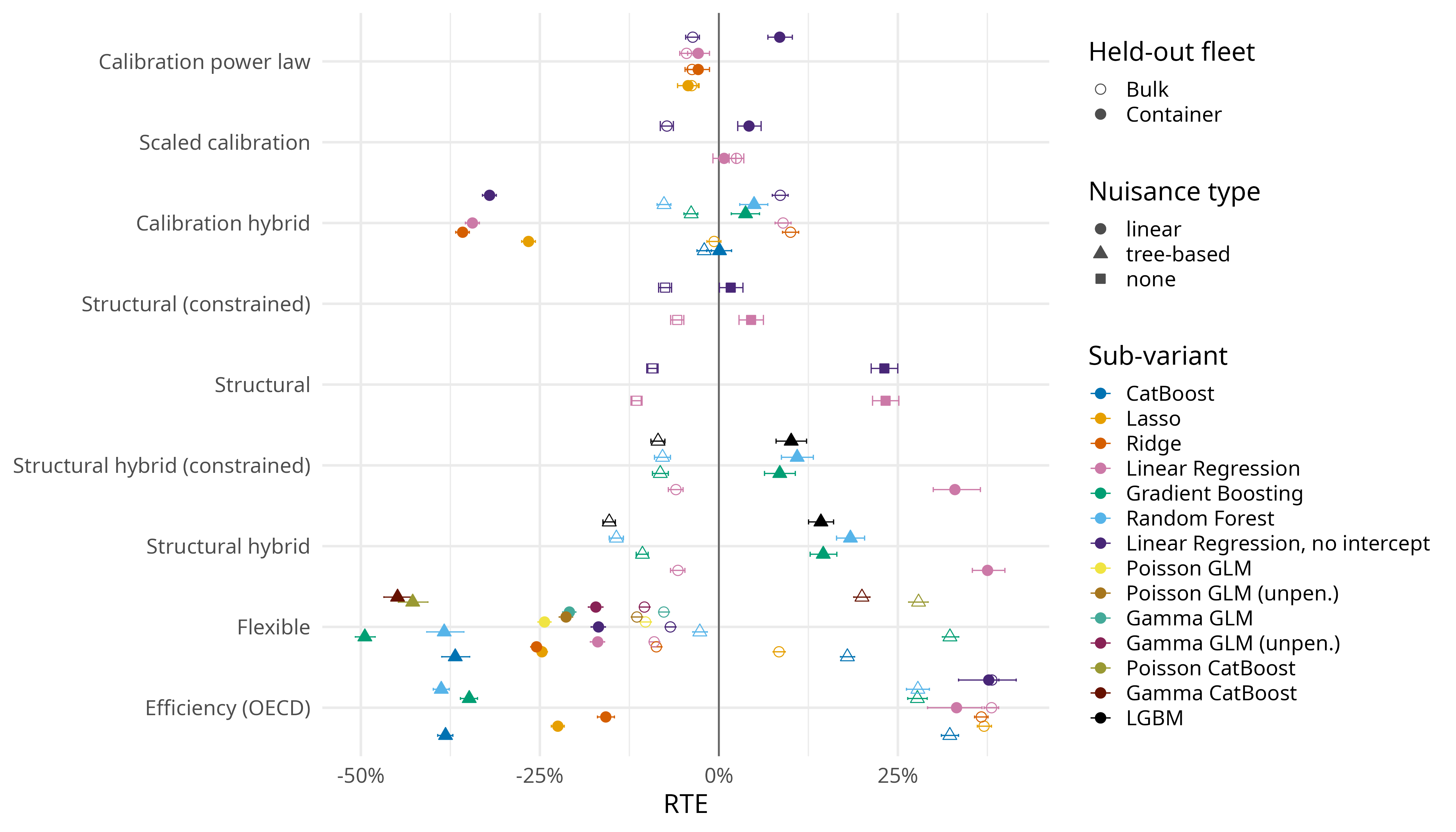}
    \caption{Transfer \ac{RTE} across ship types by model specification and sub-variant. Each model is trained on one fleet and tested on the other. Bars are 95\% ship-clustered bootstrap intervals over the test fleet, fitted models held fixed.}
    \label{fig:domain}
\end{figure}

\subsubsection{Speed Counterfactual Prediction}
\label{sec:elasticity}

The quartile split tests each model's speed relation within the range of historically observed speeds. However, speed regulation and market shocks can move speeds beyond that range. To evaluate such larger variation, we construct counterfactual data by adjusting the hourly speed of every at-sea observation by a common percentage.\footnote{For greater realism, we adjust only observations at sea, where the operator chooses speed, and leave low speeds where ships are likely constrained by congestion or safety unchanged.} Distance is held fixed, so time at sea varies inversely with speed. Every feature derived from speed, including the calibration itself, is recomputed from the adjusted hourly data, while distance, number of trips, draft, and ship characteristics are unchanged. Each model, trained on the full sample, then predicts consumption for every ship-year at each adjustment, and we sum the predictions.\footnote{The exercise is a variant of the partial dependence plot of the \ac{ML} literature, with the adjustment applied to the hourly tracking data rather than to the aggregate feature.} No outcomes exist for these counterfactuals, so we report the percentage change in predicted aggregate consumption from its unadjusted value and compare each model's response with the `Scaled calibration' (without intercept) as the theoretical benchmark.\footnote{This sub-variant multiplies the calibration by a single estimated scale, so its percentage response to a speed adjustment is exactly that of the calibration, whatever the estimated scale.}

\autoref{fig:elasticity} shows the predicted response for adjustments ranging from $-20\%$ to $+10\%$ by model class; the dashed curve in each panel is the benchmark. At one extreme, the `Calibration power law' and `Calibration hybrid' classes follow the benchmark quite closely, as these models strongly impose the \ac{IMO} power law. In stark contrast, the pure \ac{ML} classes exhibit a nearly flat speed response, with some `Efficiency' models even predicting a decrease in fuel consumption at higher speeds. The structural models fall in-between; the unconstrained variants exhibit both an attenuated slope and an inverted curvature relative to the benchmark. Constraining the slope of the \ac{SFC} component to zero at its calibrated design load of 0.78 disciplines this and aligns the structural models much more closely with the benchmark. Both hybrid classes follow their structural component in both respects, which depends on the exclusion of speed-related features from the nuisance component (\autoref{sec:hybridmodels}); retaining them lets the nuisance re-learn part of the speed relation and attenuates the speed response.
\begin{figure}
    \centering
    \includegraphics[width=\textwidth]{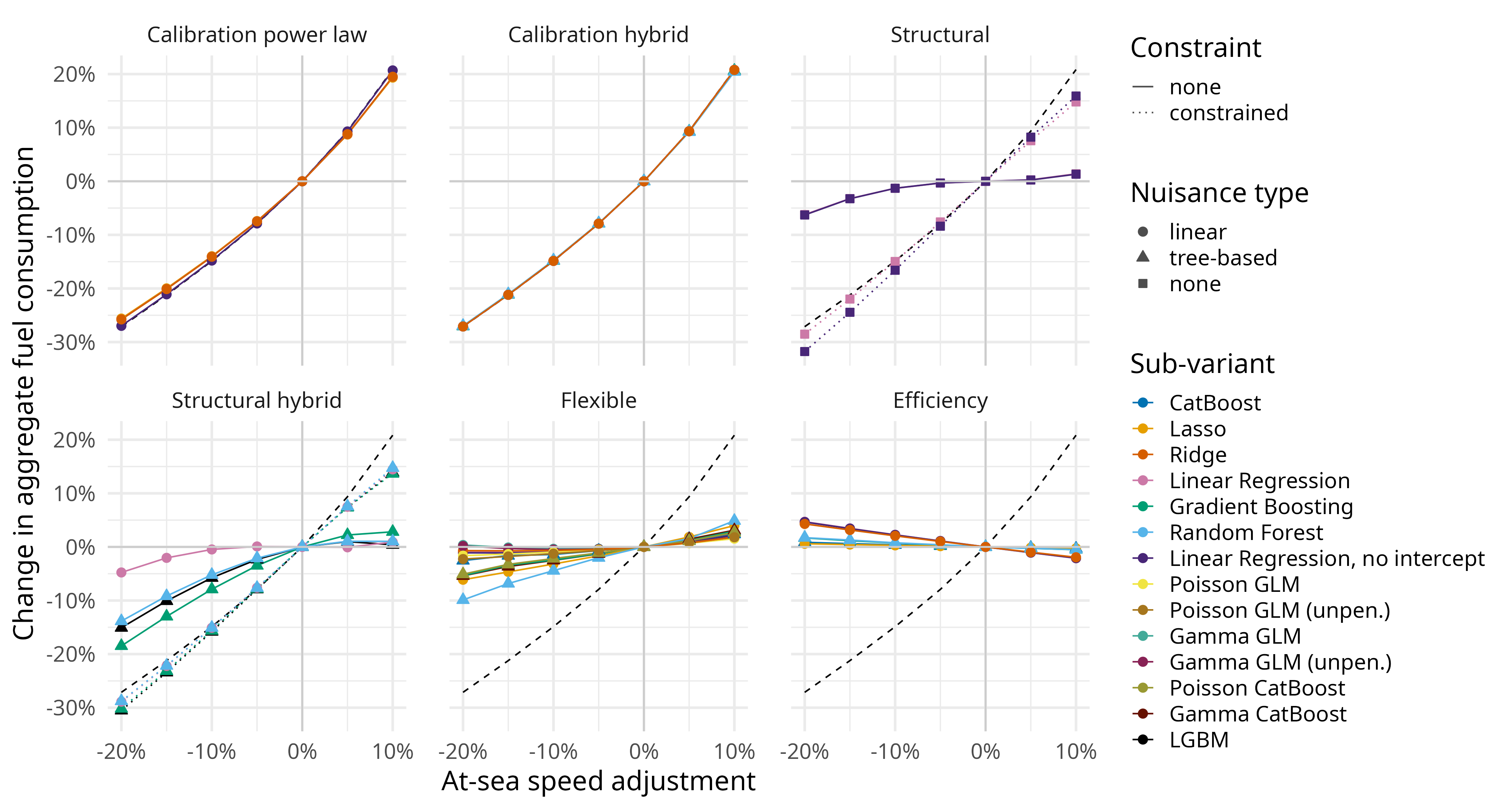}
    \caption{Predicted change in aggregate fuel consumption in the speed-adjusted counterfactual exercise, dry bulk ships. The dashed line is the theoretical benchmark.}
    \label{fig:elasticity}
\end{figure}

For container ships the models' relative performance follows the same pattern, but with a more pronounced response to speed adjustments, closer to the benchmark (\autoref{fig:elasticity_container}).

\subsection{Comparison}
\label{sec:comparison}

The previous exercises evaluated out-of-sample and out-of-support performance in isolation, but in many applications both matter: a disaggregated inventory must be accurate in level, and an assessment of a speed policy must also capture the response to speed. \autoref{fig:bias_elasticity} plots the two axes against each other, the out-of-sample \ac{RTE} of \autoref{sec:oos} against the speed elasticity of aggregate consumption implied by the counterfactual exercise of \autoref{sec:elasticity}. We condense the speed curve comparison into a single measure using the arc elasticity at $\pm 10\%$. The vertical line marks zero bias and the dashed line the elasticity of the calibration, so their crossing is the ideal. This highlights that some models are strictly dominated, but for others there is a trade-off.

The models fall into two groups. The calibration-anchored classes and the hybrids with a constrained structural component reproduce the calibration's speed response but overpredict, while the pure \ac{ML} classes and the unrestricted structural fits are less biased but have lost most of the speed response. Within the second group the `Efficiency' class is dominated, with a bias similar to the `Flexible' class and no speed response at all. Between the groups the trade-off is between level and slope, and across the nuisance learners of a class the points move along the bias axis but hardly at all along the elasticity axis: the hybrids inherit the speed response of their structural component, and the nuisance component shifts only the level.

The exceptions are the two no-intercept fits, the `Scaled calibration' and the constrained `Structural' fit, which sit close to the ideal on both axes. Each is nearly unbiased out of sample and responds to speed nearly as the calibration does, the first by construction and the second by estimation. The picture is similar for container ships (\autoref{fig:bias_elasticity_container}), where the same two fits lie near the ideal.

\begin{figure}
    \centering
    \includegraphics[width=0.8\textwidth]{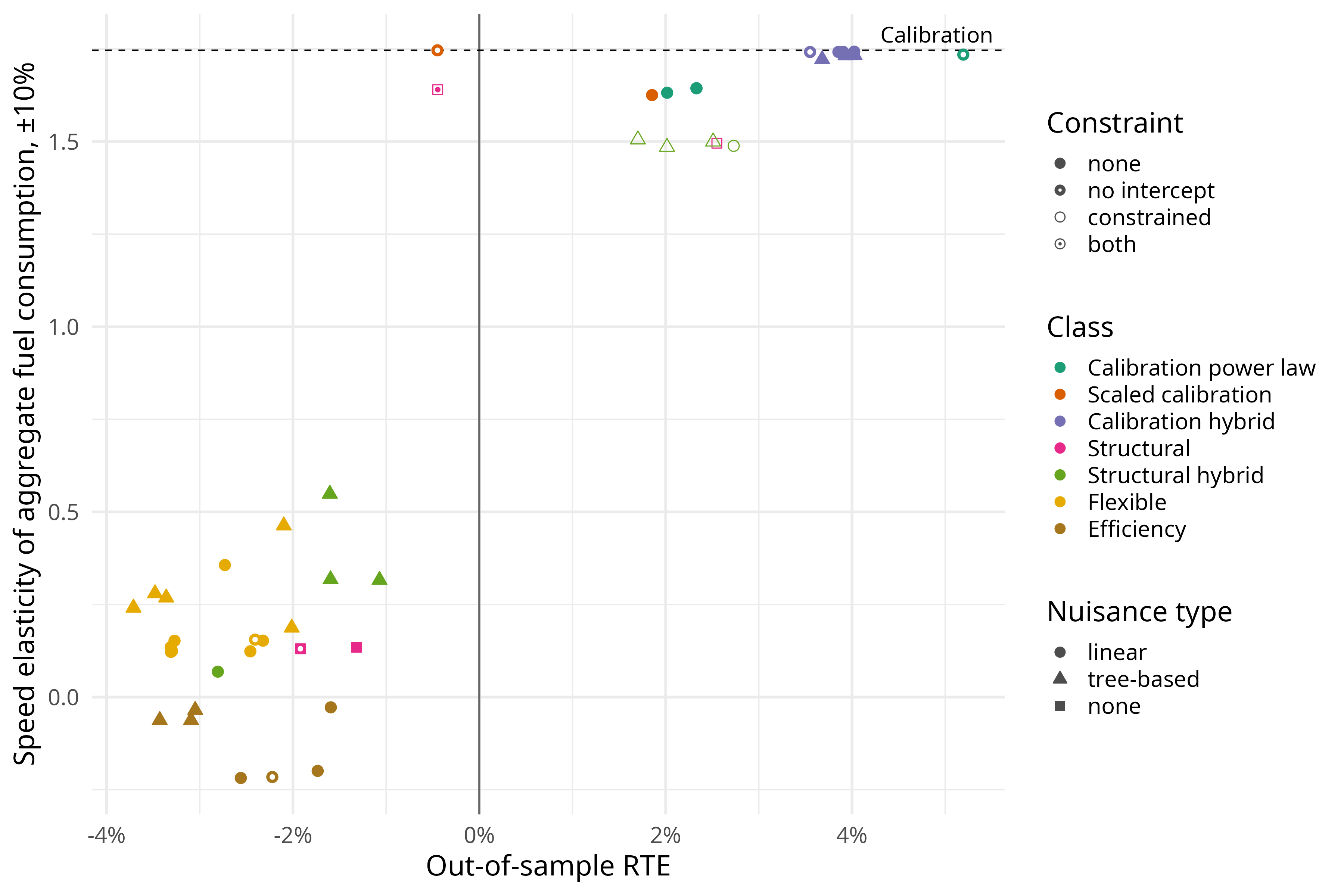}
    \caption{Out-of-sample \ac{RTE} against the speed elasticity of aggregate fuel consumption, dry bulk ships. The elasticity is the arc elasticity over the $\pm 10\%$ counterfactual; the dashed line denotes the calibration.}
    \label{fig:bias_elasticity}
\end{figure}

\section{Performance Explanation}
\label{sec:MLexplanation}

Our results are largely in line with the previous literature and general intuition \parencite{nevo2010taking,andrews2025transfer}: pure \ac{ML} models perform well within domain but fail out of support, while theory-constrained models are biased within domain but generally transfer better. The correspondence with a structure-versus-\ac{ML} narrative is not exact, however: the unrestricted structural fits, which keep the engineering form but estimate its coefficients, lost most of the response to speed in the counterfactual, as the pure \ac{ML} models did. In this section we explore why models failed where they did, adjudicating three candidate explanations: extrapolation beyond the training support, measurement error in the features, and the nonlinearity of the structural model.

\subsection{Out-of-Support Extrapolation}
\label{sec:mlexplanation}

The extrapolation behavior of tree-based learners is well known. A regression tree partitions the feature space into cells and predicts a constant within each, and an ensemble of trees, whether bagged or boosted, is a sum of such piecewise-constant functions \parencite{hastie2009elements}. Beyond the range of the training data along any feature, every observation falls into the outermost cell along that dimension, so the model returns the value it learned at the edge of the support.\footnote{Neural networks fail in a similar fashion but instead follow an affine function out of support \parencite{xu2021how}.} In our results the `Flexible' and `Efficiency' models employing tree-based learners performed noticeably worse than the linear ones out of support.\footnote{For the distance split, the `Efficiency' models performed better by construction, as mentioned previously. For the hybrid classes, whose nuisance components also use tree-based learners, the pattern is less clean.} We might therefore suspect the trees to be the problem. 

To illustrate the impact of the learner extrapolation behavior in our setting, we present an exercise in which we replace the dependent variable with the fuel consumption predicted by the \ac{IMO} calculation. This isolates the effect of the learner by removing measurement error. \autoref{fig:sim_flex} shows the outcome of the same speed counterfactual exercise as in \autoref{sec:elasticity}. The linear learners, whose feature set includes the calibrated value itself, recover the benchmark exactly. The tree-based learners depart from it at the edges of the range to varying degrees. The departure is largest for gradient boosting, but much less pronounced for random forest and CatBoost. Extrapolation is therefore a likely contributor to the flat response of the flexible models, but within the range of the counterfactual exercise its magnitude is too small to be the only factor.

\begin{figure}
    \centering
    \includegraphics[width=0.9\textwidth]{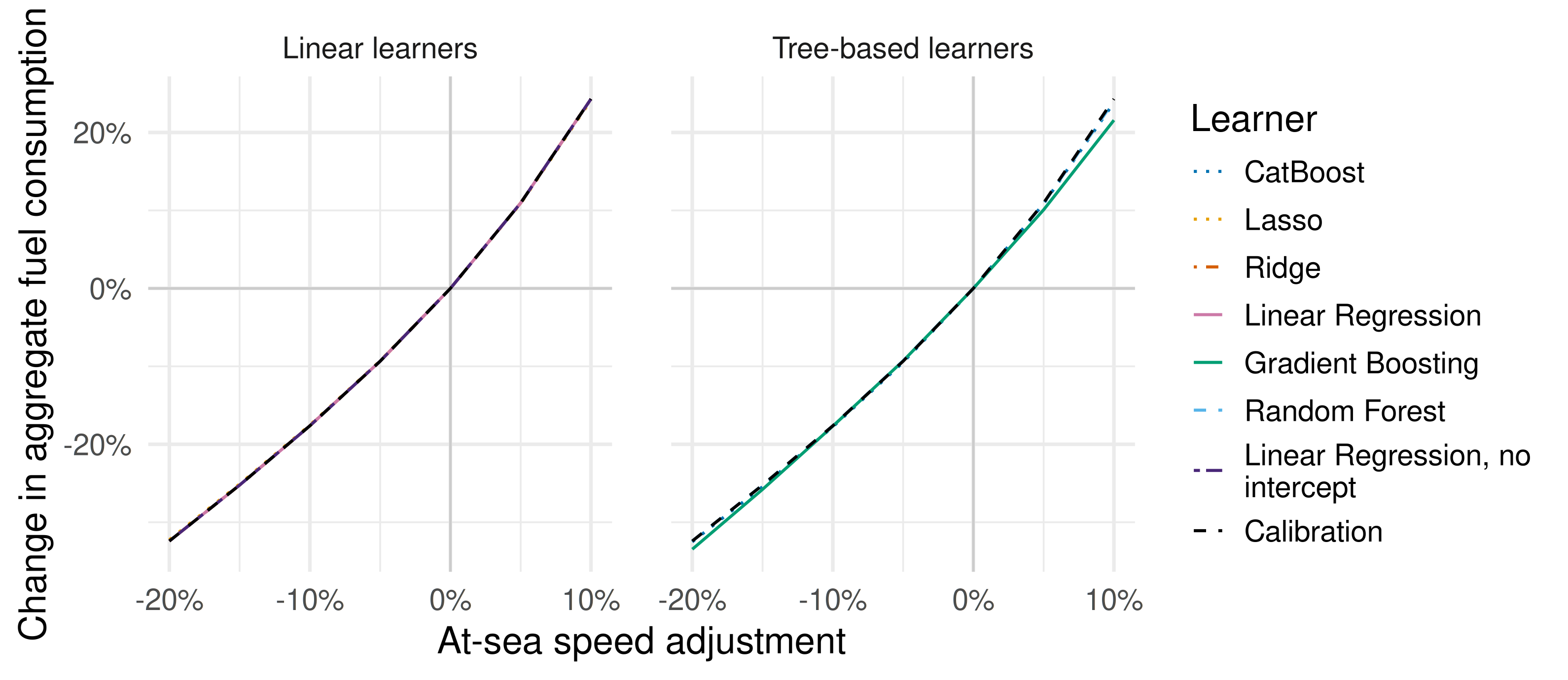}
    \caption{Speed-adjusted counterfactual exercise with calibrated fuel consumption as the fitted target to remove measurement error. `Flexible' class, dry bulk ships. The dashed line is the calibration.}
    \label{fig:sim_flex}
\end{figure}

While the impact we find here is modest, whether the extrapolation behavior matters for a given application is not obvious in advance. For the extrapolation failure discussed above, we should expect accuracy to depend on the number of out-of-support observations and how far out of support they are for the model being used. We therefore propose a diagnostic that indicates how far a test set is from the training set in terms of feature space.

We borrow from the \ac{ML} literature, which uses the Mahalanobis distance for out-of-distribution detection \parencite{lee2018simple,chen2026teaching}. For an observation with feature vector $x_i$, the squared distance from the training sample is
\begin{equation*}
    D_i^2 = (x_i - \hat{\mu})^\top \hat{\Sigma}^{-1} (x_i - \hat{\mu}),
\end{equation*}
where $\hat{\mu}$ is the vector of feature means in the training sample and $\hat{\Sigma}$ their covariance matrix. Weighting by the inverse covariance makes this a Euclidean distance along the principal components of the training data, each scaled by its standard deviation, so that a displacement counts for more in a direction in which the training data vary little and correlated features are not double counted (\appref{app:support}). We compute it on each model's own feature set, on the scale on which the model is fit.

\autoref{fig:support_quartiles_mahal} illustrates the Mahalanobis distance for the quartile splits of \autoref{sec:quartiles} for three feature sets: those of the constrained and unrestricted `Structural' fits and of the `Flexible' class.\footnote{The container ship counterpart, in \appref{app:container_evaluation}, is quite similar.} The same split can be out of support for one feature set and not for another. This is most clearly seen with the 4th size quartile; for the `Flexible' set, the measure exhibits significant mass beyond the 95th percentile of the training distances, perhaps not entirely out of support, but with thin support. In contrast, the test set mass lies well within the training set mass for both structural models.

\begin{figure}
    \centering
    \includegraphics[width=\textwidth]{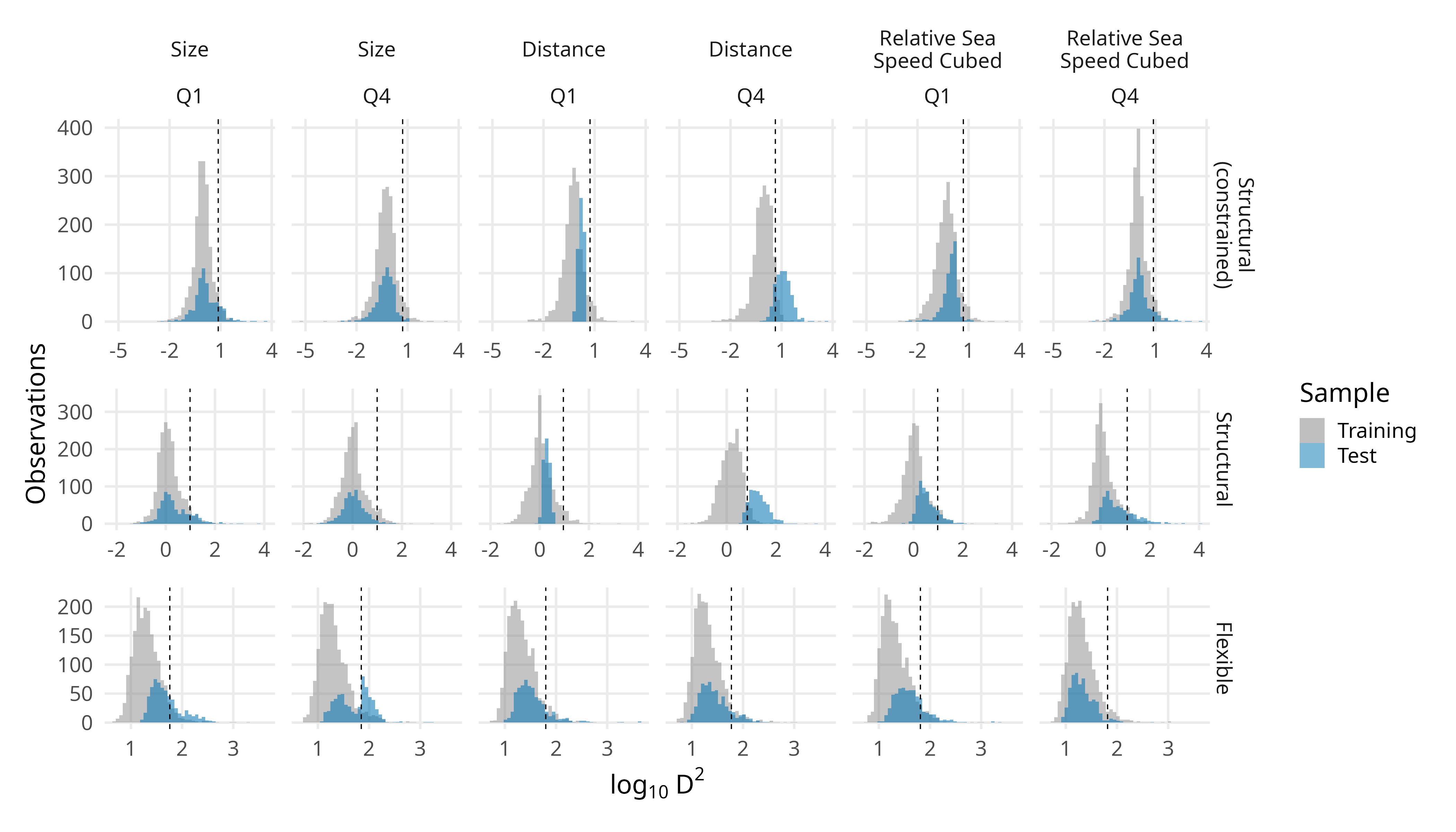}
    \caption{Mahalanobis distance from the training sample for the three quartile splits and three feature sets, dry bulk ships. Gray: training observations, each measured leave-one-out against the rest of the training sample; blue: the held-out quartile. The dashed line is the 95th percentile of the training distances.}
    \label{fig:support_quartiles_mahal}
\end{figure}

A formal comparison requires a summary statistic, and the raw distances are unsuitable because they grow with the number of features. We use two, each relative to the training distances of the same feature set. The first, following \textcite{lee2018simple}, is the share of test observations beyond the 95th percentile of the training distances. The second is our own and, we suggest, better reflects how far out of support a test set lies, since it also counts how far beyond the boundary the observations fall. This standardized mean log excess (hereafter the excess) is
\begin{equation*}
    E = \frac{1}{\sigma}\,\frac{1}{|T|}\sum_{i \in T} \max\{0,\ \log_{10} D_i^2 - \log_{10} q_{95}\},
\end{equation*}
where $T$ is the test set, $q_{95}$ the 95th percentile of the leave-one-out training distances, and $\sigma$ the standard deviation of $\log_{10} D^2$ over those distances. An observation inside the support contributes zero, and the statistic is in units of the training spread, which makes it comparable across feature sets whose spreads differ with the number of features.

\autoref{tab:mahal-quartiles-full} and \autoref{tab:mahal-counterfactual-full} report both statistics for the quartile splits and the speed counterfactual. Both corroborate the simulation finding in that they do not point to out-of-support extrapolation as the main issue; indeed, there is little correspondence between the errors of the tree-based models and the statistics for the `Flexible' set. For reference, in the ship-type transfer of \autoref{sec:domain} the held-out fleet lies almost entirely beyond the support of the `Flexible' set: the share is 0.90 for dry bulk ships and 1.00 for container ships, with an excess of 0.7 and 2.9.

\begin{table}
    \centering
    \begin{threeparttable}
        \caption{Support of the quartile splits by feature set, dry bulk ships.}
        \label{tab:mahal-quartiles-full}
        
\begin{tabular}[t]{lrrrrrr}
\toprule
\multicolumn{1}{c}{ } & \multicolumn{2}{c}{Size} & \multicolumn{2}{c}{Distance} & \multicolumn{2}{c}{Relative Sea Speed Cubed} \\
\cmidrule(l{3pt}r{3pt}){2-3} \cmidrule(l{3pt}r{3pt}){4-5} \cmidrule(l{3pt}r{3pt}){6-7}
 & Q1 & Q4 & Q1 & Q4 & Q1 & Q4\\
\midrule
\addlinespace[0.3em]
\multicolumn{7}{l}{\textbf{Share beyond the training 95th percentile}}\\
\hspace{1em}Structural (constrained) & 0.15 & 0.02 & 0.00 & 0.86 & 0.02 & 0.10\\
\hspace{1em}Structural & 0.16 & 0.02 & 0.00 & 0.91 & 0.15 & 0.25\\
\hspace{1em}Flexible & 0.32 & 0.45 & 0.15 & 0.13 & 0.19 & 0.03\\
\addlinespace[0.3em]
\multicolumn{7}{l}{\textbf{Mean log excess (training SDs)}}\\
\hspace{1em}Structural (constrained) & 0.119 & 0.006 & 0.000 & 0.854 & 0.009 & 0.098\\
\hspace{1em}Structural & 0.155 & 0.013 & 0.000 & 1.074 & 0.085 & 0.286\\
\hspace{1em}Flexible & 0.320 & 0.278 & 0.173 & 0.099 & 0.186 & 0.019\\
\bottomrule
\end{tabular}

        \begin{tablenotes}
            \small
            \item Squared Mahalanobis distance of each held-out ship-year from the remaining three quartiles. The share is the fraction beyond the 95th percentile of the training distances; the standardized mean log excess is the mean $\log_{10}$ distance beyond that percentile, in standard deviation units.
        \end{tablenotes}
    \end{threeparttable}
\end{table}

\begin{table}
    \centering
    \begin{threeparttable}
        \caption{Support of the speed-adjusted counterfactual by feature set, dry bulk ships.}
        \label{tab:mahal-counterfactual-full}
        
\begin{tabular}[t]{lrrrrrr}
\toprule
\multicolumn{1}{c}{ } & \multicolumn{6}{c}{Speed adjustment} \\
\cmidrule(l{3pt}r{3pt}){2-7}
 & -20\% & -15\% & -10\% & -5\% & +5\% & +10\%\\
\midrule
\addlinespace[0.3em]
\multicolumn{7}{l}{\textbf{Share beyond the observed 95th percentile}}\\
\hspace{1em}Structural (constrained) & 0.11 & 0.06 & 0.03 & 0.03 & 0.10 & 0.22\\
\hspace{1em}Structural & 0.23 & 0.16 & 0.09 & 0.05 & 0.09 & 0.21\\
\hspace{1em}Flexible & 0.10 & 0.07 & 0.06 & 0.05 & 0.05 & 0.05\\
\addlinespace[0.3em]
\multicolumn{7}{l}{\textbf{Mean log excess (training SDs)}}\\
\hspace{1em}Structural (constrained) & 0.047 & 0.023 & 0.013 & 0.015 & 0.069 & 0.223\\
\hspace{1em}Structural & 0.148 & 0.092 & 0.050 & 0.028 & 0.066 & 0.213\\
\hspace{1em}Flexible & 0.079 & 0.056 & 0.045 & 0.040 & 0.037 & 0.039\\
\bottomrule
\end{tabular}

        \begin{tablenotes}
            \small
            \item Squared Mahalanobis distance of each speed-adjusted ship-year from the observed sample. The share is the fraction beyond the 95th percentile of the observed distances; the standardized mean log excess is the mean $\log_{10}$ distance beyond that percentile, in standard deviation units.
        \end{tablenotes}
    \end{threeparttable}
\end{table}

\subsection{Measurement Error}
\label{sec:structuralexplanation}

Measurement error affects every model that estimates the speed relation from the data. The tracking data report speed over ground, while drag depends on speed through water, so currents enter as error in speed. The calibration allows for weather only through a fixed correction factor, so wave resistance enters as error in the drag term. The hull state is unobserved, and fouling raises drag over time by an amount that varies across ships. The engine parameters, reference power and base specific fuel consumption, are assigned from standard lookup tables rather than measured, so the assigned value is the same for every ship of a type while the true value varies around it, an error of the Berkson rather than the classical kind. Finally, the annual features are sums over an hourly track that is matched to reported trips and interpolated where transmissions are missing, so trip matching and interpolation add error to distance and to every speed-derived sum.

Errors from trip matching and interpolation cannot be corrected. To gauge them we relax the distance criterion used to subset the training data, admitting ship-years whose observed distance is within $\pm$10\% of the reported distance rather than within $\pm$500 \ac{nm}, which admits more of this error in a larger and noisier sample. Results are very similar to the main ones in the key exercises (\appref{app:distance}).

We attempt to account for currents and waves using reanalysis data. Speed through water is obtained by subtracting reanalysis ocean currents from the observed speed over ground. For wave resistance, we employ a simplified model based on significant wave height and add the resulting power to the calibrated main engine power (\appref{app:weather}).\footnote{Given the uncertainty in the reanalysis data and the simplicity of the wave model, we do not consider this a strict improvement over the baseline approach.} Correcting for currents has almost no effect on any model. Correcting for waves attenuates the speed response of the calibration-anchored models, but it attenuates the theoretical benchmark by a similar amount, since the calibration is recomputed with the same correction, and it leaves the other models unchanged, so the divergence between the models and the benchmark is little changed. For the constrained structural fits, by contrast, the wave correction reduces the response to speed reductions.

The remaining sources, the hull state and the engine parameters, cannot be corrected, so we ask whether errors of these kinds could produce the attenuation we observe. We return to the simulated data of \autoref{sec:mlexplanation}, in which the outcome is the calibrated value, and introduce persistent, multiplicative ship-specific errors into the two calibrated quantities these sources affect: the drag constant (fouling) and the engine term (lookup-table parameters), both of the Berkson kind.\footnote{Classical error, in which the noise sits in the features rather than in the outcome, would work in the opposite direction, attenuating the estimated speed response rather than inflating it.} Each is drawn once per ship, independently of speed, with a standard deviation of 15\% ($\sigma_C$ for the drag constant and $\sigma_E$ for the engine term), the largest magnitude consistent with the fleet's dispersion in speed once the errors are allowed to drive it (\appref{app:errorsim}). We then refit three classes of \autoref{fig:elasticity} and repeat the counterfactual, with both errors present and with the drag error alone (\autoref{fig:sim_independent}). Independent errors of this size leave the estimated speed response of every class at the benchmark, whether both errors are present or the drag error alone.

\begin{figure}
    \centering
    \includegraphics[width=0.75\textwidth]{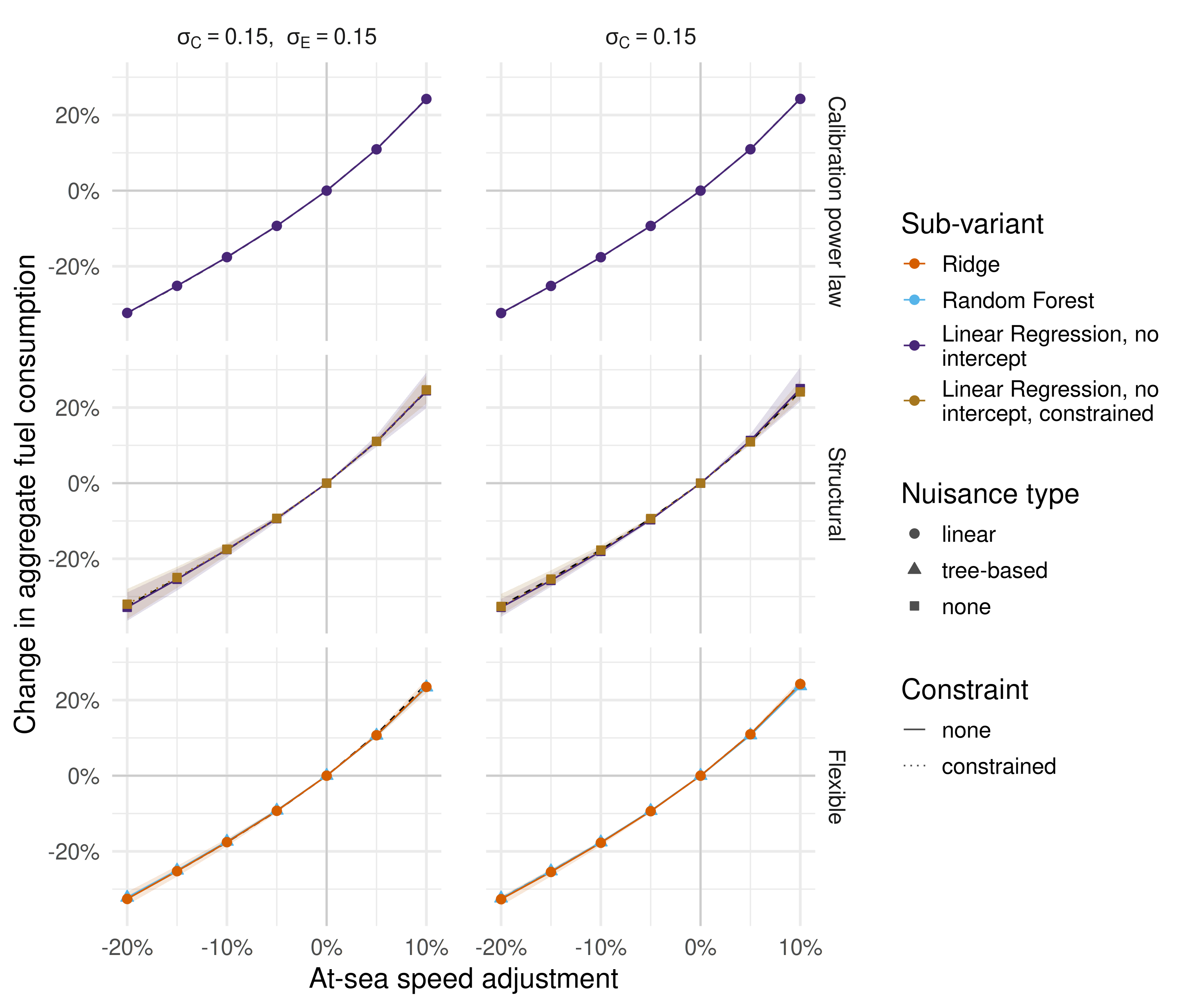}
    \caption{Speed-adjusted counterfactual exercise on simulated data with independent ship-specific errors in the drag constant ($\sigma_C$) and the engine term ($\sigma_E$), dry bulk ships, main engine consumption. Lines are medians across replications, bands their 10th to 90th percentiles. The dashed line is the calibration.}
    \label{fig:sim_independent}
\end{figure}

Errors of the Berkson kind cannot attenuate the estimated response unless they are correlated with speed, which is the last possibility we consider; the wave correction above removed one such source. Here the operator knows the ship's true efficiency and chooses speed accordingly, so we draw the errors conditional on observed speed, which strictly makes this endogeneity rather than measurement error. We choose the response strength from a model of cost-minimizing operator behavior, both with a theoretical speed-price elasticity ($\phi = 1$) and one damped to match an independent empirical estimate ($\phi = 0.243$) (\appref{app:errorsim}). At the empirical strength every fit but ridge regression stays close to the benchmark (\autoref{fig:sim_correlated}). At the theoretical strength the unrestricted structural fit loses most of its response to speed reductions while its response to increases stays near the benchmark; ridge regression predicts higher consumption at lower speeds and no change at higher ones; random forest is moderately attenuated on both sides; and the constrained fit is attenuated for reductions and overshoots for increases. With the drag error alone the attenuation is much weaker.

\begin{figure}
    \centering
    \includegraphics[width=0.95\textwidth]{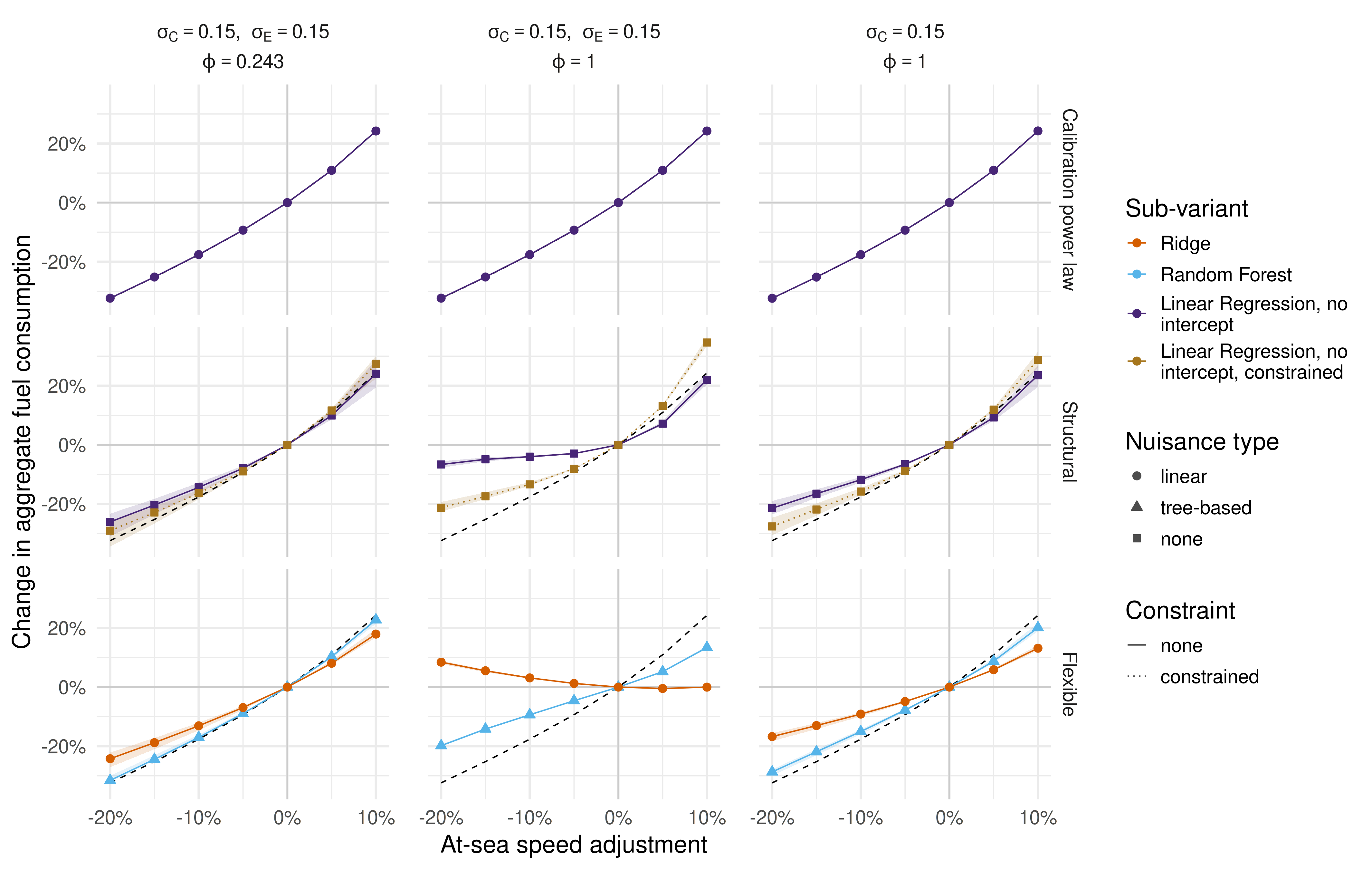}
    \caption{Speed-adjusted counterfactual exercise on simulated data with ship-specific errors correlated with observed speed through an operator response of strength $\phi$, dry bulk ships, main engine consumption. Lines are medians across replications, bands their 10th to 90th percentiles. The dashed line is the calibration.}
    \label{fig:sim_correlated}
\end{figure}

Persistent ship-specific error in the calibrated quantities therefore explains neither the attenuated response of the unrestricted structural fit nor the flat response of the flexible learners. Independent error leaves the response at the benchmark, and speed-correlated error attenuates it materially only at the theoretical response strength, four times the empirical estimate, and then with the opposite pattern to the one we observe: the simulated unrestricted fit is flat for reductions and steep for increases where the estimated one is steeper for reductions and flat for increases, and the simulated constrained fit is attenuated for reductions and overshoots for increases where the estimated one tracks the benchmark for reductions and falls short for increases.

\subsection{Nonlinearity}
\label{sec:nonlinearity}

The third candidate concerns the `Structural' class, whose regressors are the first, second, and third powers of the same relative engine load, summed over hours (\autoref{sec:structural}). Powers of one variable are highly collinear, so the design is ill-conditioned: small perturbations of the data move the coefficients a long way in offsetting directions. The standard remedy is an orthogonal basis \parencite[see][Section~5.5.3]{carroll2006measurement}. We keep the raw powers, since their coefficients are the ones the calibration fixes, and nothing is lost by doing so: a change of basis alters the coefficients but not the fitted values, so within the training support every basis predicts the same consumption.

The constraint of \autoref{sec:structural} restores the response for both fleets, and the reason is instructive. Main engine fuel per hour is the load $w$ times the efficiency curve of \eqref{eqn:sfc} relative to its base value, $w\,s(w)$ with $s(w) = \beta_1 + \beta_2 w + \beta_3 w^2$. Given the freedom, the data choose a falling efficiency curve: in the unrestricted dry bulk fit $s(w)$ falls steeply across the range of loads the ships use, which makes fuel per hour nearly proportional to speed rather than to its cube and fuel per mile nearly independent of speed, the flat response of \autoref{sec:elasticity}. The fit can do this because the slope of the efficiency curve, $s'(w) = \beta_2 + 2\beta_3 w$, and its curvature, $2\beta_3$, are governed by separate parameters. Setting the slope of the efficiency curve to zero at the design load imposes $\beta_2 = -1.56\,\beta_3$, so that
\begin{equation*}
    s'(w) = 2\beta_3\,(w - 0.78),
\end{equation*}
and the efficiency curve can then depart from its value at the design load only by $\beta_3 (w - 0.78)^2$, a departure whose size and sign are set by the single parameter $\beta_3$. The constrained fits choose $\beta_3$ small enough in both fleets that the efficiency curve stays close to flat over the loads the ships use, so fuel per hour stays close to the cube of speed, as in the benchmark. The data resist the tie: the unrestricted coefficients reject the restriction $\beta_2 + 1.56\,\beta_3 = 0$ for both fleets. The constrained fit reproduces the theory because the constraint leaves the data no way to flatten the response, not because the data support the calibrated curve.

Taken together, the candidates account for neither failure completely. Extrapolation contributes to the flat response of the tree-based learners, but modestly, and is not the main issue. Measurement error of the kinds and sizes the data admit explains neither the attenuated response of the unrestricted structural fit nor the flat response of the `Flexible' class, and the data errors we could check, from the inclusion criterion to currents and waves, change the results little. Nor is ill-conditioning the cause: given the freedom, the data prefer a falling efficiency curve to the calibrated one. Whether that preference reflects how consumption is reported or a real departure from the physics, these data cannot say. The constrained fit reproduces the benchmark because the constraint removes the freedom through which the flat response enters, which recommends it for counterfactual use without settling whether the calibrated curve is right.

\section{Illustrative Prediction Exercises}
\label{sec:MLprediction}

We close with applications that highlight the utility of accurate prediction models. First, since the official \ac{IMO} inventories are annual and global \parencite{faber2020fourth}, we disaggregate emissions along those two dimensions, by month and by origin--destination country pair, where the value of a model lies in resolving what the official figures average over. We then illustrate a welfare calculation for a change in ship speeds, which is a key margin to quantify accurately for assessing emissions policy. Throughout, the figures present a select set of illustrative models.\footnote{Within each class we show the sub-variant and learner with the lowest mean absolute \ac{RTE} across the held-out quartiles of \autoref{sec:quartiles}. For the constrained structural fit we use the variant without an intercept, whose speed response is closest to the calibration (\autoref{sec:comparison}).}

\subsection{Disaggregation}
\label{sec:disaggregation}

For both disaggregation exercises we apply the fitted models to all global \ac{AIS} movements rather than to \ac{EU} trips alone. For the temporal exercise, the data are aggregated by ship and month; for the spatial one, by ship, year, and origin--destination country pair. The models are those trained on the full \ac{EU}-trip sample of each fleet, so the prediction data extend the training data in both directions: monthly aggregation shortens distances relative to the annual training observations, and the global movements include trips the training sample never saw. Predicted fuel consumption is converted to CO$_2$ at a fixed factor of 3.114 tonnes per tonne of fuel, given the predominance of conventional fuel oil. Details of the aggregation, the assignment of trips to country pairs, and the treatment of infeasible observations are in \appref{app:disaggregation}.

In the monthly series (\autoref{fig:temporal_disagg}) the most striking feature is the difference in level, and it varies by ship type: for dry bulk ships the constrained structural fit lies below the other four models, while for container ships the calibration lies above the other four. Month-to-month fluctuations are otherwise similar across models. All highlight interesting patterns, including the dip at Chinese New Year each year and the extended slump and subsequent recovery for container ships (much less pronounced for bulk ships) coincident with COVID-19. Two exceptions stand out: in the spring of 2019 the calibration and the constrained structural fit dip while the pure \ac{ML} models do not, and from the start of 2021 onward the `Efficiency' model falls below the models it otherwise tracks.\footnote{Independent data confirm that ship speeds were elevated through 2021.} Both are consistent with the speed responses of the evaluation exercises: variation in speed rather than in distance traveled moves the calibration and the constrained structural fit, is damped in the unrestricted structural fit and the `Flexible' model, and leaves the `Efficiency' model unmoved. Our support diagnostic of \autoref{sec:mlexplanation} tells in advance whether a model is being asked to extrapolate, and here it suggests caution with respect to the two pure \ac{ML} models: a sizeable share of the ship-months lies beyond their training support, while the structural feature sets remain within it (\autoref{fig:support_monthly}).%

\begin{figure}
    \centering
    \includegraphics[width=0.9\textwidth]{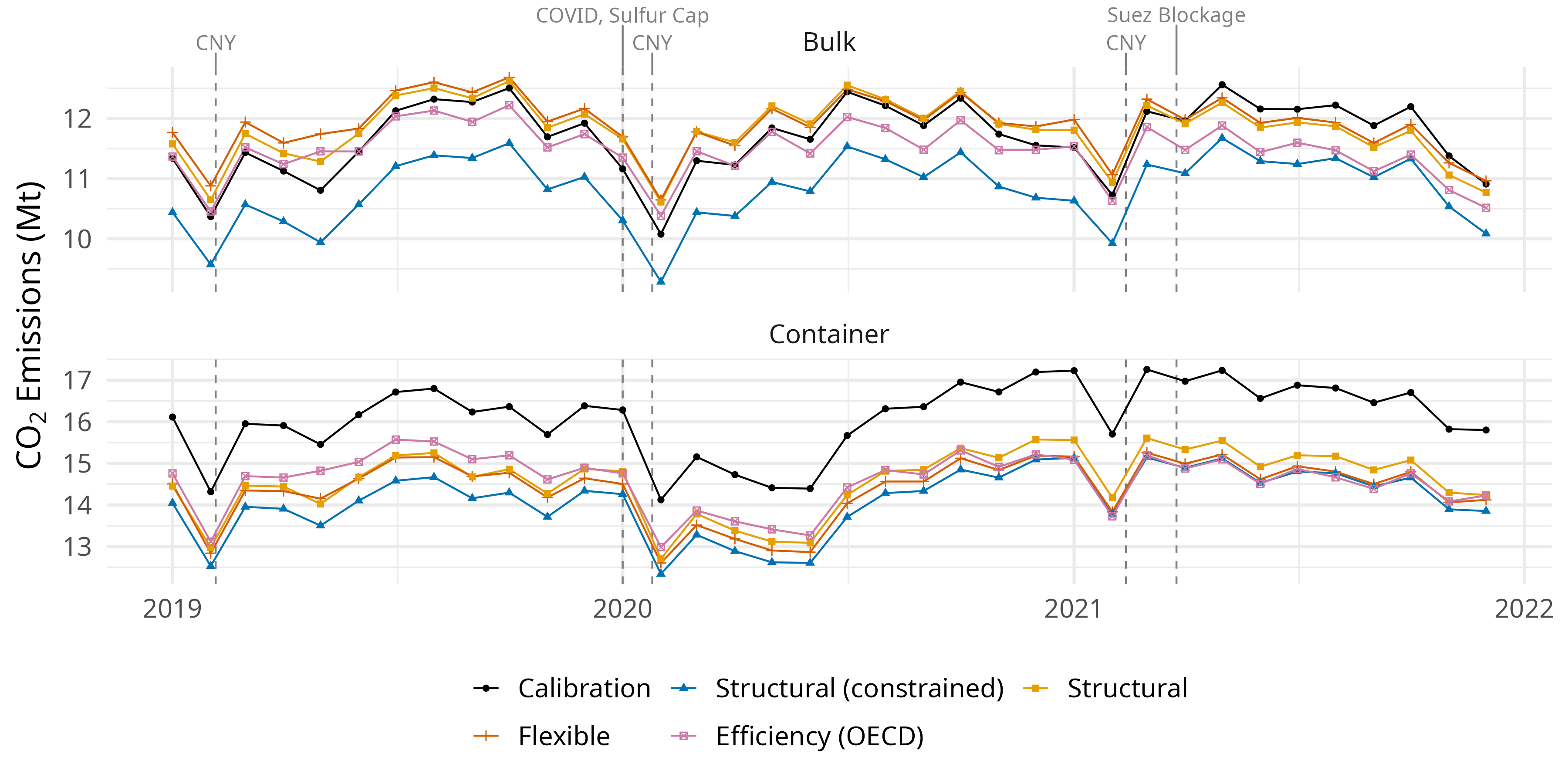}
    \caption{Predicted global monthly CO$_2$ emissions by model, with events marked (CNY = Chinese New Year, COVID = approximate onset of COVID-19, Sulfur Cap = implementation of the low-sulfur fuel regulation).}
    \label{fig:temporal_disagg}
\end{figure}

\begin{figure}
    \centering
    \includegraphics[width=0.9\textwidth]{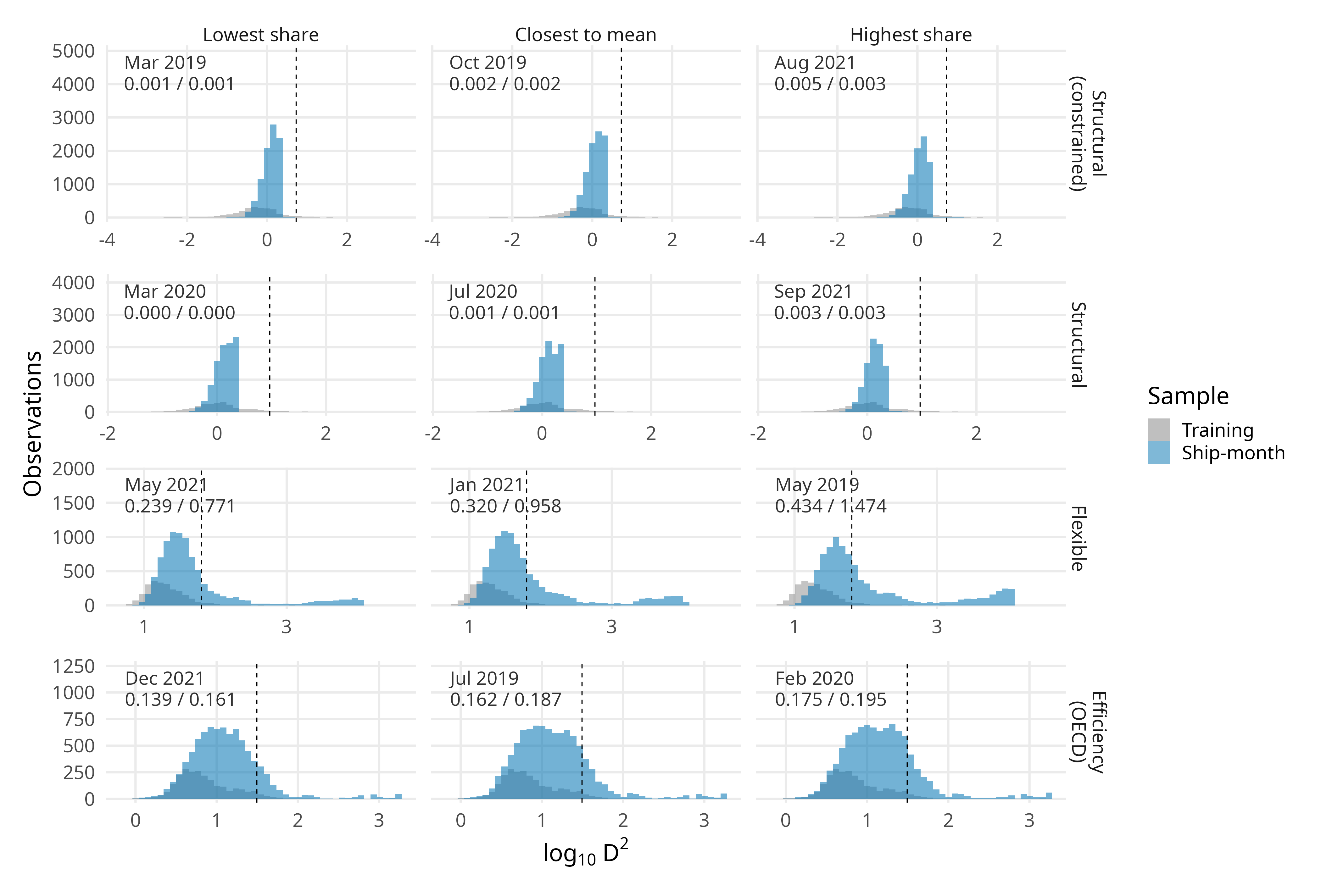}
    \caption{Mahalanobis distance of the ship-month prediction data from the annual training sample by feature set, dry bulk ships, for the months with the lowest, mean, and highest share beyond the training 95th percentile; the month, the share, and the standardized mean log excess are annotated in each panel. Gray: training observations; blue: ship-months. The dashed line is the 95th percentile of the training distances.}
    \label{fig:support_monthly}
\end{figure}

The route disaggregation (\autoref{fig:spatial_disagg}) is a demonstration of the patterns that disaggregation can reveal rather than a test of the models. Emissions from between-country trips are concentrated on a few country pairs: routes to and from China dominate both fleets, with Australia and Brazil the main partners for dry bulk ships and the United States for container ships. The models agree fairly well on both the ranking and the magnitudes, with the calibration again highest for container ships. The same caution for the pure \ac{ML} models applies here, as the route aggregates also lie partly beyond the support of their feature sets (\autoref{tab:mahal-route}).

\begin{figure}
    \centering
    \includegraphics[width=0.95\textwidth]{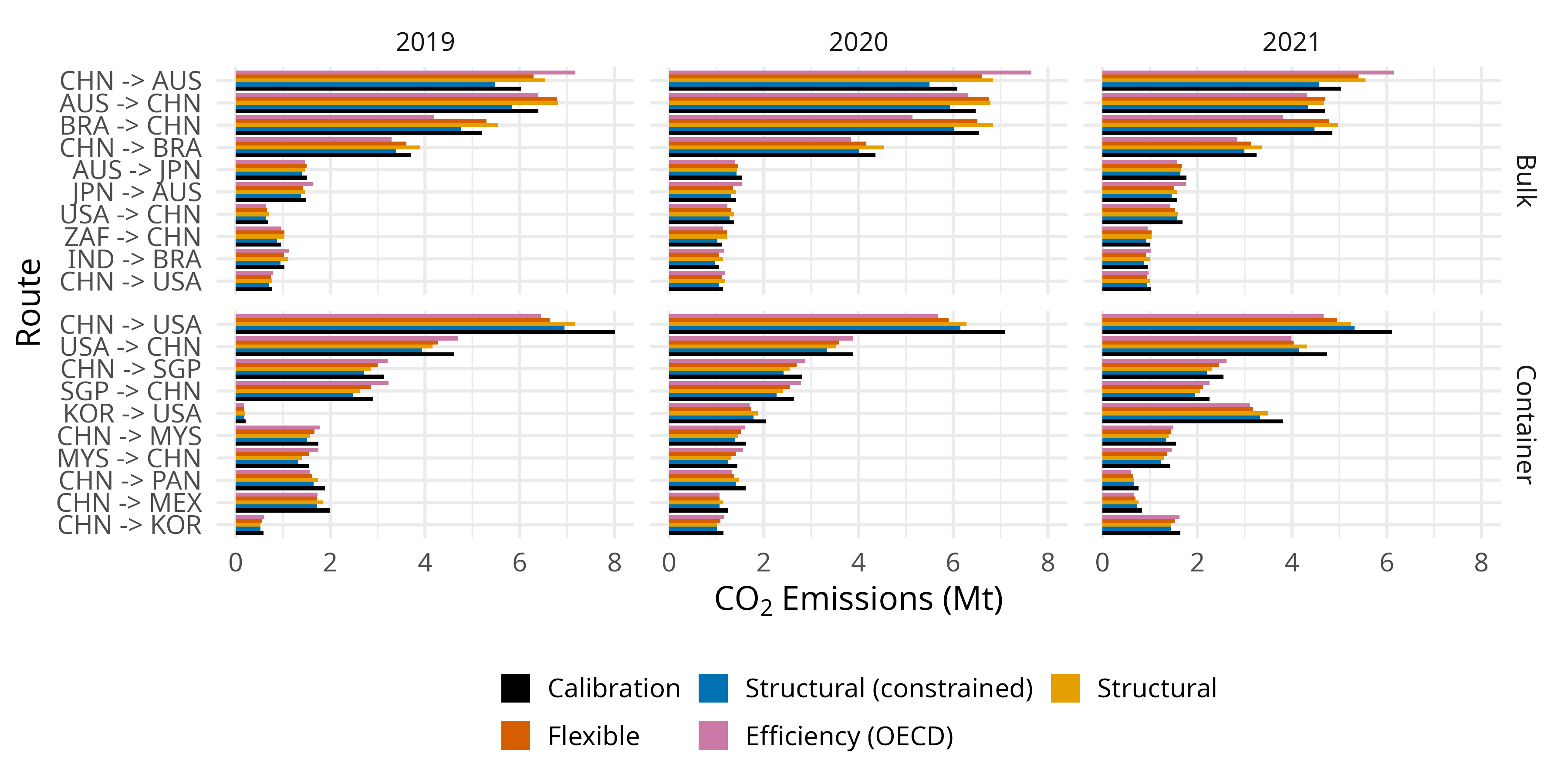}
    \caption{Predicted annual CO$_2$ emissions for the top between-country origin--destination pairs by model.}
    \label{fig:spatial_disagg}
\end{figure}

\begin{table}
    \centering
    \begin{adjustbox}{max width=\textwidth}
    \begin{threeparttable}
        \caption{Support of the route prediction data by feature set.}
        \label{tab:mahal-route}
        
\begin{tabular}[t]{lrrrrrrrr}
\toprule
\multicolumn{1}{c}{ } & \multicolumn{4}{c}{Share beyond the training 95th percentile} & \multicolumn{4}{c}{Mean log excess (training SDs)} \\
\cmidrule(l{3pt}r{3pt}){2-5} \cmidrule(l{3pt}r{3pt}){6-9}
 & Mean & SD & Min & Max & Mean & SD & Min & Max\\
\midrule
\addlinespace[0.3em]
\multicolumn{9}{l}{\textbf{Bulk}}\\
\hspace{1em}Structural (constrained) & 0.008 & 0.008 & 0.000 & 0.028 & 0.003 & 0.003 & 0.000 & 0.010\\
\hspace{1em}Structural & 0.016 & 0.018 & 0.001 & 0.056 & 0.005 & 0.007 & 0.000 & 0.021\\
\hspace{1em}Flexible & 0.169 & 0.091 & 0.065 & 0.371 & 0.243 & 0.151 & 0.111 & 0.615\\
\hspace{1em}Efficiency (OECD) & 0.236 & 0.161 & 0.025 & 0.547 & 0.307 & 0.343 & 0.010 & 1.120\\
\addlinespace[0.3em]
\multicolumn{9}{l}{\textbf{Container}}\\
\hspace{1em}Structural (constrained) & 0.004 & 0.007 & 0.000 & 0.023 & 0.003 & 0.007 & 0.000 & 0.023\\
\hspace{1em}Structural & 0.005 & 0.008 & 0.000 & 0.026 & 0.005 & 0.011 & 0.000 & 0.036\\
\hspace{1em}Flexible & 0.132 & 0.085 & 0.046 & 0.301 & 0.122 & 0.077 & 0.042 & 0.286\\
\hspace{1em}Efficiency (OECD) & 0.027 & 0.020 & 0.000 & 0.066 & 0.011 & 0.008 & 0.000 & 0.028\\
\bottomrule
\end{tabular}

        \begin{tablenotes}
            \small
            \item Squared Mahalanobis distance of each ship-year on a route from the training sample of its fleet; statistics over the routes of \autoref{fig:spatial_disagg}. Statistics as in \autoref{tab:mahal-quartiles-full}.
        \end{tablenotes}
    \end{threeparttable}
    \end{adjustbox}
\end{table}

\subsection{Welfare Assessment of Speed Policy}
\label{sec:welfare}

The speed counterfactual of \autoref{sec:elasticity} supports a simple welfare assessment of regulation that affects ship speeds. Regulation rarely mandates speed directly. It acts through incentives, and fuel, one of the largest costs of operating a ship, is the main one: a fuel levy has been discussed repeatedly at the \ac{IMO} without being adopted, while the efficiency requirements adopted for existing ships leave operators few technical options, so many comply by slowing down. We take the same counterfactual as before, a uniform proportional change in at-sea speed with distance held fixed, and price its three consequences.\footnote{The exercise holds distance, fleet, and trade fixed and so abstracts from general equilibrium effects; see \textcite{peters2024beached} for a dynamic, general equilibrium treatment.} Fuel is valued at 600 USD per tonne and CO$_2$ at 190 USD per tonne.\footnote{Fuel expenditure is in part a transfer between operators and fuel suppliers rather than a resource cost. We include it at its market price as the resource cost of the fuel burned. Excluding it would lower every model's total by the value of its predicted fuel saving, so the contrast between models below would remain while the levels would not.} Time at sea is valued at 0.6\% of cargo value per day, the lower bound of \textcite{hummels2013time}, which is an upper bound for dry bulk commodities since their estimates cover goods that travel by air as well as by sea, with cargo valued at 100 USD per tonne from coal and iron ore prices. The time cost is applied to every capacity tonne-hour at sea, laden or not, as an opportunity cost.

\autoref{fig:speed_welfare} shows the three components and their total for the same five models at speed changes of 5 and 10\% in either direction. The time cost is identical across models by construction, so every difference between them is a difference in predicted fuel consumption. The models that carry the theoretical speed response, the calibration and the constrained structural fit, predict that slowing down pays: the fuel and carbon savings exceed the time cost at both reductions, and speeding up costs. The models with a flat speed response predict the reverse, a large loss from slowing and a gain from speeding, because their predicted fuel savings are too small to offset the time cost. The sign of the assessment therefore turns on a model's speed response and not on its level accuracy: the unrestricted structural fit, among the least biased out of sample, reaches the same verdict as the pure \ac{ML} classes. Given the assumptions involved, we draw no conclusion about the policy itself; the exercise shows how far the choice of model propagates into a policy calculation.

\begin{figure}
    \centering
    \includegraphics[width=\textwidth]{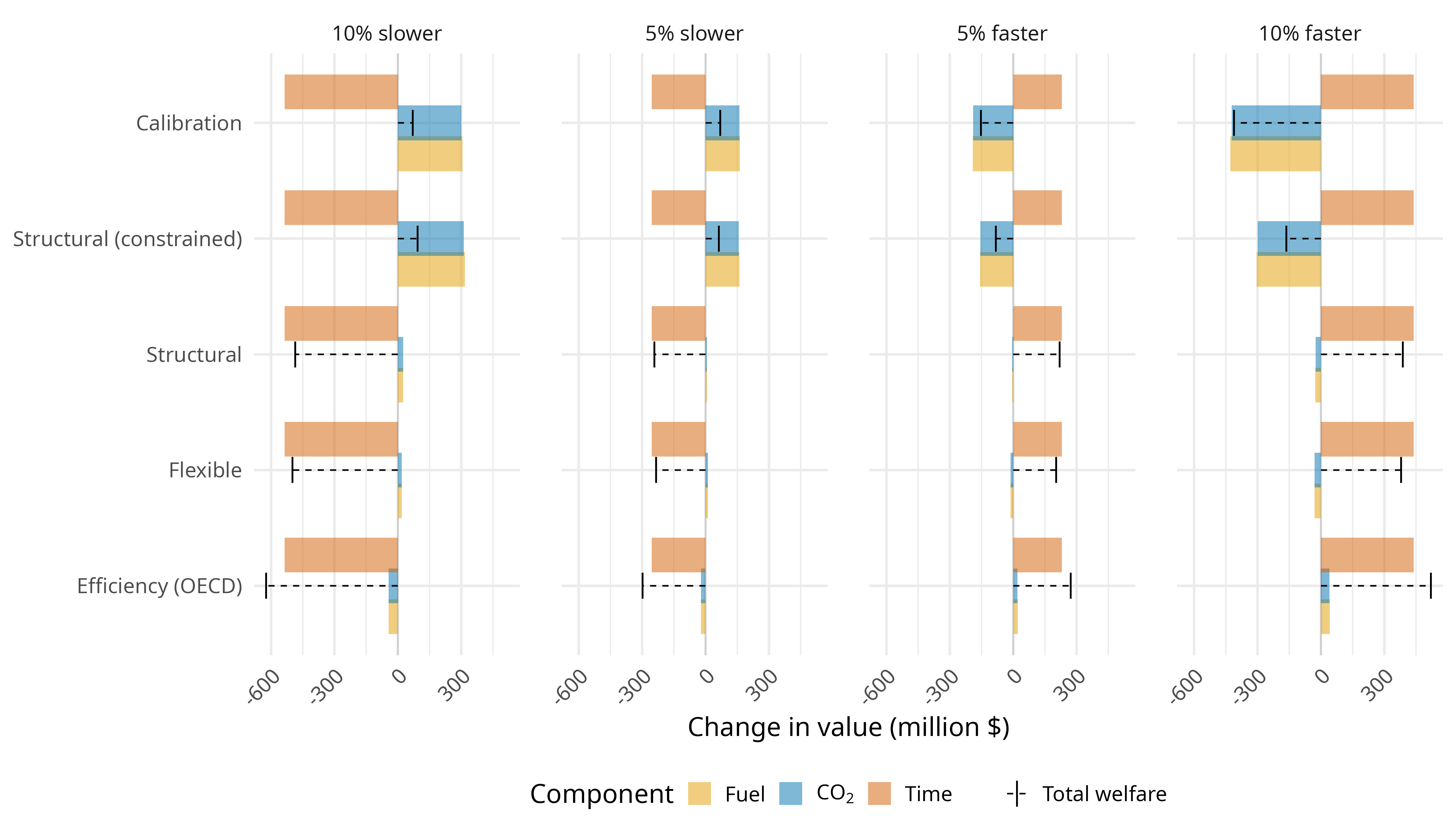}
    \caption{Welfare change under counterfactual at-sea speed adjustments for selected models, dry bulk ships, million USD. Bars are the fuel, CO$_2$, and time components; the black marker is the total.}
    \label{fig:speed_welfare}
\end{figure}

\section{Conclusion}
\label{sec:MLconclusion}

This paper develops and evaluates combinations of \ac{ML} and theory-based structure to predict ship fuel consumption and its response to counterfactual changes in sailing speed. Both quantities are essential for shipping emissions policy. We assess aggregate predictions against reported consumption and compare estimated speed responses with an engineering benchmark. Several of our models outperform those used for official estimates on at least one of these criteria, while performing comparably on the other.

In the 2021 test sample, every estimated model predicts aggregate fuel consumption within a few percent of reported totals, improving upon the calibrated engineering calculation. Yet the models imply markedly different speed responses. For dry bulk ships, pure \ac{ML} and the unrestricted structural regression yield responses far below the benchmark. Calibration-anchored models generally track the benchmark response closely, though they retain some aggregate overprediction. In the main specification, two simple models without \ac{ML} combine low aggregate prediction bias with responses close to the benchmark: the rescaled engineering calculation and the structural regression constrained at the engine's design load, both estimated without an intercept. This restriction sets the slope of the efficiency curve to zero at the design load rather than leaving its slope and curvature unconstrained.

The illustrative cost-benefit exercise demonstrates why the distinction between predictive accuracy and speed responsiveness matters. Among the five models considered, the calibrated engineering calculation and the constrained structural regression imply positive net benefits from the imposed speed reductions, whereas the selected pure \ac{ML} models and the unrestricted structural regression imply negative net benefits. Models with similarly small aggregate prediction errors can thus yield opposite cost-benefit conclusions under identical valuation assumptions.

We draw three methodological lessons. First, the response used in a counterfactual should be evaluated separately from statistical fit. Second, the separable hybrids studied here offer a transparent way to prevent a flexible correction from altering a specified structural property. Excluding speed-related variables from the correction leaves it unchanged when speed changes and other inputs are held fixed. Consequently, the multiplicative hybrid preserves the structural component's ship-level speed elasticity, while the additive hybrid preserves its speed derivative in levels. This property neither guarantees small bias in levels nor validates the preserved response, and transferring the correction to a counterfactual requires an additional invariance assumption. Third, distance in a model's feature space can flag departures from the training distribution. Our diagnostic serves as a warning indicator rather than a test of counterfactual validity: a small distance neither establishes that an observation lies within the training support nor guarantees agreement with the benchmark.

For emissions accounting, the same issues arise when models are applied beyond annual totals. Our replication of the \acs{OECD} fuel-efficiency model matches reported totals closely but exhibits essentially no speed response for dry bulk ships. Moving from ship-years to ship-months or to global route aggregates can likewise place prediction inputs far outside the training distribution. These applications call for assessing both the relevant input responses and the conditions under which predictions are transferred.

Several limitations qualify these findings. Estimation relies on three years of reported consumption for two vessel types on voyages to, from, and between European Economic Area ports, retaining ship-years whose tracked and reported distances agree within the specified tolerance. The engineering response is a maintained benchmark rather than an observed counterfactual. In addition, the unrestricted estimates reject the design-load restriction imposed by the constrained regression. Accounting for wave resistance reduces the constrained regression's response to speed reductions to roughly one-third of its uncorrected magnitude, while leaving its response to speed increases largely unchanged. Furthermore, supplementary exercises addressing extrapolation, measurement error, and structural nonlinearity do not fully explain departures from the benchmark. Finally, the cost-benefit exercise evaluates speed changes while holding distance, fleet composition, and trade fixed, capturing neither operators' behavioral responses to specific policies nor general equilibrium adjustments.

The central implication is that predictive performance and the credibility of a counterfactual response require separate assessment.

\newpage
\begin{singlespace}
  \raggedright
  \printbibliography[heading=bibintoc]

@misc{mepc2021guidelines,
  author = {{The Marine Environment Protection Committee}},
  title  = {Guidelines for Determining Minimum Propulsion Power to Maintain the Manoeuvrability of Ships in Adverse Conditions},
  shortauthor = {MEPC},
  institution = {International Maritime Organization},
  year        = {2021},
  month       = {7},
  url         = {https://wwwcdn.imo.org/localresources/en/OurWork/Environment/Documents/Air%20pollution/MEPC.1-Circ.850-Rev.3.pdf},
  urldate     = {2026-01-05}
}

@misc{cmems2024waverys,
  author       = {{E.U. Copernicus Marine Service Information}},
  title        = {Global Ocean Waves Reanalysis},
  organization = {Copernicus Marine Service},
  year         = {2024},
  date         = {2024-11-26},
  version      = {202411},
  doi          = {10.48670/moi-00022},
  url          = {https://doi.org/10.48670/moi-00022},
  urldate      = {2025-12-08},
  note         = {Product ID: GLOBAL\_MULTIYEAR\_WAV\_001\_032; DATASET\_ID: cmems\_mod\_glo\_wav\_my\_0.2deg\_PT3H-i}
}

@misc{cmems2023glorys,
  author       = {{E.U. Copernicus Marine Service Information}},
  title        = {Global Ocean Physics Reanalysis},
  organization = {Copernicus Marine Service},
  year         = {2023},
  month        = {11},
  version      = {202311},
  doi          = {10.48670/moi-00021},
  url          = {https://doi.org/10.48670/moi-00021},
  urldate      = {2025-12-01},
  note         = {Product ID: GLOBAL\_MULTIYEAR\_PHY\_001\_030 (GLORYS12V1); DATASET\_ID: cmems\_mod\_glo\_phy\_my\_0.083deg\_P1D-m}
}

@article{hummels2013time,
  title={Time as a trade barrier},
  author={Hummels, David L and Schaur, Georg},
  journal={American Economic Review},
  volume={103},
  number={7},
  pages={2935--2959},
  year={2013},
  publisher={American Economic Association}
}

@article{chernozhukov2018double,
    author = {Chernozhukov, Victor and Chetverikov, Denis and Demirer, Mert and Duflo, Esther and Hansen, Christian and Newey, Whitney and Robins, James},
    title = {Double/debiased machine learning for treatment and structural parameters},
    journal = {The Econometrics Journal},
    volume = {21},
    number = {1},
    pages = {C1-C68},
    year = {2018},
    month = {01},
    issn = {1368-4221},
    doi = {10.1111/ectj.12097},
    url = {https://doi.org/10.1111/ectj.12097},
    eprint = {https://academic.oup.com/ectj/article-pdf/21/1/C1/27684918/ectj00c1.pdf},
}

@techreport{man2018basic,
  title={Basic principles of ship propulsion},
  author={{MAN Energy Solutions}},
  shortauthor={{MAN}},
  year={2018}
}

@techreport{peters2024beached,
  title={Beached Assets? Capital Turnover and Emissions in Shipping},
  author={Peters, Allen},
  year={2024},
  type={Working Paper}
}

@misc{andrews2025transfer,
  title={The transfer performance of economic models},
  author={Andrews, Isaiah and Fudenberg, Drew and Lei, Lihua and Liang, Annie and Wu, Chaofeng},
  year={2025},
  eprint={2202.04796},
  eprinttype={arxiv},
  eprintclass={econ.TH},
  version={5},
  doi={10.48550/arXiv.2202.04796},
  note={arXiv working paper, version 5}
}

@article{eu2015regulation,
  title={{Regulation (EU) 2015/757 of the European Parliament and of the Council}},
  author={{European Union}},
  shortauthor={EU},
  url={https://eur-lex.europa.eu/legal-content/EN/TXT/?uri=CELEX:32015R0757},
  year={2015},
  month = {4},
  day = {29},
  journal={Official Journal of the European Union}
}

@article{clarke2023co2,
  title={{CO2} emissions from global shipping: A new experimental database},
  author={Clarke, Daniel and Chan, Philip and Dequeljoe, Matthew and Kim, Yuri and Barahona, Sarah},
  year={2023},
  publisher={OECD}
}

@techreport{faber2020fourth,
	author = {Faber, J and Hanayama, S and Zhang, S and Pereda, P and Comer, B and Hauerhof, E and Yuan, H},
	institution = {International Maritime Organization},
	title = {{Fourth IMO Greenhouse Gas Study 2020}},
	url = {https://www.imo.org/en/OurWork/Environment/Pages/Fourth-IMO-Greenhouse-Gas-Study-2020.aspx},
	year = {2020}}

@article{mullainathan2017machine,
	author = {Mullainathan, Sendhil and Spiess, Jann},
	title = {Machine Learning: An Applied Econometric Approach},
	journal = {Journal of Economic Perspectives},
	volume = {31},
	number = {2},
	pages = {87--106},
	year = {2017}}

@article{kleinberg2015prediction,
	author = {Kleinberg, Jon and Ludwig, Jens and Mullainathan, Sendhil and Obermeyer, Ziad},
	title = {Prediction Policy Problems},
	journal = {American Economic Review},
	volume = {105},
	number = {5},
	pages = {491--495},
	year = {2015}}

@article{adland2020optimal,
	author = {Adland, Roar and Cariou, Pierre and Wolff, Fran{\c{c}}ois-Charles},
	title = {Optimal Ship Speed and the Cubic Law Revisited: {Empirical} Evidence from an Oil Tanker Fleet},
	journal = {Transportation Research Part E: Logistics and Transportation Review},
	volume = {140},
	pages = {101972},
	year = {2020}}

@article{ollila2024effect,
	author = {Ollila, Saana and Merkel, Axel and B{\"o}rjesson, Maria},
	title = {Effect of Fuel Prices on Sailing Speeds in Short-Sea Shipping},
	journal = {Applied Economics},
	volume = {57},
	number = {33},
	pages = {4938--4954},
	doi = {10.1080/00036846.2024.2364106},
	year = {2024}}

@article{almeida2023machine,
	author = {Almeida, Caio and Fan, Jianqing and Freire, Gustavo and Tang, Francesca},
	title = {Can a Machine Correct Option Pricing Models?},
	journal = {Journal of Business \& Economic Statistics},
	volume = {41},
	number = {3},
	pages = {995--1009},
	doi = {10.1080/07350015.2022.2099871},
	year = {2023}}

@article{prest2023rcts,
	author = {Prest, Brian C. and Wichman, Casey J. and Palmer, Karen},
	title = {{RCTs} against the Machine: Can Machine Learning Prediction Methods Recover Experimental Treatment Effects?},
	journal = {Journal of the Association of Environmental and Resource Economists},
	volume = {10},
	number = {5},
	pages = {1231--1264},
	doi = {10.1086/724518},
	year = {2023}}

@article{burlig2020machine,
	author = {Burlig, Fiona and Knittel, Christopher and Rapson, David and Reguant, Mar and Wolfram, Catherine},
	title = {Machine Learning from Schools about Energy Efficiency},
	journal = {Journal of the Association of Environmental and Resource Economists},
	volume = {7},
	number = {6},
	pages = {1181--1217},
	doi = {10.1086/710606},
	year = {2020}}

@article{abrell2022effective,
	author = {Abrell, Jan and Kosch, Mirjam and Rausch, Sebastian},
	title = {How Effective Is Carbon Pricing? {A} Machine Learning Approach to Policy Evaluation},
	journal = {Journal of Environmental Economics and Management},
	volume = {112},
	pages = {102589},
	doi = {10.1016/j.jeem.2021.102589},
	year = {2022}}

@article{kalouptsidi2014time,
	author = {Kalouptsidi, Myrto},
	title = {Time to Build and Fluctuations in Bulk Shipping},
	journal = {American Economic Review},
	volume = {104},
	number = {2},
	pages = {564--608},
	doi = {10.1257/aer.104.2.564},
	year = {2014}}

@article{klotz2022local,
	author = {Klotz, Richard and Berazneva, Julia},
	title = {Local Standards, Behavioral Adjustments, and Welfare: Evaluating {California}'s Ocean-Going Vessel Fuel Rule},
	journal = {Journal of the Association of Environmental and Resource Economists},
	volume = {9},
	number = {3},
	pages = {383--424},
	doi = {10.1086/717585},
	year = {2022}}

@article{borenstein2019expecting,
	author = {Borenstein, Severin and Bushnell, James and Wolak, Frank A. and Zaragoza-Watkins, Matthew},
	title = {Expecting the Unexpected: {Emissions} Uncertainty and Environmental Market Design},
	journal = {American Economic Review},
	volume = {109},
	number = {11},
	pages = {3953--3977},
	doi = {10.1257/aer.20161218},
	year = {2019}}

@article{fowlie2018does,
	author = {Fowlie, Meredith and Greenstone, Michael and Wolfram, Catherine},
	title = {Do Energy Efficiency Investments Deliver? {Evidence} from the {Weatherization Assistance Program}},
	journal = {The Quarterly Journal of Economics},
	volume = {133},
	number = {3},
	pages = {1597--1644},
	doi = {10.1093/qje/qjy005},
	year = {2018}}

@article{christensen2023performance,
	author = {Christensen, Peter and Francisco, Paul and Myers, Erica and Souza, Mateus},
	title = {Decomposing the Wedge between Projected and Realized Returns in Energy Efficiency Programs},
	journal = {Review of Economics and Statistics},
	volume = {105},
	number = {4},
	pages = {798--817},
	doi = {10.1162/rest_a_01087},
	year = {2023}}

@article{nevo2010taking,
	author = {Nevo, Aviv and Whinston, Michael D.},
	title = {Taking the Dogma out of Econometrics: Structural Modeling and Credible Inference},
	journal = {Journal of Economic Perspectives},
	volume = {24},
	number = {2},
	pages = {69--82},
	doi = {10.1257/jep.24.2.69},
	year = {2010}}

@article{robinson1988root,
	author = {Robinson, Peter M.},
	title = {Root-{N}-Consistent Semiparametric Regression},
	journal = {Econometrica},
	year = {1988},
	volume = {56},
	number = {4},
	pages = {931--954}}

@techreport{chen2026teaching,
	author = {Chen, Hui and Cheng, Yuhan and Liu, Yanchu and Tang, Ke},
	title = {Teaching Economics to the Machines},
	institution = {National Bureau of Economic Research},
	year = {2026},
	number = {w34713}}

@inproceedings{prokhorenkova2018catboost,
	author = {Prokhorenkova, Liudmila and Gusev, Gleb and Vorobev, Aleksandr and Dorogush, Anna Veronika and Gulin, Andrey},
	title = {{CatBoost}: Unbiased Boosting with Categorical Features},
	booktitle = {Advances in Neural Information Processing Systems},
	year = {2018},
	volume = {31}}

@misc{eu2023directive959,
  author = {{European Union}},
  title = {Directive ({EU}) 2023/959 amending {Directive} 2003/87/{EC} establishing a system for greenhouse gas emission allowance trading within the {Union}},
  year = {2023},
  url = {http://data.europa.eu/eli/dir/2023/959/oj},
  note = {Official Journal of the European Union, L~130, 16 May 2023, pp.~134--202}
}

@misc{imo2023strategy,
  author = {{IMO}},
  title = {{2023 IMO Strategy on Reduction of GHG Emissions from Ships}},
  year = {2023},
  note = {Resolution MEPC.377(80), adopted 7 July 2023},
  url = {https://wwwcdn.imo.org/localresources/en/KnowledgeCentre/IndexofIMOResolutions/MEPCDocuments/MEPC.377(80).pdf}
}

@misc{unctad2017review,
  author = {{UNCTAD}},
  title = {Review of Maritime Transport 2017},
  year = {2017},
  note = {United Nations, Geneva}
}

@article{duan1983smearing,
  title = {Smearing estimate: a nonparametric retransformation method},
  author = {Duan, Naihua},
  journal = {Journal of the American Statistical Association},
  volume = {78},
  number = {383},
  pages = {605--610},
  year = {1983}
}

@article{silva2006log,
  title = {The log of gravity},
  author = {{Santos Silva}, J. M. C. and Tenreyro, Silvana},
  journal = {The Review of Economics and Statistics},
  volume = {88},
  number = {4},
  pages = {641--658},
  year = {2006}
}

@article{gneiting2011making,
  title = {Making and evaluating point forecasts},
  author = {Gneiting, Tilmann},
  journal = {Journal of the American Statistical Association},
  volume = {106},
  number = {494},
  pages = {746--762},
  year = {2011},
  doi = {10.1198/jasa.2011.r10138}
}

@inproceedings{lee2018simple,
  author={Lee, Kimin and Lee, Kibok and Lee, Honglak and Shin, Jinwoo},
  title={A simple unified framework for detecting out-of-distribution samples and adversarial attacks},
  booktitle={Advances in Neural Information Processing Systems},
  editor={Bengio, Samy and Wallach, Hanna and Larochelle, Hugo and Grauman, Kristen and Cesa-Bianchi, Nicol{\`o} and Garnett, Roman},
  volume={31},
  year={2018},
  publisher={Curran Associates},
  url={https://proceedings.neurips.cc/paper_files/paper/2018/hash/abdeb6f575ac5c6676b747bca8d09cc2-Abstract.html}
}

@book{hastie2009elements,
  author={Hastie, Trevor and Tibshirani, Robert and Friedman, Jerome},
  title={The Elements of Statistical Learning: Data Mining, Inference, and Prediction},
  edition={2},
  publisher={Springer},
  address={New York},
  year={2009},
  series={Springer Series in Statistics},
  doi={10.1007/978-0-387-84858-7}
}

@inproceedings{xu2021how,
  author={Xu, Keyulu and Zhang, Mozhi and Li, Jingling and Du, Simon S. and Kawarabayashi, Ken-ichi and Jegelka, Stefanie},
  title={How Neural Networks Extrapolate: From Feedforward to Graph Neural Networks},
  booktitle={International Conference on Learning Representations},
  year={2021},
  url={https://openreview.net/forum?id=UH-cmocLJC}
}

@book{carroll2006measurement,
  author={Carroll, Raymond J. and Ruppert, David and Stefanski, Leonard A. and Crainiceanu, Ciprian M.},
  title={Measurement Error in Nonlinear Models: A Modern Perspective},
  edition={2},
  publisher={Chapman and Hall/CRC},
  address={Boca Raton, FL},
  year={2006},
  series={Monographs on Statistics and Applied Probability},
  number={105},
  doi={10.1201/9781420010138}
}

@misc{oecd2025maritime,
  author       = {{OECD}},
  title        = {Maritime Transport {CO$_2$} Emissions: Methodology Update},
  year         = {2025},
  month        = may,
  howpublished = {Methodological note for the 2025-Q1 release of the OECD maritime transport CO$_2$ emissions database},
  url          = {https://www.oecd.org/content/dam/oecd/en/data/methods/eea/OECD-maritime-transport-CO2-emissions-methodology.pdf}
}
\end{singlespace}

\clearpage
\appendix
\section*{Online Appendix}
\setcounter{figure}{0}
\setcounter{table}{0}
\renewcommand{\thefigure}{A.\arabic{figure}}
\renewcommand{\thetable}{A.\arabic{table}}
\renewcommand{\theHfigure}{appendix.\arabic{figure}}
\renewcommand{\theHtable}{appendix.\arabic{table}}
\setcounter{section}{1}

\subsection{Engineering Calculation Details}\label{app:calculation}

\paragraph*{Operating Phase Assignment.}
To assign auxiliary engine and boiler power values, we use a slightly simplified version of the \ac{IMO}'s procedure \parencite[see][]{faber2020fourth}. For simplicity, we combine the `at berth' and `anchored' operational phases. At speeds up to and including 3 knots, we consider a ship anchored, regardless of its distance from the coast. Between 3 and 5 knots, the ship is maneuvering if it is within 5 \ac{nm} of the coast. At further distances and/or higher speeds, the ship is considered to be at sea. Power values are then taken as per \textcite[68]{faber2020fourth}.

\paragraph*{Hourly Observation Assumption.}
We assume operating conditions are constant over each hour (the frequency of the interpolated observational data) so that hourly fuel consumption in \eqref{eqn:fc_inst} is obtained by multiplying instantaneous fuel consumption by one hour. This is a standard approximation in the maritime emissions literature \parencite{faber2020fourth}.

\paragraph*{Parameter Values.}
All parameter values required for the calculation are taken from \textcite{faber2020fourth}, including $\mathrm{SFC}_{\text{base}}$ (Table 19) and $C_i$, which combines correction factors for weather and hull fouling.

\paragraph*{Calibrated and Reported Consumption.}
\autoref{tab:calstats} summarizes reported and calibrated fuel consumption of dry bulk ships by year for the training and test samples, and \autoref{fig:twoway_cal} plots the two against each other.

\begin{figure}[H]
    \centering
    \includegraphics[width=0.55\textwidth]{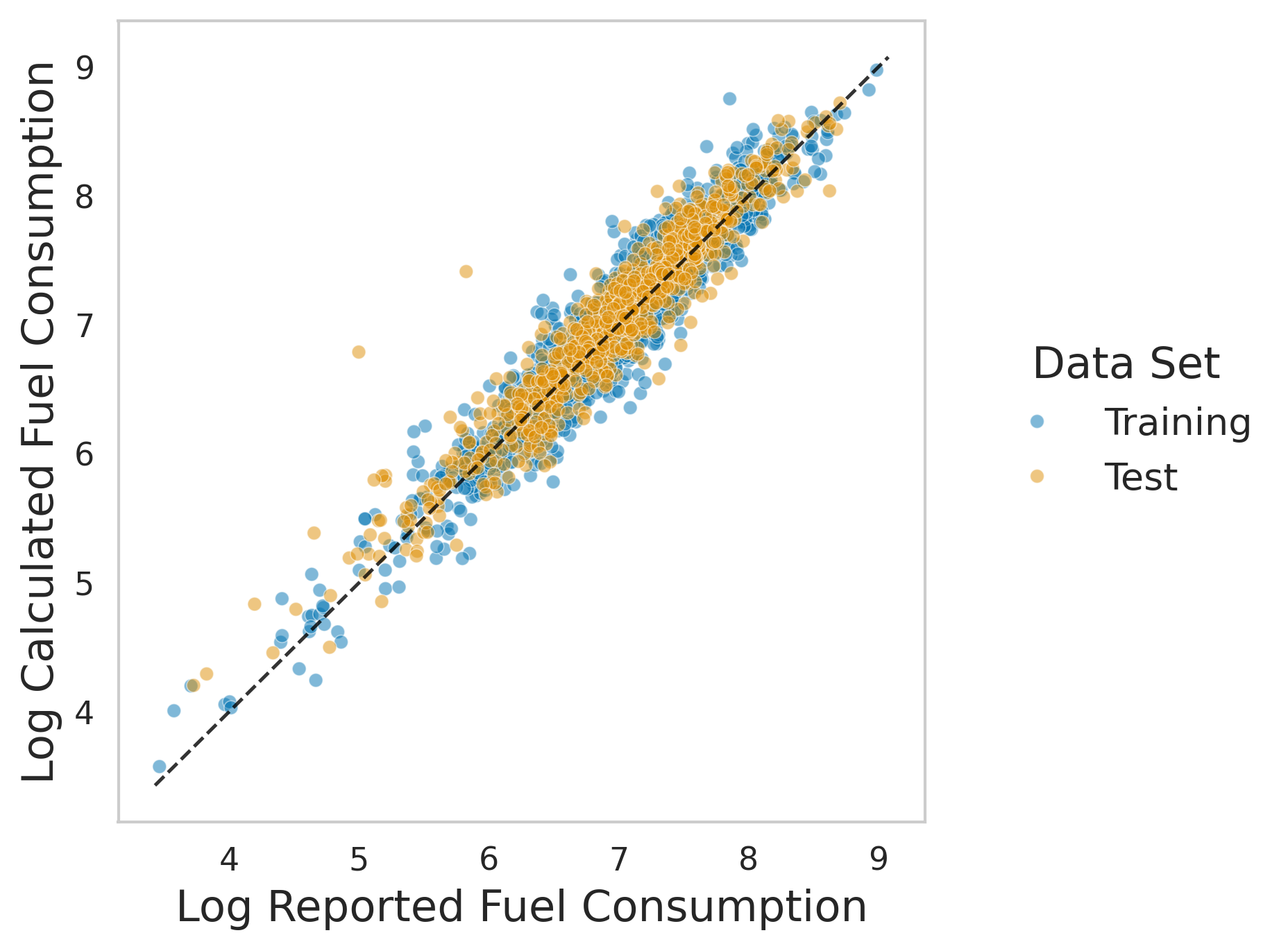}
    \caption{Calibrated versus reported fuel consumption, dry bulk ships.}
    \label{fig:twoway_cal}
\end{figure}

\begin{table}[H]
    \centering
    \begin{threeparttable}
        \caption{Reported and calibrated fuel consumption by year, dry bulk ships.}
        \label{tab:calstats}
        
\begin{tabular}[t]{>{\centering\arraybackslash}p{4em}>{\raggedleft\arraybackslash}p{4em}>{\raggedleft\arraybackslash}p{4em}>{\raggedleft\arraybackslash}p{4em}cr}
\toprule
\multicolumn{2}{c}{ } & \multicolumn{2}{c}{Reported} & \multicolumn{2}{c}{Calibrated} \\
\cmidrule(l{3pt}r{3pt}){3-4} \cmidrule(l{3pt}r{3pt}){5-6}
\multicolumn{1}{>{\centering\arraybackslash}p{4em}}{Year} & \multicolumn{1}{>{\centering\arraybackslash}p{4em}}{count} & \multicolumn{1}{>{\centering\arraybackslash}p{4em}}{mean (t)} & \multicolumn{1}{>{\centering\arraybackslash}p{4em}}{sd (t)} & \multicolumn{1}{c}{mean (t)} & \multicolumn{1}{c}{sd (t)}\\
\midrule
2019 & 819 & 1375 & 955 & 1408 & 985\\
2020 & 773 & 1285 & 876 & 1372 & 955\\
2021 & 851 & 1337 & 913 & 1424 & 976\\
\bottomrule
\end{tabular}

    \end{threeparttable}
\end{table}

\subsection{Data Cleaning Details}\label{app:MLDataCleaning}

\paragraph*{Raw Data.}
\ac{AIS} transmissions are recorded by both land- and satellite-based receivers, and our data set contains the last transmission in each hour for each transceiver identifier (\ac{MMSI}).

\paragraph*{Ship Identifier Assignment.}
Raw \ac{AIS} data comprises dynamic position reports (message types 1, 2, 3, and 27) containing ship movements, and static voyage messages (type 5) from which we extract the ship's \ac{IMO} number and draft. The \ac{IMO} number is a unique identifier for a ship, but the \ac{MMSI} identifies a transceiver, which may be transferred between ships, be spoofed, or otherwise change over time. To ease computational requirements, we process dry bulk and container ship data separately, relying on the \ac{WFR} to identify the ship type for each \ac{IMO} number. For each ship type, we retain all \ac{AIS} observations (static and dynamic) with an \ac{MMSI} that is associated at any point with an \ac{IMO} number of a ship of the given type.\footnote{We consider dry bulk to include all ships over 10,000 \acs{DWT} categorized in the \acs{WFR} under Ore Carrier, Bulk Carrier, Chip Carrier, Open Hatch Carrier, Forest Product Carrier, Aggregates Carrier, Cement Carrier, Nickel Carrier, and Slurry Carrier.} Since \ac{IMO} numbers are entered in \ac{AIS} data manually, they are often incorrect or missing. We validate them by checksum and discard ships that cannot be matched to the \ac{WFR}. Next, we split trajectories into segments where an \ac{MMSI}-ship association may have changed, as indicated by a physically infeasible movement: an implied speed above 25 knots (35 knots for container ships), or a distance between consecutive observations exceeding eight hours of travel at that speed. Throughout, we employ the haversine formula to calculate distances between location observations. Implied speed is computed using the distance and elapsed time between observations, which is distinct from the instantaneous speed provided in dynamic messages. We merge draft and \ac{IMO} number from each static message into the subsequent dynamic message (within each \ac{MMSI}) and drop the static messages. Any \ac{IMO} numbers missing after this step are filled with the modal \ac{IMO} number within each trajectory segment, resulting in a data set with an \ac{IMO} number for each observation. We drop any ship with fewer than ten observations, fill missing draft observations within each ship, and bound draft by the ship's maximum draft from the \ac{WFR}. Observations with a reported speed above 1.3 times the ship's rated speed are dropped rather than clipped to the bound as in \textcite{faber2020fourth}: most are the \ac{AIS} `not available' sentinel transmitted by stationary vessels, and clipping would concentrate observations at the bound, where the cubic load term is most sensitive.

\paragraph*{Trajectory Cleaning.}
Within this \ac{IMO}-labeled data set, we still observe physically infeasible jumps in location, either as single anomalous observations or as a ship appearing to jump to a new location, pursue a feasible trajectory for some time, and then jump back. We first drop single observations whose implied speed exceeds 1.3 times the ship's rated speed, then split each ship's trajectory wherever the implied speed between the remaining consecutive observations exceeds the same threshold. We recombine even segments together and odd segments together, with unique identifiers.

A second important error is missing dynamic data, i.e. hours for which there is no observation for a ship. We interpolate these missing data points as per the \ac{IMO}'s strategy \parencite{faber2020fourth}, creating an observation for each missing hour, assuming a constant speed and using spherical linear interpolation to impute location coordinates. Speed is interpolated as the distance between consecutive locations divided by the time difference, capped at the same 1.3 multiple of rated speed, and the previous draft value is assigned.

\paragraph*{Port Call Detection.}
We define a port call as when a ship does not exceed a speed of one knot for at least 12 hours and is within 5 km from a coast. We exclude port calls detected near the entrances to the Panama and Suez Canals where ships frequently queue before passing through. Our detection strategy may flag stops for refueling or other purposes, while the \ac{MRV} includes only stops where cargo is loaded or unloaded. However, this is not reliably observable from the tracking data \parencite{eu2015regulation}. For inclusion in the \ac{MRV} data, ports in the United Kingdom ceased to be included as of the beginning of 2021, and we adjust accordingly.

\paragraph*{Rated Speed Correction.}
The reference speed in \eqref{eqn:admiralty} is the ship's rated speed, taken from the service speed field of the \ac{WFR}. Listed values are occasionally erroneous, so we check each for internal consistency with the register's rated power and quoted fuel consumption at that speed. Where the engine load implied by the quote falls well below the design point, we replace the listed speed with the value consistent with the design load. Where no consumption quote is available, we use a register-side regression prediction instead when both the prediction and the listed operational speed indicate the listed value is understated. This corrects 1.6\% of dry bulk and 4.1\% of container ship register entries, by median factors of 1.15 and 1.24 respectively.

\subsection{Weather Correction}\label{app:weather}

To correct for weather effects, we construct speeds corrected for ocean currents and main engine power corrected for wave height,\footnote{While there are other potentially important factors, such as wind resistance, we focus on waves as this is the most important factor for dry bulk ships \parencite{man2018basic}.} drawing on reanalysis data from the Copernicus Marine Environment Monitoring Service, which reconstructs historical conditions by combining physics models with observational data. The ocean current data is daily, at 1/12$\degree$ resolution \parencite{cmems2023glorys}, and wave height data is 3-hourly at 1/5$\degree$ resolution \parencite{cmems2024waverys}. We interpolate this data to each \ac{AIS} observation. Speed through water is calculated by vector arithmetic of the current velocity from the \ac{AIS} speed over ground (heading is calculated from the preceding location observation). The effects of waves are more complex, depending on wave frequency, hull dimensions and shape, and speed regime. As a tractable approximation, we take the added resistance in head waves from \textcite{mepc2021guidelines}, in newtons,

\begin{equation*}
R_{\textit{wave}}=
\begin{cases}
1336\,(5.3+v)\left(\dfrac{B\,t}{L}\right)^{0.75} H^{2},
& \text{if wave encounter angle} \le 30\degree,\\
0 & \text{otherwise,}
\end{cases}
\end{equation*}

where $v$ is speed through water in m/s, $t$ is draft, $B$ is the ship width, $L$ is the length between perpendiculars, and $H$ is the significant wave height, all in meters. The corresponding power, $W_{\textit{wave}} = R_{\textit{wave}}\,v$, is required for traveling into waves incident at 30$\degree$ or less and is proportional to the square of significant wave height. This value is added to the main engine power from \eqref{eqn:admiralty}, with the weather correction factor set to one to avoid double-counting.
As the main concern is attenuation of the speed relation, we compare the corrected to the uncorrected speed counterfactual of \autoref{sec:elasticity}. Results using speed through water are virtually indistinguishable from those using speed over ground. Additionally accounting for wave resistance brings the calibration closer to reported consumption, with an \ac{RTE} of $-4.6\%$ on the test set against 6.5\% without it. It also attenuates the speed elasticity, but the degree of attenuation of the `Flexible' class and the unrestricted structural fits relative to the corresponding calibration remains similar (\autoref{fig:elasticity_weather}). The constrained structural fits are the exception: under the wave correction their response to speed reductions falls to about a third of its uncorrected size while their response to increases is little changed, so the correction changes the shape of the constrained fit rather than its overall attenuation.
\begin{figure}[H]
    \centering
    \includegraphics[width=\textwidth]{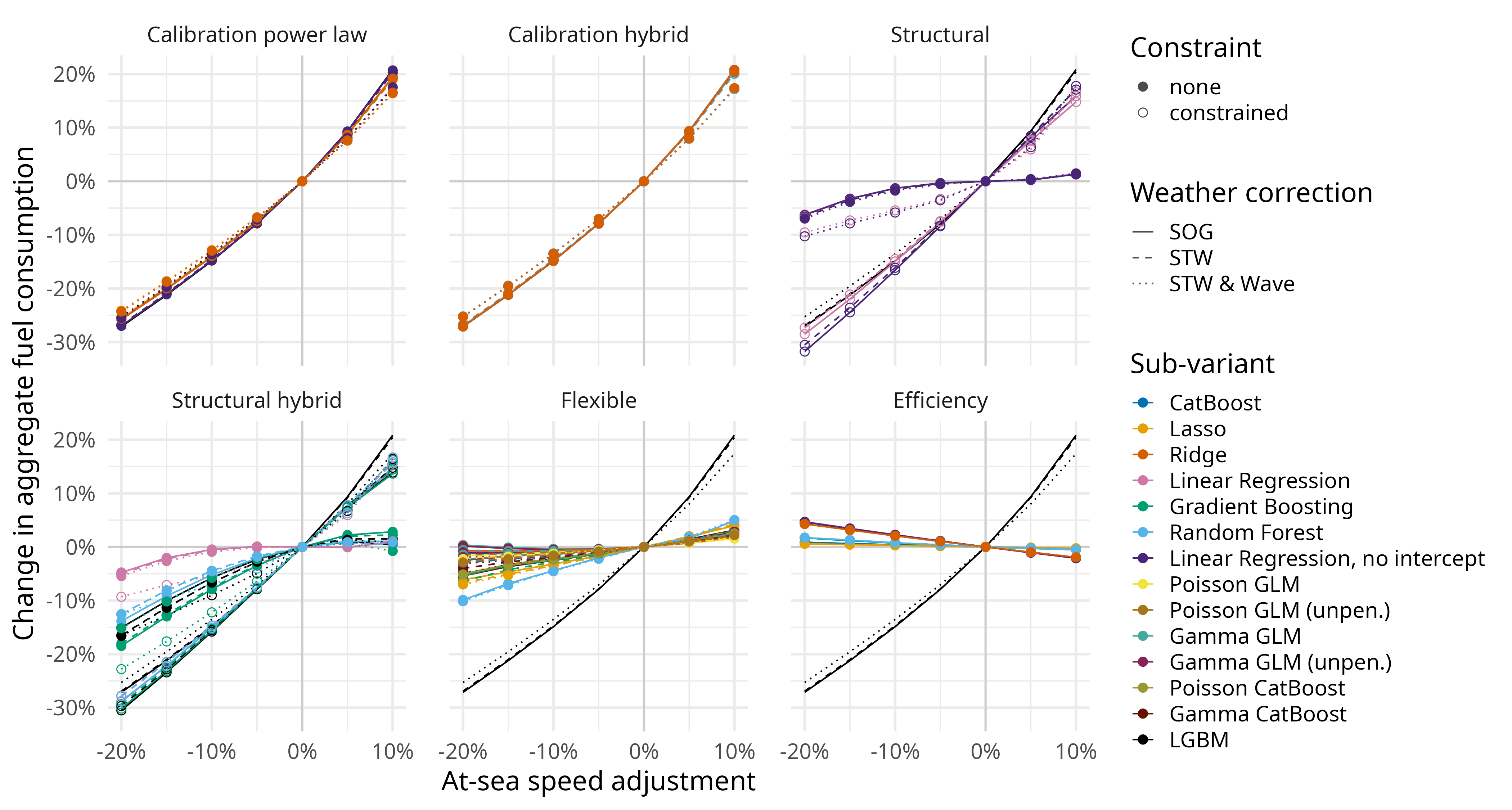}
    \caption{Predicted change in aggregate fuel consumption in the speed-adjusted counterfactual exercise under three treatments of weather, dry bulk ships: speed over ground (solid), speed through water (dashed), and speed through water with wave-corrected power (dotted). The black line of each type is the theoretical benchmark under the same treatment.}
    \label{fig:elasticity_weather}
\end{figure}

\subsection{Alternative Data Inclusion Criterion}\label{app:distance}

Our main results use a data subset for which observed distances match reported distances within 500 \ac{nm}. For robustness, we repeat our analyses for dry bulk ships with an alternative criterion of $\pm 10\%$, which yields a substantially larger data set of 3,703 ship-years, against 2,443 under the main criterion, with greater spread and more outliers. \autoref{fig:samplescores_alt} shows the out-of-sample results under this criterion, \autoref{fig:support_quartiles_alt} the quartile splits, and \autoref{fig:elasticity_alt} the speed counterfactual.

\begin{figure}[H]
    \centering
    \includegraphics[width=0.9\textwidth]{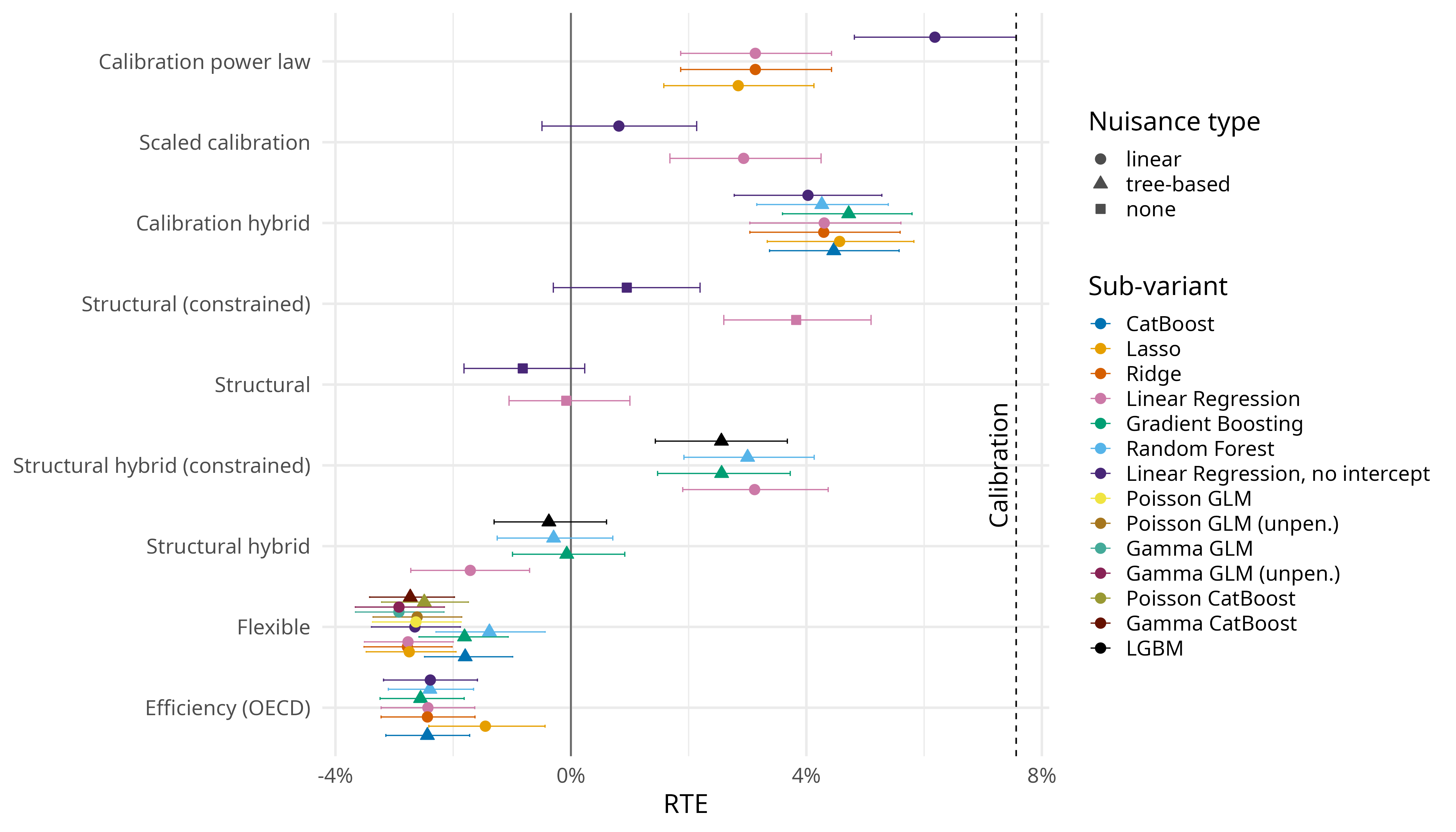}
    \caption{Out-of-sample \ac{RTE} by model specification and sub-variant, dry bulk ships, alternative inclusion criterion. The dashed line is the calibration. Bars are 95\% ship-clustered bootstrap intervals over the test sample, fitted models held fixed.}
    \label{fig:samplescores_alt}
\end{figure}

\begin{figure}[H]
    \centering
    \includegraphics[width=\textwidth]{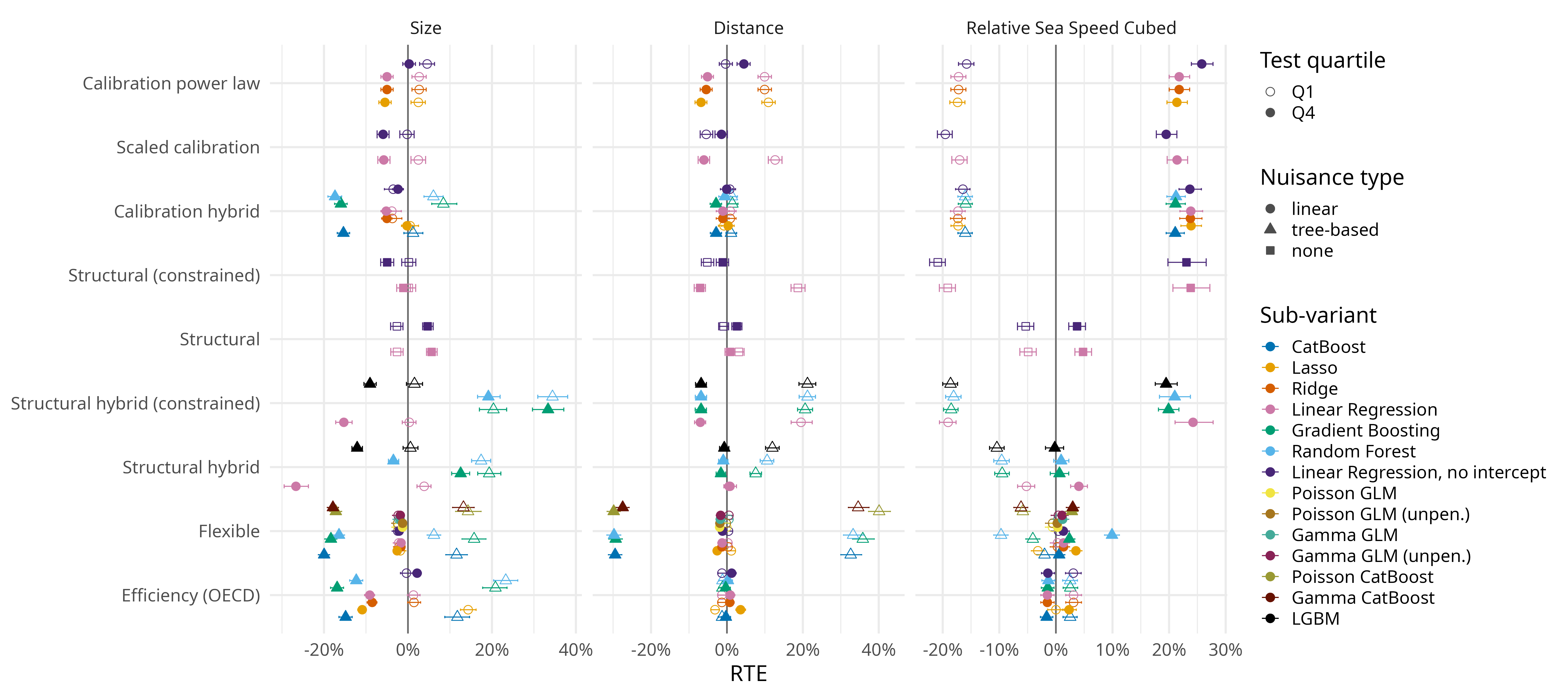}
    \caption{Out-of-support \ac{RTE} for three quartile splits, dry bulk ships, alternative inclusion criterion. Each model is trained on three quartiles and tested on the fourth. Bars are 95\% ship-clustered bootstrap intervals over the test quartile, fitted models held fixed.}
    \label{fig:support_quartiles_alt}
\end{figure}

\begin{figure}[H]
    \centering
    \includegraphics[width=\textwidth]{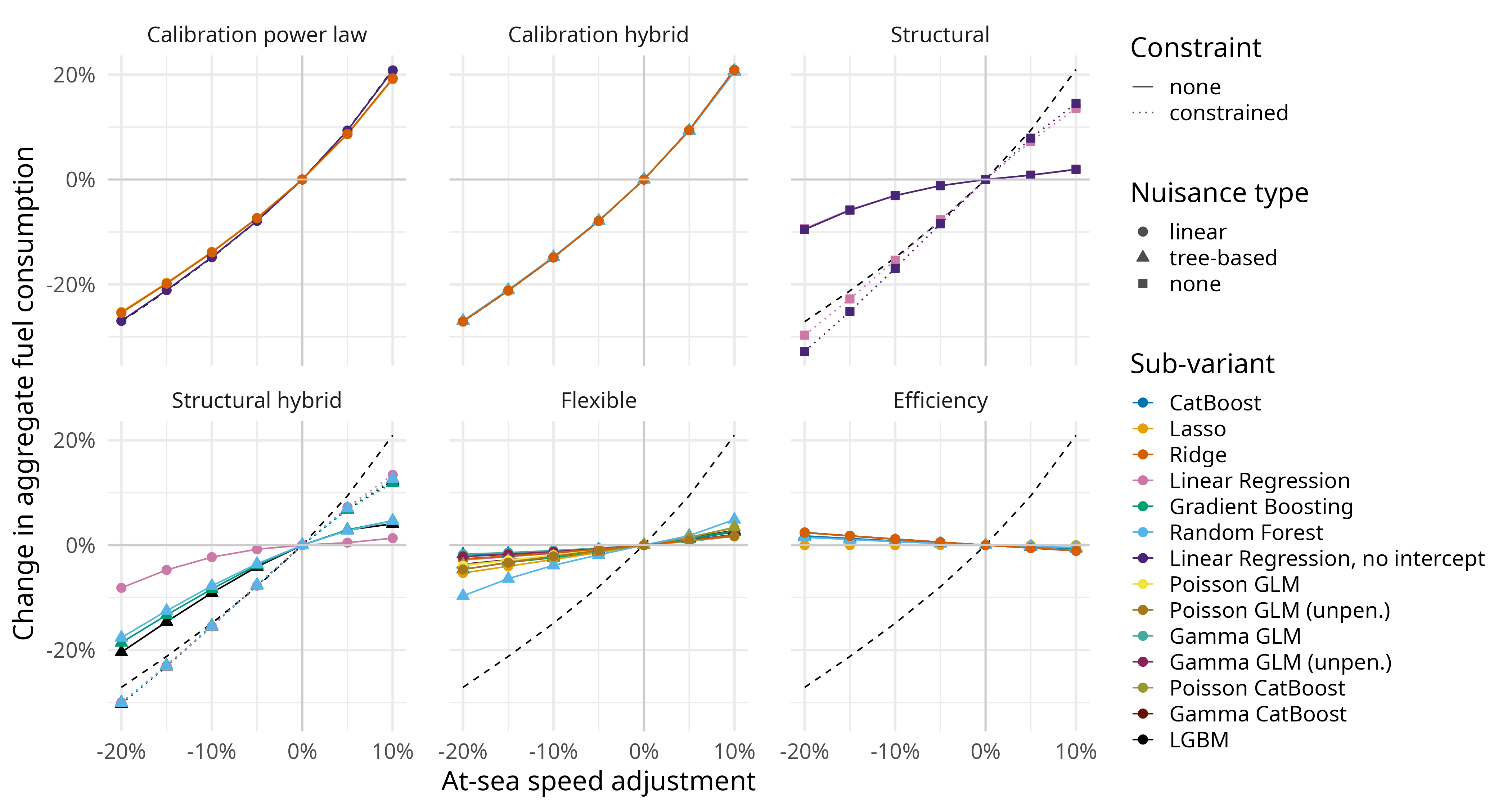}
    \caption{Predicted change in aggregate fuel consumption in the speed-adjusted counterfactual exercise, dry bulk ships, alternative inclusion criterion. The dashed line is the theoretical benchmark.}
    \label{fig:elasticity_alt}
\end{figure}

\subsection{Summary Statistics}\label{app:MLsummarystats}
Observation counts and histograms for container ships, corresponding to \autoref{tab:obscounts} and \autoref{fig:datasupporthistograms} for dry bulk ships, are provided in \autoref{tab:obscounts_container} and \autoref{fig:datasupporthistograms_container}. Summary statistics of the `Flexible' feature set (\autoref{tab:features}) over the training and test samples are in \autoref{tab:MLsummarystats} for dry bulk ships and \autoref{tab:MLsummarystats_container} for container ships.

\begin{table}[H]
    \centering
    \begin{threeparttable}
        \caption{Observation counts for data subsets, container ships.}
        \label{tab:obscounts_container}
        
\begin{tabular}[t]{lrr}
\toprule
Data Subset & \# Ships & \# Observations\\
\midrule
EU Trips & 2221 & 5823\\
Train + Test & 849 & 1036\\
Monthly Prediction & 4766 & 176622\\
Annual Route Prediction & 4747 & 222131\\
\bottomrule
\end{tabular}

    \end{threeparttable}
\end{table}

\begin{figure}[H]
    \centering
    \includegraphics[width=0.9\textwidth]{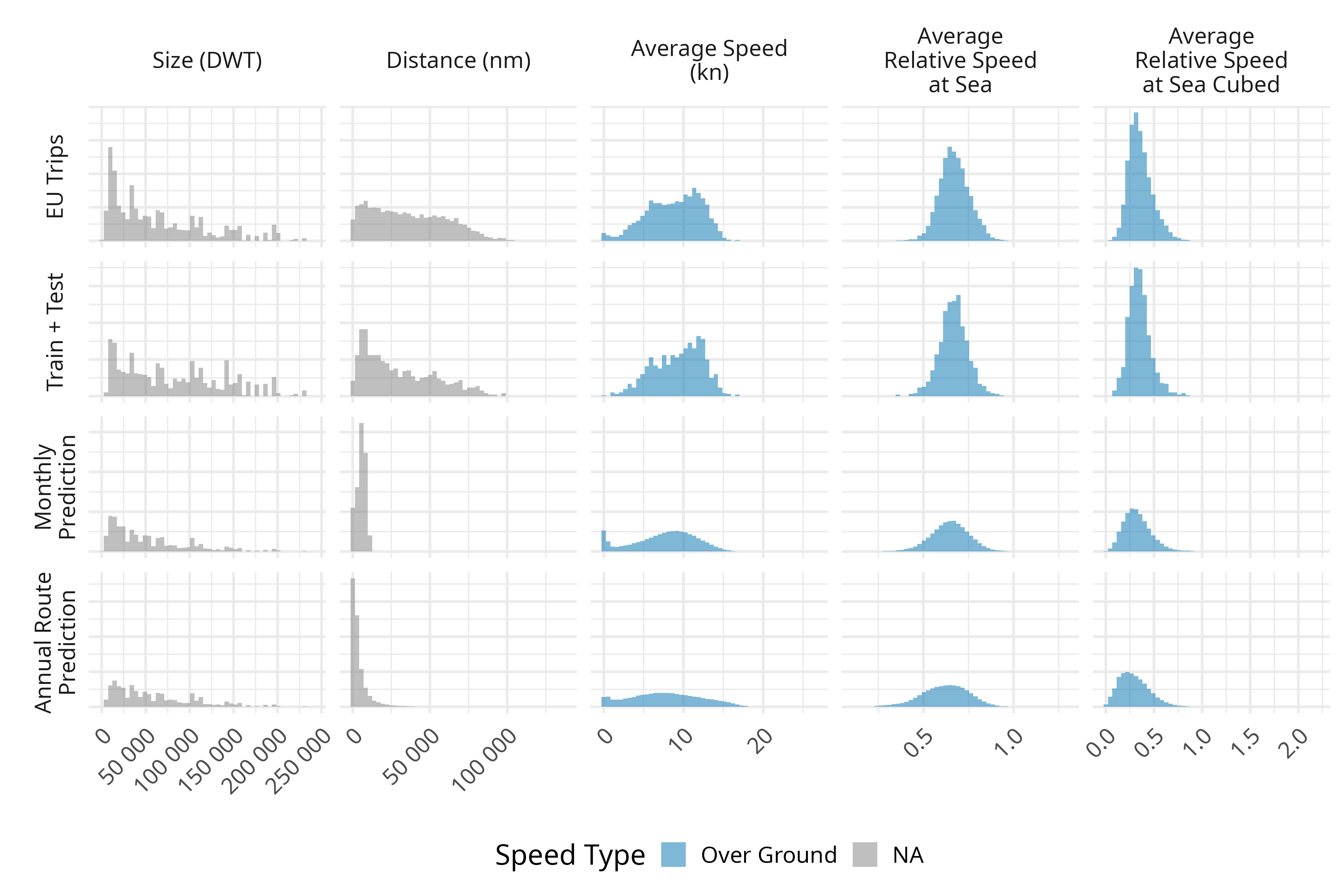}
    \caption{Histograms of variables of interest relevant for prediction applications across data subsets, container ships. Axes vary.}
    \label{fig:datasupporthistograms_container}
\end{figure}

\begin{table}[H]
    \centering
    \begin{adjustbox}{max width=\textwidth}
    \begin{threeparttable}
        \caption{Summary statistics of the `Flexible' feature set, dry bulk ships.}
        \label{tab:MLsummarystats}
        
\begin{tabular}[t]{lrrrrrr}
\toprule
\multicolumn{1}{c}{ } & \multicolumn{3}{c}{Training} & \multicolumn{3}{c}{Test} \\
\cmidrule(l{3pt}r{3pt}){2-4} \cmidrule(l{3pt}r{3pt}){5-7}
Variable & count & mean & std & count & mean & std\\
\midrule
Age (years) & 1592 & 9.5 & 5.5 & 851 & 10.7 & 5.3\\
Beam Moulded (meters) & 1592 & 32.4 & 5.3 & 851 & 32.3 & 5.7\\
Deadweight Tonnage (tonnes) & 1592 & 6.70e+04 & 4.20e+04 & 851 & 6.72e+04 & 4.47e+04\\
Draft (meters) & 1592 & 12.8 & 2.5 & 851 & 12.7 & 2.6\\
Gross Tonnage & 1592 & 3.76e+04 & 2.05e+04 & 851 & 3.77e+04 & 2.20e+04\\
\addlinespace
Length Between Perpendiculars (meters) & 1592 & 199 & 35.7 & 851 & 199 & 37.3\\
Length Overall (meters) & 1592 & 206 & 35.9 & 851 & 205 & 37.5\\
Light Displacement Tonnage (tonnes) & 339 & 1.19e+04 & 5911 & 176 & 1.16e+04 & 5847\\
Main Bunker Capacity (cubic meters) & 1117 & 2499 & 1181 & 579 & 2503 & 1340\\
Main Engine Power (kilowatts) & 1592 & 9600 & 3375 & 851 & 9581 & 3575\\
\addlinespace
Net Tonnage & 1588 & 2.22e+04 & 1.37e+04 & 850 & 2.23e+04 & 1.45e+04\\
Speed (knots) & 1592 & 14.1 & 0.81 & 851 & 14.1 & 0.79\\
Tonnage Per Centimeter (tonnes/cm) & 1372 & 64.4 & 22.8 & 725 & 63.3 & 22.6\\
At-Port Fraction & 1592 & 0.39 & 0.15 & 851 & 0.43 & 0.15\\
Distance Traveled (nautical miles) & 1592 & 1.41e+04 & 9285 & 851 & 1.37e+04 & 8977\\
\addlinespace
Number Of Trips & 1592 & 6.4 & 7.3 & 851 & 7.2 & 10.7\\
Admiralty Draft Term (Mean) & 1592 & 0.87 & 0.06 & 851 & 0.87 & 0.07\\
Admiralty Rel Power & 1592 & 577 & 405 & 851 & 599 & 422\\
Admiralty Rel Power Cubed & 1592 & 198 & 200 & 851 & 235 & 259\\
Admiralty Rel Power Squared & 1592 & 318 & 262 & 851 & 356 & 302\\
\addlinespace
Admiralty Speed Term & 1592 & 660 & 457 & 851 & 686 & 479\\
Calibrated Fuel Consumption (tonnes) & 1592 & 1390 & 970 & 851 & 1424 & 976\\
Main Engine Reference SFC & 1592 & 176 & 3.2 & 851 & 176 & 3.2\\
Relative Draft (Mean) & 1592 & 0.82 & 0.09 & 851 & 0.82 & 0.09\\
Relative Speed & 1592 & 992 & 654 & 851 & 969 & 646\\
\addlinespace
Infeasible Distance (nautical miles) & 1592 & 0.00e+00 & 0.00e+00 & 851 & 0.00e+00 & 0.00e+00\\
Interpolated Fraction At-Sea & 1592 & 0.26 & 0.14 & 851 & 0.15 & 0.12\\
\bottomrule
\end{tabular}

        \begin{tablenotes}[flushleft]\small
            \item Training sample 2019--2020, test sample 2021. All derived values aggregated annually over \ac{EU} trips.
        \end{tablenotes}
    \end{threeparttable}
    \end{adjustbox}
\end{table}

\begin{table}[H]
    \centering
    \begin{adjustbox}{max width=\textwidth}
    \begin{threeparttable}
        \caption{Summary statistics of the `Flexible' feature set, container ships.}
        \label{tab:MLsummarystats_container}
        
\begin{tabular}[t]{lrrrrrr}
\toprule
\multicolumn{1}{c}{ } & \multicolumn{3}{c}{Training} & \multicolumn{3}{c}{Test} \\
\cmidrule(l{3pt}r{3pt}){2-4} \cmidrule(l{3pt}r{3pt}){5-7}
Variable & count & mean & std & count & mean & std\\
\midrule
Age (years) & 663 & 11.8 & 6.4 & 373 & 12.7 & 5.9\\
Beam Moulded (meters) & 663 & 38.2 & 11.0 & 373 & 38.0 & 10.6\\
Deadweight Tonnage (tonnes) & 663 & 8.16e+04 & 5.74e+04 & 373 & 7.95e+04 & 5.41e+04\\
Draft (meters) & 663 & 12.8 & 2.7 & 373 & 12.8 & 2.7\\
Gross Tonnage & 663 & 7.56e+04 & 5.79e+04 & 373 & 7.34e+04 & 5.45e+04\\
\addlinespace
Length Between Perpendiculars (meters) & 660 & 259 & 81.4 & 370 & 256 & 79.6\\
Length Overall (meters) & 663 & 273 & 84.4 & 373 & 271 & 82.8\\
Light Displacement Tonnage (tonnes) & 138 & 2.29e+04 & 1.85e+04 & 70 & 2.02e+04 & 1.53e+04\\
Main Bunker Capacity (cubic meters) & 177 & 7346 & 5163 & 92 & 7431 & 5265\\
Main Engine Power (kilowatts) & 663 & 4.15e+04 & 2.37e+04 & 373 & 4.21e+04 & 2.43e+04\\
\addlinespace
Net Tonnage & 663 & 3.82e+04 & 2.89e+04 & 372 & 3.69e+04 & 2.71e+04\\
Speed (knots) & 663 & 22.2 & 2.7 & 373 & 22.2 & 2.8\\
Tonnage Per Centimeter (tonnes/cm) & 38 & 54.2 & 26.7 & 14 & 55.7 & 21.7\\
At-Port Fraction & 663 & 0.37 & 0.16 & 373 & 0.41 & 0.15\\
Distance Traveled (nautical miles) & 663 & 3.06e+04 & 2.39e+04 & 373 & 2.95e+04 & 2.24e+04\\
\addlinespace
Number Of Trips & 663 & 29.9 & 31.5 & 373 & 31.3 & 34.8\\
Admiralty Draft Term (Mean) & 663 & 0.92 & 0.05 & 373 & 0.93 & 0.05\\
Admiralty Rel Power & 663 & 679 & 615 & 373 & 702 & 623\\
Admiralty Rel Power Cubed & 663 & 170 & 256 & 373 & 201 & 302\\
Admiralty Rel Power Squared & 663 & 309 & 361 & 373 & 343 & 393\\
\addlinespace
Admiralty Speed Term & 663 & 735 & 661 & 373 & 751 & 663\\
Calibrated Fuel Consumption (tonnes) & 663 & 6682 & 6072 & 373 & 6826 & 6328\\
Main Engine Reference SFC & 663 & 177 & 5.1 & 373 & 177 & 5.2\\
Relative Draft (Mean) & 663 & 0.88 & 0.07 & 373 & 0.89 & 0.07\\
Relative Speed & 663 & 1407 & 1113 & 373 & 1362 & 1069\\
\addlinespace
Infeasible Distance (nautical miles) & 663 & 5.4 & 20.2 & 373 & 3.7 & 15.1\\
Interpolated Fraction At-Sea & 663 & 0.24 & 0.14 & 373 & 0.14 & 0.10\\
\bottomrule
\end{tabular}

        \begin{tablenotes}[flushleft]\small
            \item Training sample 2019--2020, test sample 2021. All derived values aggregated annually over \ac{EU} trips.
        \end{tablenotes}
    \end{threeparttable}
    \end{adjustbox}
\end{table}

\subsection{Disaggregation Details}\label{app:disaggregation}

For model training, we prefer to subset data rather than correct infeasible observations, and allow the estimation process to handle errors. For prediction exercises, we split trajectories at observations with infeasibly high implied speeds (above 25 knots for bulk ships and 35 knots for container ships) and treat the speed and distance of these observations as missing in aggregation. The prediction data are built from the same hourly quantities as the training features. For the temporal exercise, each hourly observation is assigned to the calendar month of its timestamp, so a trip spanning two months contributes to both. For route attribution, each trip is assigned an origin and a destination country from the port calls at either end, using the port call detection of Appendix~\ref{app:MLDataCleaning}, and we label the infeasible points as `jump' points (JMP). Trips with the same origin and destination country are not reported, as many of these arise from undetected port calls. Trips with at least one end labeled as `JMP' constitute around 0.02\% of total emissions for both dry bulk and container ships.

\subsection{Machine Learning Features and Training}\label{app:MLfeatures}

\autoref{tab:features} lists the `Flexible' feature set, which is identical for dry bulk and container ships. Missing feature values are imputed by their medians before training; most features are complete, but light displacement tonnage, bunker capacity, and tonnage per centimeter are missing for a substantial share of ships (\autoref{tab:MLsummarystats}). The `Efficiency' feature set reproduces that of \textcite{clarke2023co2} as closely as possible; because we use a different fleet register, some ship characteristics differ slightly or are absent. It comprises ship characteristics (main engine power, beam, deadweight, gross and net tonnage, light displacement, length overall, bunker capacity, moulded depth, and cubic capacity) together with annual distance traveled, mean speed, and mean draft. The nuisance feature sets of the hybrids are subsets of the `Flexible' set. The calibration hybrid's nuisance drops the calibrated fuel consumption, the Admiralty and \ac{SFC} load terms, the speed aggregates, rated speed, and the at-port fraction, retaining 19 features. The structural hybrid's controls further drop reference power, base \ac{SFC}, distance, number of trips, and the mean draft terms, leaving the 13 ship characteristics and data-quality indicators.

\begin{table}[H]
    \centering
    \begin{adjustbox}{max width=\textwidth}
    \begin{threeparttable}
        \caption{Feature set for the `Flexible' class.}
        \label{tab:features}
        
\begin{tabular}[t]{>{\raggedright\arraybackslash}p{16em}>{\raggedright\arraybackslash}p{30em}}
\toprule
Variable & Description\\
\midrule
\addlinespace[0.3em]
\multicolumn{2}{l}{\textbf{Ship Characteristics}}\\
\hspace{1em}Age & Age at observation (years)\\
\hspace{1em}Beam Moulded & Maximum width (meters)\\
\hspace{1em}Deadweight Tonnage & Capacity (tonnes)\\
\hspace{1em}Draft & Maximum draft (meters)\\
\hspace{1em}Gross Tonnage & Nonlinear measure of internal volume\\
\hspace{1em}Length Between Perpendiculars & Length between the forward and aft perpendiculars (meters)\\
\hspace{1em}Length Overall & Total length (meters)\\
\hspace{1em}Light Displacement Tonnage & Weight excluding most contents (tonnes)\\
\hspace{1em}Main Bunker Capacity & Fuel capacity (cubic meters)\\
\hspace{1em}Main Engine Power & Power rating of main engine (kilowatts)\\
\hspace{1em}Net Tonnage & Dimensionless index based on the total volume of cargo spaces\\
\hspace{1em}Speed & Representative speed from \ac{WFR} (knots)\\
\hspace{1em}Tonnage Per Centimeter & Load change required to change draft by one centimeter (tonnes/cm)\\
\addlinespace[0.3em]
\multicolumn{2}{l}{\textbf{Ship Activity}}\\
\hspace{1em}At-Port Fraction & Fraction of hourly observations with operational phase ‘at-port’\\
\hspace{1em}Distance Traveled & Distance calculated from AIS data (nautical miles)\\
\hspace{1em}Number Of Trips & Number of trips detected from AIS data\\
\addlinespace[0.3em]
\multicolumn{2}{l}{\textbf{Calibrated}}\\
\hspace{1em}Admiralty Draft Term (Mean) & Sea-hours mean of draft term of (\ref{eqn:admiralty}), calculated as $\overline{(t/t_{\textit{ref}})^{0.66}}$\\
\hspace{1em}Admiralty Rel Power & Aggregation of relative power term of (\ref{eqn:admiralty}), calculated as $\sum (t/t_{\textit{ref}})^{0.66}*(v/v_{\textit{ref}})^{3}$\\
\hspace{1em}Admiralty Rel Power Cubed & Aggregation of relative power term of (\ref{eqn:admiralty}) cubed, calculated as $\sum ((t/t_{\textit{ref}})^{0.66}*(v/v_{\textit{ref}})^{3})^3$\\
\hspace{1em}Admiralty Rel Power Squared & Aggregation of relative power term of (\ref{eqn:admiralty}) squared, calculated as $\sum ((t/t_{\textit{ref}})^{0.66}*(v/v_{\textit{ref}})^{3})^2$\\
\hspace{1em}Admiralty Speed Term & Aggregation of speed term of (\ref{eqn:admiralty}), calculated as $\sum (v/v_{\textit{ref}})^{3}$\\
\hspace{1em}Calibrated Fuel Consumption & Theoretical fuel consumption, calculated as per (\ref{eqn:annual_fc})\\
\hspace{1em}Main Engine Reference SFC & Base value of fuel consumption rating of main engine as per IMO\\
\hspace{1em}Relative Draft (Mean) & Sea-hours mean of relative draft, calculated as $\overline{(t/t_{\textit{ref}})}$\\
\hspace{1em}Relative Speed & Aggregation of relative speed, calculated as $\sum (v/v_{\textit{ref}})$\\
\addlinespace[0.3em]
\multicolumn{2}{l}{\textbf{Data Quality}}\\
\hspace{1em}Infeasible Distance & Sum of infeasible distances between consecutive locations after interpolation (nautical miles)\\
\hspace{1em}Interpolated Fraction At-Sea & Fraction of hourly observations with operational phase ‘at-sea’ that are interpolated\\
\bottomrule
\end{tabular}

        \begin{tablenotes}[flushleft]\small
            \item All derived values aggregated annually over \ac{EU} trips.
        \end{tablenotes}
    \end{threeparttable}
    \end{adjustbox}
\end{table}
\paragraph*{Retransformation.}
For a model fit to $\log(1+Y)$ with residuals $\hat{u}_i$, the smearing estimate of \textcite{duan1983smearing} multiplies the exponentiated prediction by the mean exponentiated residual, $\hat{\varphi} = n^{-1}\sum_i e^{\hat{u}_i}$, which we compute from cross-fitted training residuals for each model. The correction is valid when the residual distribution does not depend on the features, in particular on speed, which cannot be verified outside the training support. The pseudo-maximum likelihood fits use the Poisson or gamma deviance as the objective, each for both a generalized linear model and a CatBoost learner on the `Flexible' feature set.
\paragraph*{Ship-Level Error.}
The \ac{RTE} of \autoref{sec:MLevaluation} measures the error of the aggregate only. On a single test set, mean squared error decomposes as
\begin{equation*}
    \mathrm{RMSE}^2 = (\mathrm{RTE} \cdot \bar{y})^2 + \Var(e),
\end{equation*}
where $\bar{y}$ is mean reported consumption and $e$ the ship-level prediction error, so models with similar aggregate bias can differ widely in dispersion. \autoref{fig:samplermse} shows the \ac{RMSE} of the models in \autoref{fig:samplescores}. Dispersion falls with flexibility: the flexible and efficiency classes are tightest, the structural class and structural hybrid next, and the calibration-anchored models and the constrained structural variants widest. The bootstrap intervals of the flexible class lie below those of the calibration-anchored models and overlap those of the structural class only at the margin. The structural class therefore combines the smallest aggregate bias with a dispersion well above the flexible models.

\begin{figure}[H]
    \centering
    \includegraphics[width=\textwidth]{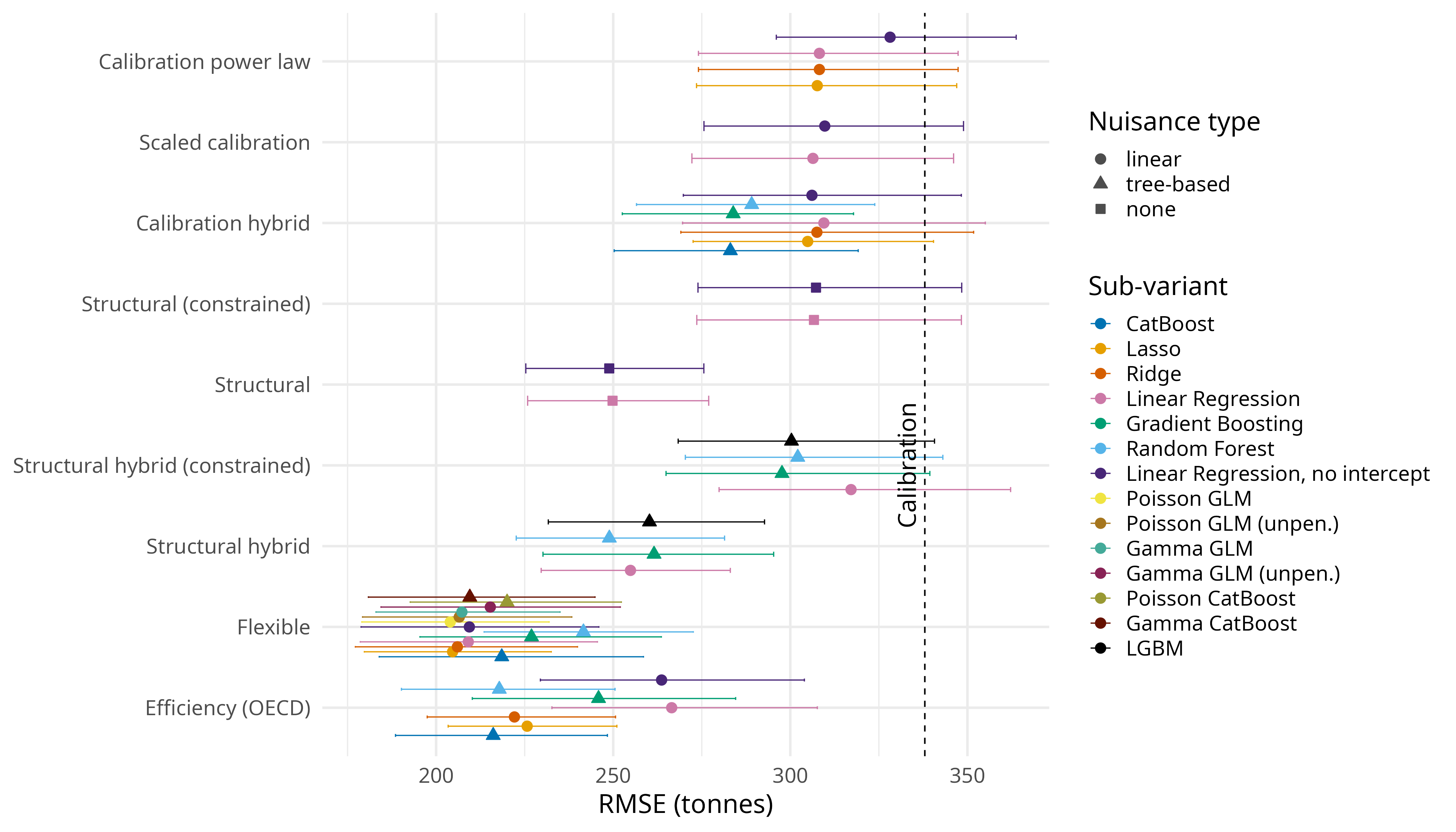}
    \caption{Out-of-sample \ac{RMSE} by model specification and sub-variant, dry bulk ships. The dashed line is the calibration. Bars are 95\% ship-clustered bootstrap intervals over the test sample, fitted models held fixed.}
    \label{fig:samplermse}
\end{figure}

\paragraph*{Tuning and Software.}
Hyperparameters are selected by 5-fold cross-validation over a grid for each learner: the penalty for lasso, ridge, and the penalized Poisson and gamma GLMs; tree depth for random forest; and learning rate and tree depth for the gradient boosting machine and CatBoost. The nuisance learners of the structural hybrid search coarser grids within each cross-fitting fold. The grids and all remaining settings are in the replication package. All models are fit in Python 3.12 with scikit-learn 1.7.1, CatBoost 1.2.8, and LightGBM 4.6.0.

\subsection{Hybrid Model Properties}\label{app:hybrid}

The two hybrids of \autoref{sec:hybridmodels} are
\begin{align*}
    \log \hat{Y}_i &= \log f_c(X_{f,i}) + \hat{g}(X_{g,i}) && \text{(calibration hybrid),}\\
    \hat{Y}_i &= \sum_k \hat{\beta}_k f_k(X_{f,i}) + \hat{g}(X_{g,i}) && \text{(structural hybrid),}
\end{align*}
where $X_g$ contains no variable derived from speed $v$, so that $\partial \hat{g}/\partial v = 0$ identically. All derivatives below hold $X_g$ fixed.

\begin{proposition}[Preservation under feature exclusion]\label{prop:preservation}
For every realization of $\hat{g}$, the calibration hybrid satisfies $\partial \log \hat{Y} / \partial \log v = \partial \log f_c / \partial \log v$, and the structural hybrid satisfies $\partial \hat{Y} / \partial v = \sum_k \hat{\beta}_k\, \partial f_k / \partial v$.
\end{proposition}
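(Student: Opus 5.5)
The plan is to differentiate each hybrid directly, using only two facts: the structural component depends on $v$ through $X_f$, and the nuisance depends only on $X_g$. To make ``derivative with respect to speed'' precise, I parametrize the counterfactual as in \autoref{sec:elasticity}. Let $v$ denote a speed argument, for instance the common scaling of hourly at-sea speeds. The features then become $X_f(v)$ and $X_g$, and $X_g$ does not vary with $v$ because it contains no speed-derived variable. I assume $f_c$ and the $f_k$ are differentiable in $v$ where the derivatives are taken. This holds because they are finite sums over hours of polynomials in $v_{ih}$ times positive draft terms. The fitted $\hat g$ may be any function, including a piecewise-constant tree ensemble, since it is never differentiated in $v$.

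For the calibration hybrid, I write $\log \hat Y(v) = \log f_c(X_f(v)) + \hat g(X_g)$. The second term is a constant in $v$: it is a fixed number once $\hat g$ and $X_g$ are fixed. Differentiating with respect to $\log v$ by the chain rule gives $\partial \log \hat Y/\partial \log v = \partial \log f_c/\partial \log v + 0$. One must note that $f_c>0$ so the logarithm is defined. This holds whenever the ship has positive auxiliary or boiler load or any sea hours, which I would state as a standing condition. If the log-scale fit uses $\log(1+Y)$ and smearing, I would remark that the smearing factor is also a constant in $v$. Those adjustments therefore multiply $\hat Y$ by a $v$-free constant and leave the elasticity unchanged. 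The identity is exact if the hybrid is read as stated, and the remark covers the implementation.

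For the structural hybrid, $\hat Y(v) = \sum_k \hat\beta_k f_k(X_f(v)) + \hat g(X_g)$. The $\hat\beta_k$ are fixed numbers from the orthogonalized fit, and the counterfactual does not refit them. Linearity of differentiation then gives $\partial \hat Y/\partial v = \sum_k \hat\beta_k\,\partial f_k/\partial v$, because the derivative of the $v$-free term vanishes. Since nothing about the form of $\hat g$ was used, both identities hold ``for every realization of $\hat g$''.

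The only real obstacle is the formal one: making sure that holding $X_g$ fixed is legitimate. Every element of $X_g$ listed in \appref{app:MLfeatures} must be invariant to the speed perturbation at fixed distance. This is why the at-port fraction and rated speed are excluded. I would verify this feature by feature against the list, noting that distance, trip counts and mean draft are unchanged by construction in the counterfactual. The proposition is then a two-line chain-rule computation.
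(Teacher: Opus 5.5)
Your argument is the paper's proof: differentiate each hybrid in $\log v$ or $v$ with $X_g$ held fixed, and the $\hat g$ term drops out because it has no speed argument, whatever form the learner takes. One small correction to your implementation aside. The smearing factor is indeed a $v$-free multiplicative constant, but undoing a $\log(1+Y)$ transform subtracts one, which is an additive shift. Under that implementation the elasticity is therefore preserved only approximately (up to a factor near one when predicted consumption is large), so the identity is exact for the hybrid as stated in the proposition but not for the implementation.
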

\begin{proof}
Differentiate each expression with respect to $\log v$ or $v$; the $\hat{g}$ term vanishes because it has no speed argument.
\end{proof}

\begin{corollary}[Bounded error under misspecification]\label{cor:preservation}
Let $f_0$ denote the true structural function. If $\sup_v \bigl|\partial \log f_c/\partial \log v - \partial \log f_0/\partial \log v\bigr| \le \delta$, then the elasticity of the calibration hybrid is within $\delta$ of the true elasticity for all $v$ and all $\hat{g}$. The analogous statement holds for the level derivative of the structural hybrid.
\end{corollary}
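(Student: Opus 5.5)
The corollary follows from Proposition~\ref{prop:preservation} combined with the assumed uniform bound, so the plan is a two-line substitution. First, fix an arbitrary realization of $\hat{g}$ and hold $X_g$ fixed, as throughout this appendix. Proposition~\ref{prop:preservation} gives, for every $v$, the identity $\partial \log \hat{Y}/\partial \log v = \partial \log f_c/\partial \log v$. The true elasticity is defined here as $\partial \log f_0/\partial \log v$. The object to bound is therefore
\begin{equation*}
\left|\frac{\partial \log \hat{Y}}{\partial \log v} - \frac{\partial \log f_0}{\partial \log v}\right| = \left|\frac{\partial \log f_c}{\partial \log v} - \frac{\partial \log f_0}{\partial \log v}\right|.
\end{equation*}
The right-hand side does not involve $\hat{g}$. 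At any $v$ it is at most its supremum over $v$, which is at most $\delta$ by hypothesis. Since the realization of $\hat{g}$ was arbitrary, the bound holds uniformly in $v$ and in $\hat{g}$.

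For the structural hybrid I would repeat the argument with levels in place of logs. Let $f_s(X_f) = \sum_k \hat{\beta}_k f_k(X_f)$ denote the estimated structural component. Assume, as the analogous hypothesis, that $\sup_v |\partial f_s/\partial v - \partial f_0/\partial v| \le \delta$. Then Proposition~\ref{prop:preservation} gives $\partial \hat{Y}/\partial v = \partial f_s/\partial v$ for every $\hat{g}$, and the same substitution and supremum step apply. Because $\hat{\beta}$ is itself estimated, the bound should be read conditional on the realized $\hat{\beta}$: the error is controlled by the structural error of the fitted polynomial alone, and $\hat{g}$ plays no role.

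The only real obstacle is stating precisely what the analogous hypothesis is and what "true elasticity" or "true derivative" means. Two points need a short remark. The first is that the true derivative must be taken with the nuisance inputs held fixed, matching the convention of the proposition. Under that convention, any true dependence of the nuisance part on speed is absorbed into $f_0$ and hence into $\delta$. The second is that the structural component must be differentiable in $v$ wherever the supremum is taken. Once both are made explicit, no further argument is needed.
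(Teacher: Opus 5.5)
Your proof is correct and is the argument the paper intends: the paper gives no separate proof because the corollary follows at once from Proposition~\ref{prop:preservation}, which removes $\hat{g}$ from the elasticity (or level derivative), after which the uniform bound applies pointwise in $v$. Your remark that the structural-hybrid bound holds conditional on the realized $\hat{\beta}$ is a useful precision the paper leaves implicit, since under orthogonalized estimation $\hat{\beta}$ itself depends on the nuisance learners.
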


The preserved object is set by the composition. For the calibration hybrid, $\partial \hat{Y}/\partial v = e^{\hat{g}}\, \partial f_c/\partial v$, so the level derivative is rescaled by the nuisance; for the structural hybrid, the elasticity $v\, \partial \hat{Y}/\partial v \big/ \hat{Y}$ has $\hat{g}$ in its denominator and is shifted by it.

\paragraph*{Structural Hybrid Estimation.}
The structural hybrid is the partially linear model $Y_i = \sum_k \beta_k f_{k,i} + g(X_{g,i}) + u_i$ with the three aggregate load terms as regressors of interest, and we estimate $\beta$ by double/debiased machine learning \parencite{chernozhukov2018double}. The training sample is split into five folds; on the complement of each, nuisance learners for the target and for each load term are fit as functions of $X_g$, with hyperparameters selected within the complement, and predictions are formed for the held-out fold. The coefficients are the regression of the out-of-fold residual of the target on those of the load terms, pooled across folds, with standard errors clustered by ship. The procedure is repeated over five random splits and combined as in \textcite[Sec.~3.4]{chernozhukov2018double}. For prediction, the nuisance is refit on the full training sample to $Y_i - \sum_k \hat{\beta}_k f_{k,i}$. We match the nuisance learner to those used elsewhere as far as possible: gradient-boosted trees, random forest, and linear regression are available in the orthogonalized estimator, whereas CatBoost and the penalized linear learners are not, so where a flexible model uses one of the latter we present the structural hybrid with the most similar available learner.

Feature exclusion is what makes Proposition~\ref{prop:preservation} hold by construction; the transfer of the nuisance term to a counterfactual requires more.

\begin{assumption}[Nuisance invariance]\label{ass:invariance}
The conditional distribution of $(X_g, u)$ given the remaining features is unchanged by the counterfactual change in speed, where $u$ is the error on the scale where the composition is additive.
\end{assumption}

\begin{remark}[Transfer]
On the additive scale, the prediction error of a hybrid is $g_0(X_g) - \hat{g}(X_g) + u$ for a true nuisance $g_0$, which does not depend on $v$. Under Assumption~\ref{ass:invariance}, the distribution of this error is the same under the counterfactual as in the training data, which is the sense in which the hybrids are expected to transfer out of the speed support.
\end{remark}

\subsection{Container Ship Evaluation Results}\label{app:container_evaluation}

This section reports the container ship counterparts of the dry bulk evaluation results in \autoref{sec:MLevaluation}. The calibration overpredicts reported consumption by more than for dry bulk ships (\autoref{fig:twoway_cal_container}; \autoref{tab:calstats_container}), with an \ac{RTE} of 13.9\% on the test set. \autoref{fig:samplescores_container} shows the out-of-sample results.

\begin{figure}[H]
    \centering
    \includegraphics[width=0.55\textwidth]{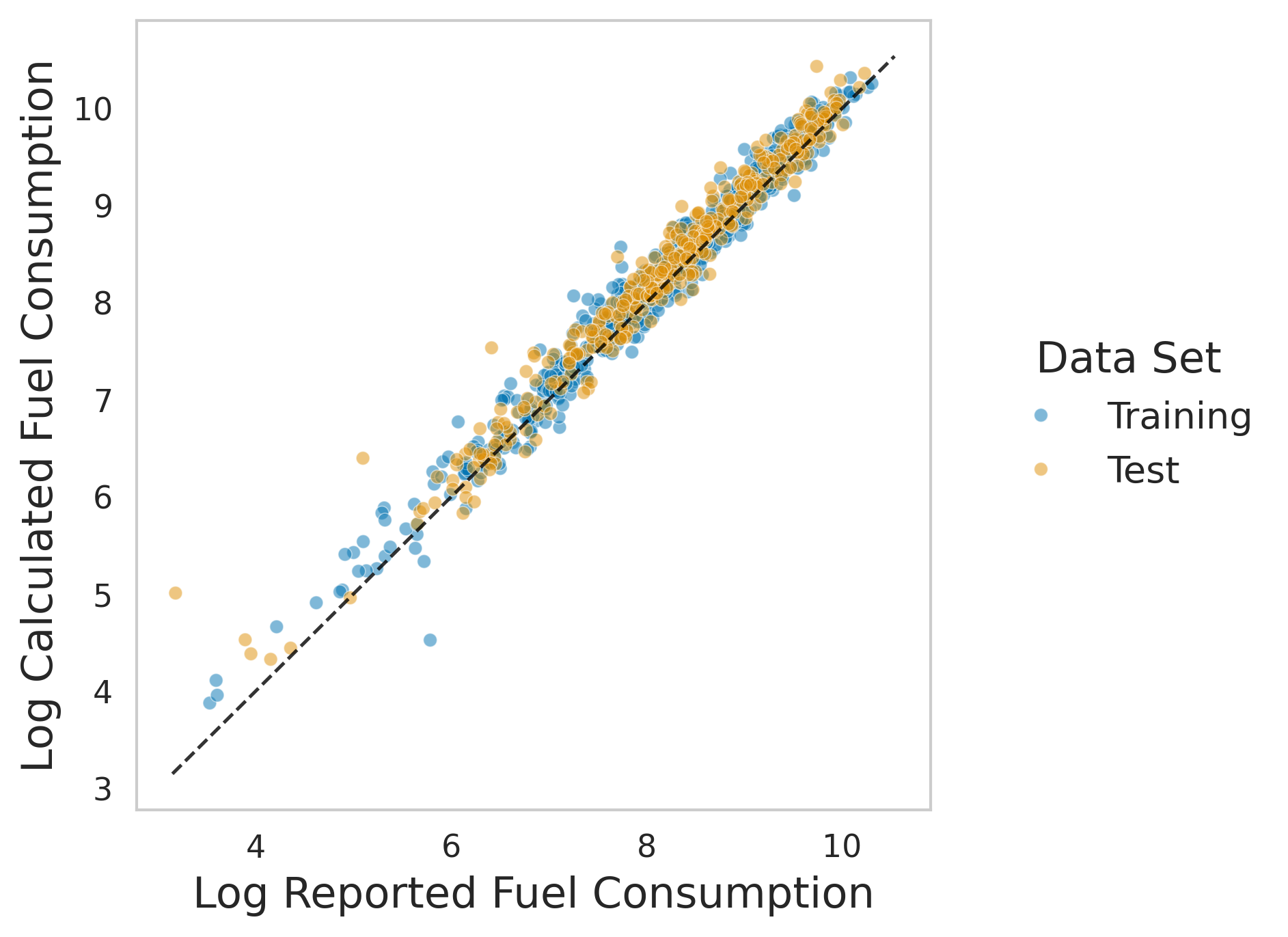}
    \caption{Calibrated versus reported fuel consumption, container ships.}
    \label{fig:twoway_cal_container}
\end{figure}

\begin{table}[H]
    \centering
    \begin{threeparttable}
        \caption{Reported and calibrated fuel consumption by year, container ships.}
        \label{tab:calstats_container}
        
\begin{tabular}[t]{>{\centering\arraybackslash}p{4em}>{\raggedleft\arraybackslash}p{4em}>{\raggedleft\arraybackslash}p{4em}>{\raggedleft\arraybackslash}p{4em}cr}
\toprule
\multicolumn{2}{c}{ } & \multicolumn{2}{c}{Reported} & \multicolumn{2}{c}{Calibrated} \\
\cmidrule(l{3pt}r{3pt}){3-4} \cmidrule(l{3pt}r{3pt}){5-6}
\multicolumn{1}{>{\centering\arraybackslash}p{4em}}{Year} & \multicolumn{1}{>{\centering\arraybackslash}p{4em}}{count} & \multicolumn{1}{>{\centering\arraybackslash}p{4em}}{mean (t)} & \multicolumn{1}{>{\centering\arraybackslash}p{4em}}{sd (t)} & \multicolumn{1}{c}{mean (t)} & \multicolumn{1}{c}{sd (t)}\\
\midrule
2019 & 263 & 6227 & 5357 & 6830 & 5857\\
2020 & 400 & 5872 & 5594 & 6585 & 6215\\
2021 & 373 & 5993 & 5532 & 6826 & 6328\\
\bottomrule
\end{tabular}

    \end{threeparttable}
\end{table}

\begin{figure}[H]
    \centering
    \includegraphics[width=0.9\textwidth]{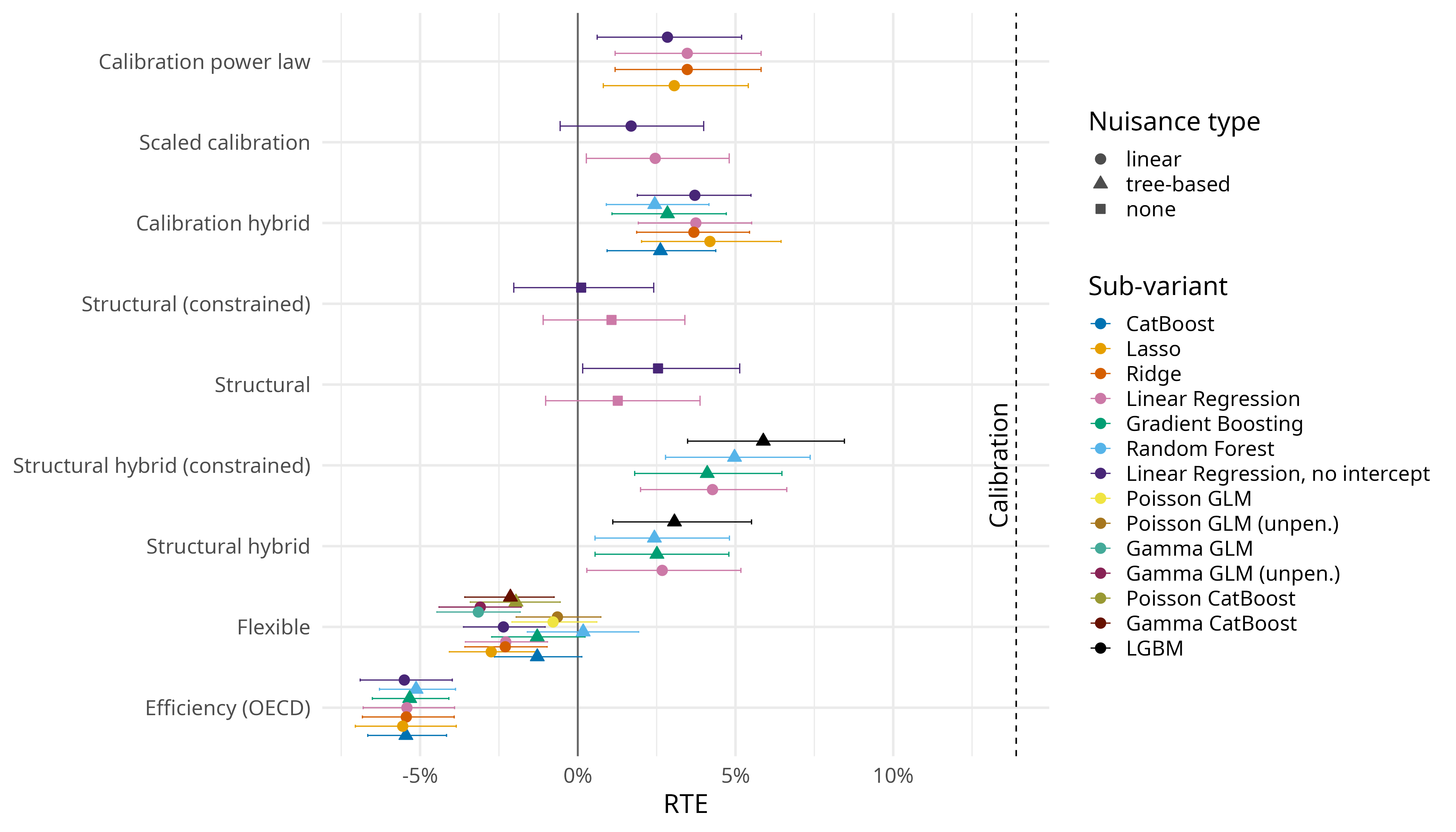}
    \caption{Out-of-sample \ac{RTE} by model specification and sub-variant, container ships. The dashed line is the calibration. Bars are 95\% ship-clustered bootstrap intervals over the test sample, fitted models held fixed.}
    \label{fig:samplescores_container}
\end{figure}

\autoref{fig:support_quartiles_container} shows the quartile splits. The size and distance splits match the dry bulk pattern, with the tree-based learners failing on both outer quartiles. The speed split does not: the calibration-anchored models stay within about 10\% of reported consumption on both outer quartiles, while the unrestricted structural fit underpredicts the fastest quartile and the `Efficiency' class misses both by about 10\%.

\begin{figure}[H]
    \centering
    \includegraphics[width=\textwidth]{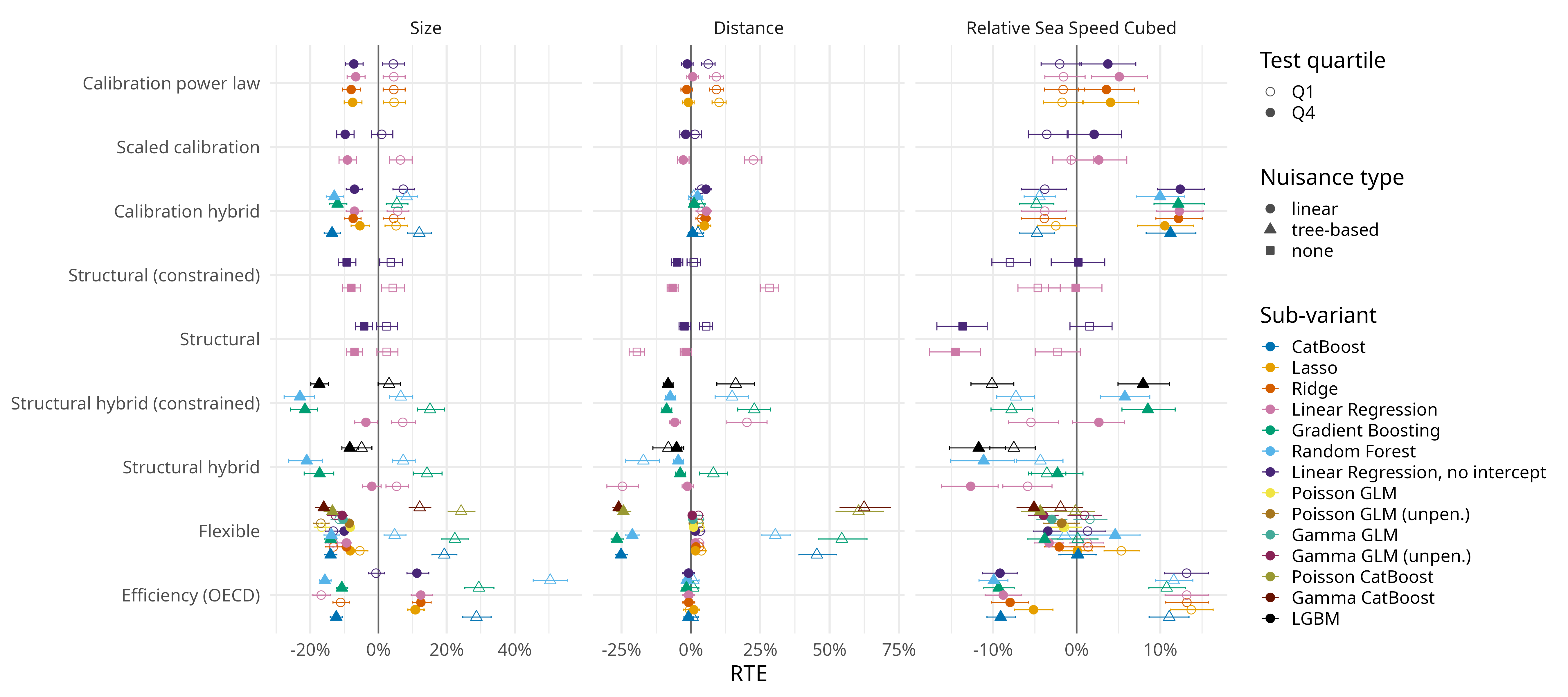}
    \caption{Out-of-support \ac{RTE} for three quartile splits, container ships. Each model is trained on three quartiles and tested on the fourth. Bars are 95\% ship-clustered bootstrap intervals over the test quartile, fitted models held fixed.}
    \label{fig:support_quartiles_container}
\end{figure}

\autoref{fig:elasticity_container} shows the speed counterfactual and \autoref{fig:bias_elasticity_container} the comparison of the two performance axes. \autoref{fig:support_quartiles_mahal_container} shows the Mahalanobis distances of the quartile splits. \autoref{tab:mahal-counterfactual-container} reports the support statistics for the speed counterfactual.

\begin{figure}[H]
    \centering
    \includegraphics[width=\textwidth]{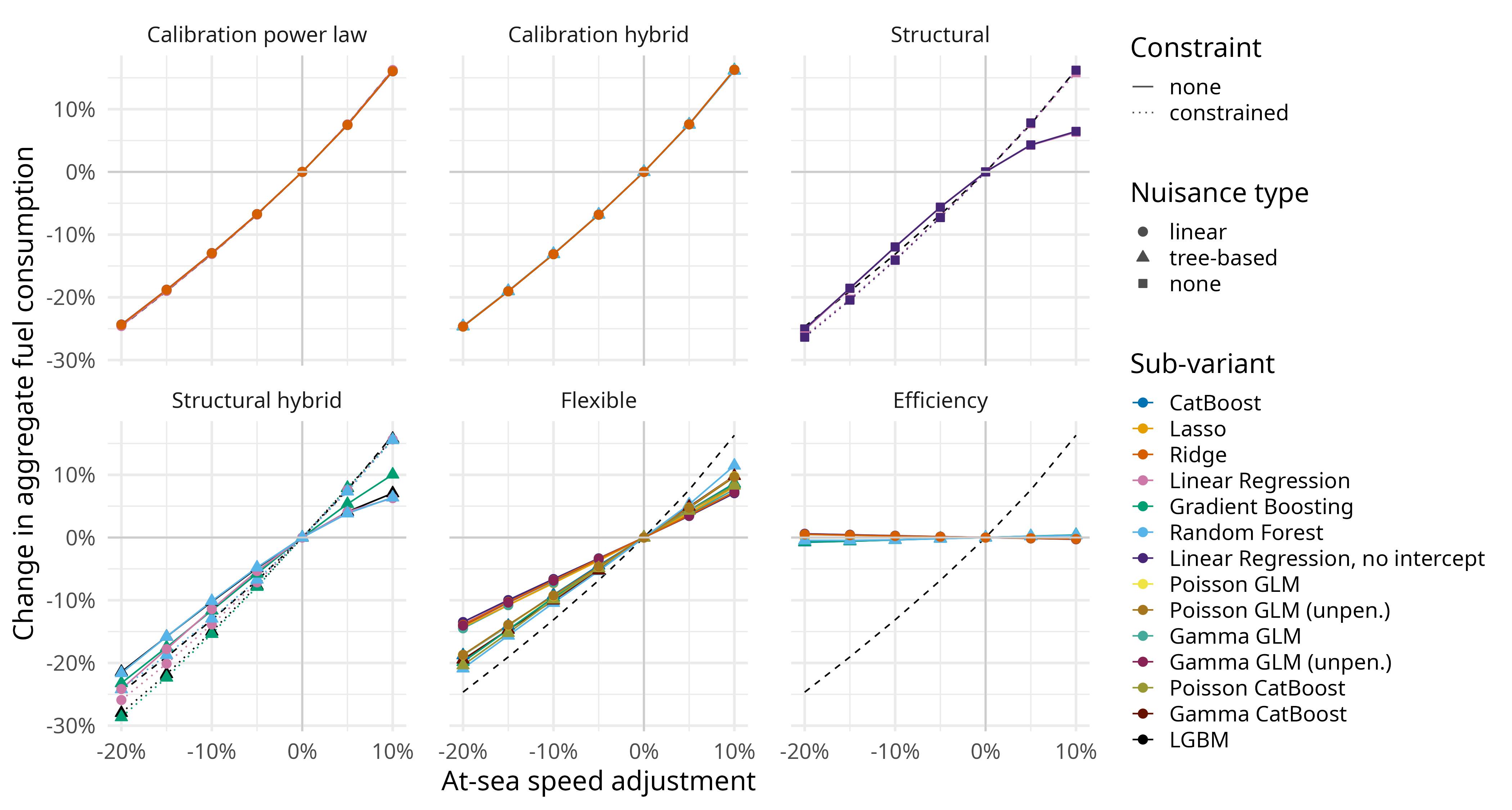}
    \caption{Predicted change in aggregate fuel consumption in the speed-adjusted counterfactual exercise, container ships. The dashed line is the theoretical benchmark.}
    \label{fig:elasticity_container}
\end{figure}

\begin{figure}[H]
    \centering
    \includegraphics[width=0.8\textwidth]{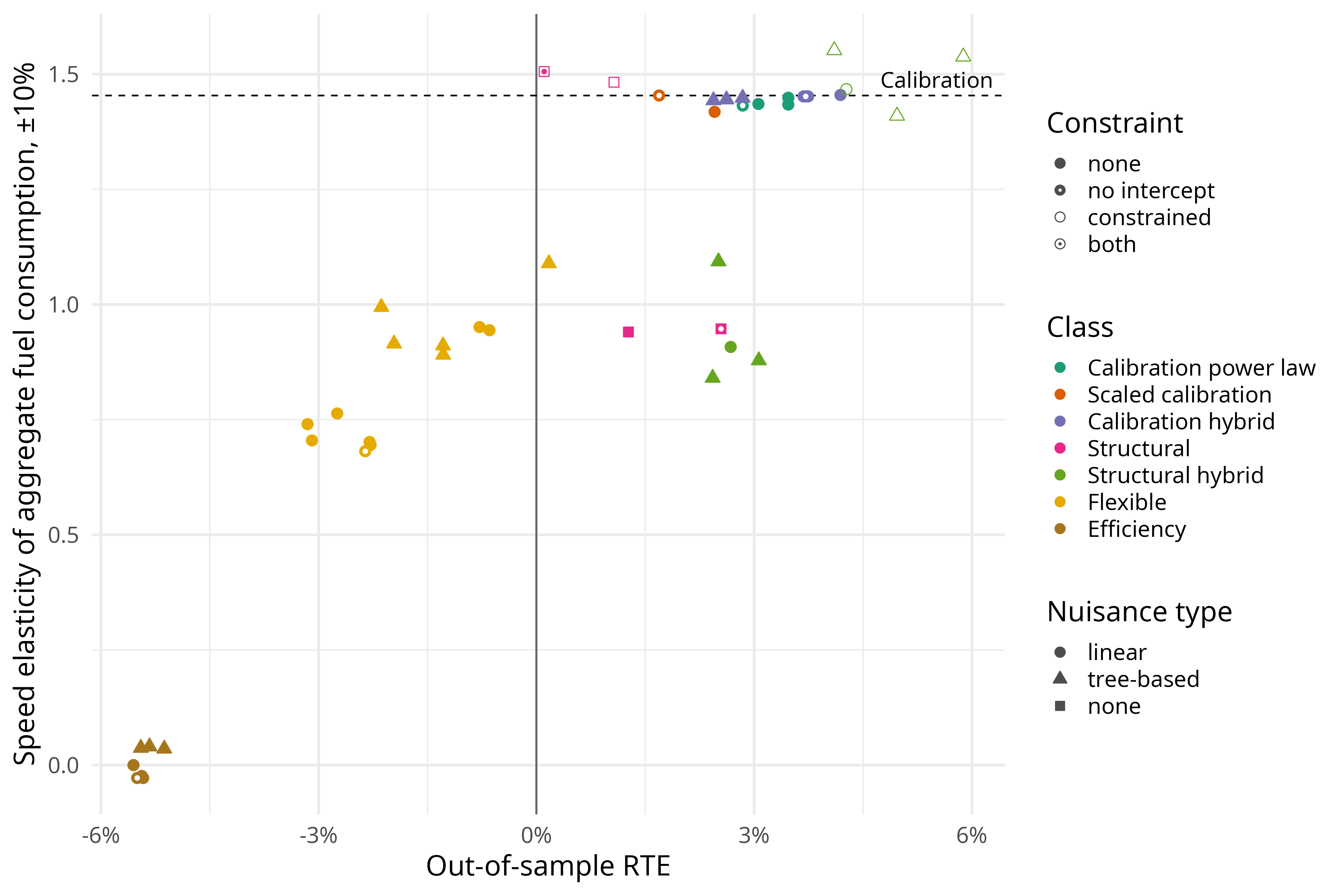}
    \caption{Out-of-sample \ac{RTE} against the speed elasticity of aggregate fuel consumption, container ships. The elasticity is the arc elasticity over the $\pm 10\%$ counterfactual; the dashed line denotes the calibration.}
    \label{fig:bias_elasticity_container}
\end{figure}

\begin{figure}[H]
    \centering
    \includegraphics[width=\textwidth]{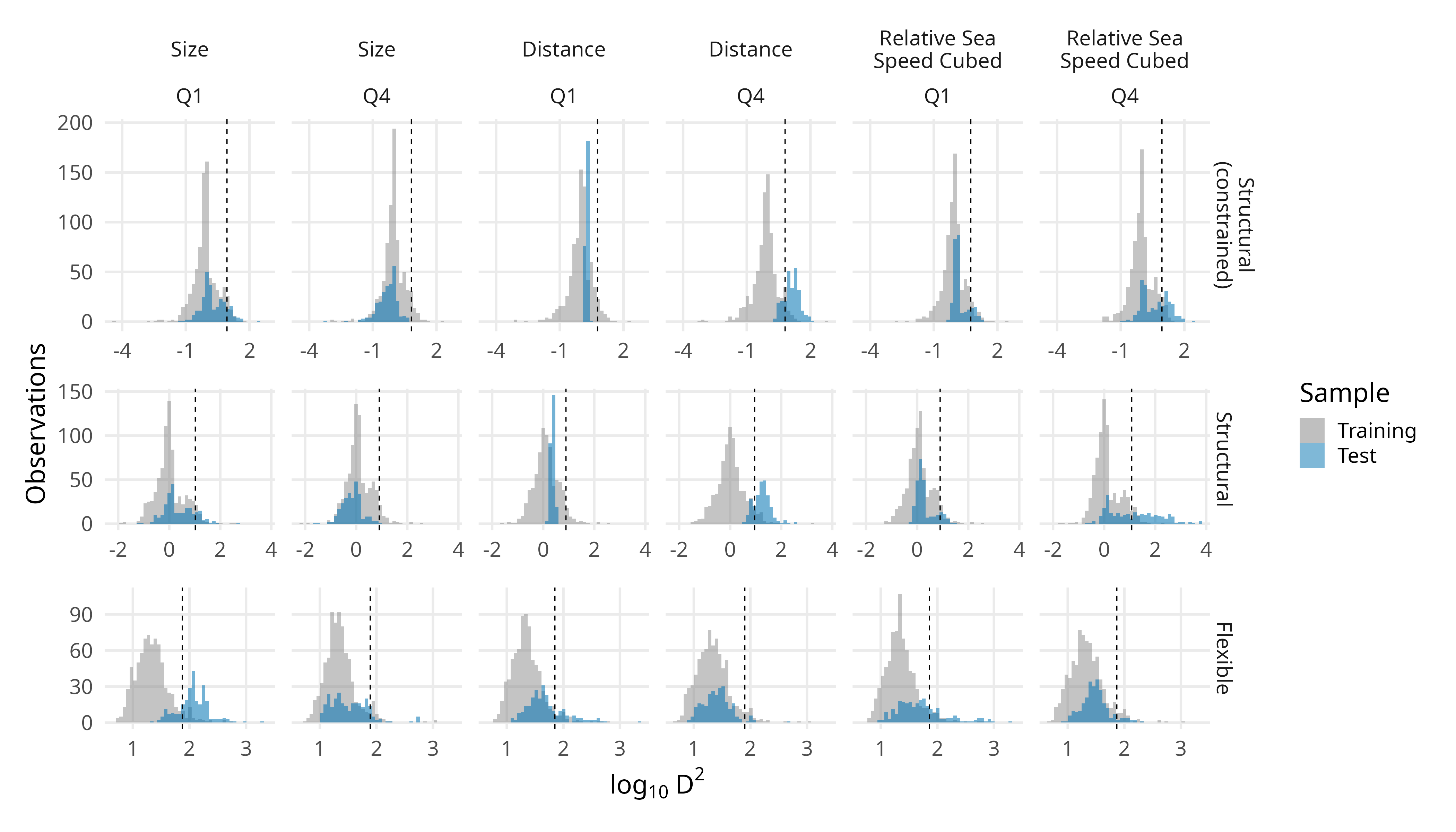}
    \caption{Mahalanobis distance from the training sample for the three quartile splits and three feature sets, container ships. Gray: training observations, each measured leave-one-out against the rest of the training sample; blue: the held-out quartile. The dashed line is the 95th percentile of the training distances.}
    \label{fig:support_quartiles_mahal_container}
\end{figure}

\begin{table}[H]
    \centering
    \begin{threeparttable}
        \caption{Support of the speed-adjusted counterfactual by feature set, container ships.}
        \label{tab:mahal-counterfactual-container}
        
\begin{tabular}[t]{lrrrrrr}
\toprule
\multicolumn{1}{c}{ } & \multicolumn{6}{c}{Speed adjustment} \\
\cmidrule(l{3pt}r{3pt}){2-7}
 & -20\% & -15\% & -10\% & -5\% & +5\% & +10\%\\
\midrule
\addlinespace[0.3em]
\multicolumn{7}{l}{\textbf{Share beyond the observed 95th percentile}}\\
\hspace{1em}Structural (constrained) & 0.20 & 0.13 & 0.08 & 0.05 & 0.08 & 0.11\\
\hspace{1em}Structural & 0.16 & 0.12 & 0.08 & 0.05 & 0.08 & 0.14\\
\hspace{1em}Flexible & 0.08 & 0.07 & 0.05 & 0.05 & 0.04 & 0.03\\
\addlinespace[0.3em]
\multicolumn{7}{l}{\textbf{Mean log excess (training SDs)}}\\
\hspace{1em}Structural (constrained) & 0.086 & 0.051 & 0.028 & 0.020 & 0.041 & 0.107\\
\hspace{1em}Structural & 0.066 & 0.040 & 0.023 & 0.018 & 0.073 & 0.173\\
\hspace{1em}Flexible & 0.065 & 0.046 & 0.034 & 0.028 & 0.022 & 0.022\\
\bottomrule
\end{tabular}

        \begin{tablenotes}
            \small
            \item Statistics as in \autoref{tab:mahal-counterfactual-full}.
        \end{tablenotes}
    \end{threeparttable}
\end{table}

\subsection{Support Metric Details}\label{app:support}

Writing the training covariance as $\hat{\Sigma} = V \Lambda V^\top$, where the columns of $V$ are the principal component directions and $\Lambda$ is the diagonal matrix of the training variances along them, the inverse is $\hat{\Sigma}^{-1} = V \Lambda^{-1} V^\top$. With $z_i = V^\top (x_i - \hat{\mu})$ the coordinates of the centered observation in the principal component basis, the squared distance is
\begin{equation*}
    D_i^2 = z_i^\top \Lambda^{-1} z_i = \sum_k \frac{z_{ik}^2}{\lambda_k},
\end{equation*}
the squared Euclidean distance after rotating into the principal components, which removes the correlation between features, and dividing each coordinate by its standard deviation. A displacement along a direction in which the training data barely vary is divided by a small $\lambda_k$ and counts for a lot; one along a high-variance direction counts for little. A direction with $\lambda_k = 0$ would make any displacement along it infinite, and collinear size measures make the training covariance numerically singular, so the inverse is taken on the non-null directions only.

Distances are computed on the model's feature set after the same transformation and median imputation used in fitting, with the medians taken from the training sample. Training observations are scored leave-one-out, so that each is measured against a centroid and covariance that exclude it; test observations are measured against the full training sample. The out-of-support threshold is the 95th percentile of the leave-one-out training distances. The standardized mean log excess is computed against the same 95th percentile and scaled by the standard deviation of $\log_{10} D^2$ over the leave-one-out training distances.

\subsection{Measurement Error Simulation}\label{app:errorsim}

The simulation of \autoref{sec:structuralexplanation} perturbs two calibrated quantities for each dry bulk ship when generating the outcome and leaves the features unaltered. Writing $C_i$ for the drag constant that scales the load $w_{ih}$ and $E_i = W_{\text{ME},\textit{ref},i}\,\mathrm{SFC}_{\text{base},i}$ for the engine term, the outcome is generated with $C_i(1+u_{C,i})$ and $E_i(1+u_{E,i})$ in their place, where $u_{C,i} \sim N(0,\sigma_C^2)$ and $u_{E,i} \sim N(0,\sigma_E^2)$ are drawn once per ship. The simulated main engine consumption is
\begin{equation*}
    \mathrm{FC}_{\text{ME},i} = E_i (1+u_{E,i}) \sum_{k=1}^{3} \beta_k (1+u_{C,i})^k \sum_h w_{ih}^k,
\end{equation*}
with the calibrated $\beta_k$, so the drag error enters the $k$-th load term at the $k$-th power while the engine error scales all three alike. The features carry the calibrated values and the outcome the perturbed ones, so the error is of the Berkson kind.

To choose appropriate error variances $\sigma_C$ and $\sigma_E$ we draw on a simple model of ship operation paired with observed speed variation. First, a linear approximation of a ship's speed around the no-error point gives:
\begin{equation*}
    \log v_i = \log v_i^{0} - \gamma_C u_{C,i} - \gamma_E u_{E,i},
\end{equation*}
where $v_i^{0}$ is the speed the ship would sail with zero error and $-\gamma_C$ and $-\gamma_E$ are the elasticities of speed with respect to the drag constant and the engine term. Since $v_i^{0}$ is independent of the errors, and the errors of each other, the approximation implies (a) $\operatorname{cov}(u_{C,i}, \log v_i) = -\gamma_C \sigma_C^2$ (and symmetrically for $E$), the covariances the speed-correlated scenarios impose on the draws, and (b) $\operatorname{var}(\log v_i) = \operatorname{var}(\log v_i^{0}) + \gamma_C^2\sigma_C^2 + \gamma_E^2\sigma_E^2$, which bounds the error variances by the fleet's dispersion in speed. Both require the two elasticities.

We obtain them from a simple model of operator behavior. An operator who knows $C_i$ and $E_i$ chooses the speed for a voyage of given distance that minimizes fuel cost, at price $p_f$, plus the cost of time at sea, at $r$ per hour. With the load $w = a_i v^3$, where $a_i$ is proportional to $C_i$, the first-order condition equates the marginal fuel cost of sailing faster to the value of the time saved, and can be written
\begin{equation*}
    p_f E_i\, g(w^*) = r, \qquad g(w) = \sum_{k=1}^{3} (3k-1)\,\beta_k w^k.
\end{equation*}
The drag constant does not appear, so the optimal load $w^*$ is the same for every $C_i$, and since $v^* = (w^*/a_i)^{1/3}$, $\gamma_C = 1/3$. Differentiating implicitly in $E_i$, $v^*$ has elasticity $-1/(3\eta_g)$, where $\eta_g \equiv w^* g'(w^*)/g(w^*)$ denotes the load elasticity of $g$ at the optimum. Evaluated at the median load of 0.36, the calibrated coefficients give $\eta_g = 0.809$ and $\gamma_E = 0.412$.

The speed variance due to the errors should not exceed the observed variance: $\gamma_C^2\sigma_C^2 + \gamma_E^2\sigma_E^2 \le \sigma_v^2$, where $\sigma_v = 0.081$ is the between-ship standard deviation of mean log at-sea speed. With $\sigma_C = \sigma_E = \sigma$ this gives $\sigma \le 0.153$. We take $\sigma = 0.15$, near the limit.

Empirical speed elasticities are lower than the model predicts: $-\gamma_E$ is also the elasticity of speed with respect to fuel price, since $E_i$ and $p_f$ enter the first-order condition only as a product, and estimates of the latter are around $-0.10$ \parencite{ollila2024effect}. We therefore also run a damped response, scaling both elasticities by $\phi = 0.10/\gamma_E = 0.243$, alongside the theoretical $\phi = 1$.

The errors are drawn from the normal distribution conditional on each ship's mean log at-sea speed with the covariances in (a), scaled by $\phi$; at $\phi = 0$ the draws are independent of speed. Five scenarios are run, with 50 replications each: both errors and the drag error alone independent of speed, both errors at $\phi = 0.243$ and at $\phi = 1$, and the drag error alone at $\phi = 1$. On every replication we refit the calibration power law, the unconstrained and constrained structural fits, and, for the `Flexible' class, the two log-target fits with the largest and smallest $\pm 10\%$ arc elasticity on the reported data, ridge regression and random forest.

\end{document}